\documentclass[sigconf,nonacm]{acmart}

\usepackage{mathtools}
\usepackage{booktabs}
\usepackage{tabularx}
\usepackage{multirow}
\usepackage{graphicx}
\usepackage{subcaption}
\usepackage{microtype}
\usepackage[ruled,vlined,linesnumbered]{algorithm2e}

\newtheorem{theorem}{Theorem}
\newtheorem{corollary}[theorem]{Corollary}

\newcommand{\E}{\mathbb{E}}
\newcommand{\R}{\mathbb{R}}

\begin{document}

\title{Principled Detection of Coordinated Manipulation from Aggregate Distortion and Account Reuse}

\author{Qian Guo}
\affiliation{%
  \institution{University of Delaware}
  \city{Newark}
  \state{DE}
  \country{USA}}
\email{qianguo@udel.edu}

\author{Yidan Hu}
\affiliation{%
  \institution{Rochester Institute of Technology}
  \city{Rochester}
  \state{NY}
  \country{USA}}
\email{yidan.hu@rit.edu}

\author{Rui Zhang}
\affiliation{%
  \institution{University of Delaware}
  \city{Newark}
  \state{DE}
  \country{USA}}
\email{ruizhang@udel.edu}

\renewcommand{\shortauthors}{Qian Guo et al.}

\begin{abstract}
Coordinated manipulation is collective: plausible accounts can jointly distort
ratings, rankings, and engagement. Existing defenses primarily construct
evidence from identities, graphs, content, or co-activity. We introduce an 
\emph{aggregate-first evidence layer} for coordinated-manipulation detection, treating distortion
of a context-level outcome distribution as the primary evidence object rather
than an auxiliary feature of an identity-centric detector. The evidence engine
observes only a histogram, count, resolution, and reference distribution;
identities are withheld until interval evidence is fixed. Because raw
distributional discrepancies have positive finite-sample expectation, we
subtract a matched null expectation to obtain signed evidence and account for
uncertainty in estimated references. Participation logs then accumulate these
fixed increments across accounts. Our analysis characterizes matched-exposure
score divergence, bounds self-influence, establishes finite-horizon separation,
and derives an exact linear reuse law for paired contexts.

We validate the mechanism with controlled rotation experiments and paired
counterfactual interventions on historical Amazon review streams.
Historical reviews provide the real behavioral background, while synthetic
intervention identities provide known coalition membership and exact clean
twins provide counterfactual controls in the matched experiments. In a
fixed-attack sweep against historical non-donor comparison accounts,
reassigning the same manipulated events across identities with increasing
reuse raises account-score ROC--AUC from 0.500 to 0.797. Under exact
activity- and exposure-matched clean twins, frequency is at chance while
counterfactual attribution achieves ROC--AUC 0.744. Under an exact mean-preserving shape intervention,
Wasserstein--1 and Jensen--Shannon evidence achieve ROC--AUC $0.909$ and
$0.967$, while frequency and mean-based attribution remain at chance. Aggregate
evidence also complements repeated co-activity, improving mixed-mechanism
ROC--AUC from $0.750$ to $0.874$ with a simple untrained combination.
\end{abstract}

\begin{CCSXML}
<ccs2012>
<concept>
<concept_id>10002951.10003227.10003351</concept_id>
<concept_desc>Information systems~Data mining</concept_desc>
<concept_significance>500</concept_significance>
</concept>
<concept>
<concept_id>10010147.10010257.10010258.10010260.10010229</concept_id>
<concept_desc>Computing methodologies~Anomaly detection</concept_desc>
<concept_significance>500</concept_significance>
</concept>
<concept>
<concept_id>10002978.10003022.10003027</concept_id>
<concept_desc>Security and privacy~Social network security and privacy</concept_desc>
<concept_significance>300</concept_significance>
</concept>
</ccs2012>
\end{CCSXML}

\ccsdesc[500]{Information systems~Data mining}
\ccsdesc[500]{Computing methodologies~Anomaly detection}
\ccsdesc[300]{Security and privacy~Social network security and privacy}

\keywords{
coordinated manipulation,
information integrity,
review fraud,
aggregate anomaly detection,
account attribution,
optimal transport
}

\maketitle

\section{Introduction}
\label{sec:introduction}

Detecting coordinated manipulation is usually posed as an actor-finding
problem: identify suspicious accounts or groups and determine whether they
explain platform harm. We reverse this order of inference. Manipulation is
collective: a coalition can distribute its influence across individually
plausible actions while jointly shifting a platform-level outcome, such as a
rating, ranking input, or engagement distribution
~\cite{CaoEtAl2012SybilRank,Mukherjee2013YelpFake,
Beutel2013CopyCatch}. Thus, individual behavior can remain statistically
ordinary even when the collective outcome is distorted. Identity-, graph-,
content-, and co-activity-based methods remain valuable, but they make actors
or actor relationships part of the primary detection representation. We
instead ask whether the collective effect can first be evidenced at the
outcome layer, before identities are examined. This changes the object of inference: rather than using identity-linked behavior to define both anomalous events and suspicious actors, we separate outcome evidence from actor attribution and study when the former becomes recoverable at the identity level.

For attacks targeting a monitored platform signal, successful manipulation
must alter that signal at some representation used by the platform. We
therefore introduce an \emph{aggregate-first evidence interface}. For each
context and interval, the evidence engine observes only a histogram, sample
size, reporting resolution, and reference distribution; identities and
per-action features are withheld until interval evidence is fixed. This restriction is architectural rather than informational: the platform may
possess richer telemetry, but fixing outcome evidence first creates an
independently derived and auditable channel that can later be combined with
identity-linked signals without using the same evidence both to define an
anomaly and identify its participants. The resulting account score therefore
has explicit evidence provenance: its anomaly component is fixed before any
identity-linked information is used. Classical distributional discrepancies
provide useful primitives~\cite{RubnerEtAl2000,LehmannRomano2005}, but raw
nonnegative distances have a positive finite-sample floor and therefore drift
upward when accumulated. We subtract the matched null expectation to obtain a
signed evidence increment and account for uncertainty when the reference is
estimated from finite history. Positive evidence means only that an aggregate
outcome is more incompatible with its declared reference than expected under
the modeled reporting process; it does not by itself establish coordination,
intent, or responsibility.

Aggregate evidence becomes attributable only when an operation leaves repeated, linkable participation. Persistent campaigns have a natural incentive to reuse identities: when creating and maintaining a usable identity is costly relative to the value of a single participation, profitability favors amortizing that cost across multiple actions. Such reuse may span time, items, campaigns, or platforms; here we study temporal and cross-item reuse within a single platform. This economic intuition motivates our operating regime but is not a detection assumption. The detector observes neither identity cost nor campaign profit, and the theory does not depend on an economic model; it conditions directly on observed reuse and exposure to evidence-bearing contexts. Accordingly, we focus on persistent manipulation campaigns that reuse linkable identities across multiple participations. One-off attacks, campaigns relying on effectively free disposable identities, and operations whose identities cannot be linked across participations are outside our scope, even when they produce detectable aggregate distortion.

Crucially, neither ingredient is sufficient by itself. Aggregate distortion can identify an affected context without
attributing that distortion to particular accounts, while reuse alone
is not suspicious. Under matched marginal participation, manipulated and normal accounts can be equally active. The relevant signal is instead \emph{evidence-conditioned reuse}: repeated participation in contexts whose aggregate evidence has already been established independently of identity. In short, aggregate evidence provides the signal, reuse provides the linkage, and \emph{evidence-conditioned reuse} connects the two. We formalize this mechanism through a matched-exposure score-gap result that separates evidence accumulation from the mechanical self-influence
created when an account contributes to the histogram from which it is later
scored. Positive conditional drift yields finite-horizon separation, and for
paired contexts we derive an exact linear reuse law when aggregate
interventions and item-level evidence are held fixed. The second-stage score is deliberately parameter-free. It is not
proposed as an optimal allocation rule, but as the minimal
identity-linked operator needed to test whether already-fixed
aggregate evidence contains recoverable identity-level information.

Our evaluation isolates this mechanism under progressively more realistic
conditions. A controlled rotation model matches marginal activity and tests
null calibration, predicted score growth, and the exposure--intermittency
boundary. We then construct paired counterfactual interventions on historical
Amazon review streams~\cite{amazonreviews2023}, preserving real item histories,
timestamps, rating distributions, and temporal drift. In a descriptive
fixed-attack sweep against historical non-donor comparison accounts,
reassigning the same manipulated events across identities with increasing
reuse raises account-score ROC--AUC from 0.500 to 0.797. With exact clean
twins matched in activity, item, block, position, and timestamp exposure,
frequency is at chance while counterfactual attribution achieves ROC--AUC
0.744. Under an exact mean-preserving shape intervention, both frequency and
mean-based attribution remain at chance, whereas distribution-sensitive
$W_1$ and Jensen--Shannon evidence achieve ROC--AUC 0.909 and 0.967.
Finally, aggregate evidence and repeated co-activity expose complementary
coordination mechanisms: a simple untrained combination improves
mixed-mechanism ROC--AUC from 0.750 for the stronger individual channel to
0.874. To our knowledge, no public labeled benchmark provides both manipulated
actions and a matched clean outcome for the same item and interval.
Historical reviewers therefore remain unlabeled; we test this mechanism,
not deployment precision or recall. The score can rank accounts or
complement richer integrity signals.

\paragraph{Contributions.}
\emph{(1) Outcome-first decomposition of coordination evidence:}
we separate identity-independent construction of collective outcome evidence from downstream actor attribution through an explicit information boundary, enabling aggregate
evidence to be calibrated and audited independently before being transferred
through participation.
\emph{(2) Identifiability of evidence-conditioned reuse:}
we characterize when independently established aggregate evidence becomes
account-recoverable through repeated participation, including
matched-exposure divergence, a non-identifiability boundary, self-influence
bounds, finite-horizon separation, and an exact paired-context reuse law.
\emph{(3) Counterfactual mechanism isolation:}
we design controlled and real-background Amazon experiments that isolate
reuse from aggregate attack strength, activity, exposure context, and
first-moment cues, while showing that aggregate evidence remains
complementary to identity-linked co-activity.

\section{Related Work} \label{sec:related}
\paragraph{Identity-linked fraud and account detection.} Most integrity systems make actors or actor relations part of the primary representation, using persistent identities, user--item or social graphs, metadata, content, or temporal activity. Classical examples include graph- and Sybil-based detectors \cite{CaoEtAl2012SybilRank,Akoglu2013FraudEagle,Hooi2016Fraudar, Zheng2018ELSIEDET}, collective review-fraud models \cite{Mukherjee2013YelpFake,Rayana2015SpEagle,Kumar2018Rev2}, and lockstep methods \cite{Beutel2013CopyCatch,Cao2014SynchroTrap}. Recent graph-learning approaches address camouflage and class imbalance \cite{DouEtAl2020CAREGNN,LiuEtAl2021PCGNN}, while rating-platform models incorporate bursty activity \cite{LuEtAl2024BurstBGN}; GADBench benchmarks supervised graph anomaly detection \cite{TangEtAl2023GADBench}. These methods exploit richer identity-linked information than our aggregate engine. Our framework instead fixes context-level outcome evidence before identities or per-action features are accessed. 

\paragraph{Coordinated-behavior and campaign discovery.} Recent methods construct coordination networks from shared traces \cite{PachecoEtAl2021CoordinatedNetworks}, model temporal influence and group behavior \cite{SharmaEtAl2021HiddenInfluence}, fuse behavioral-similarity networks \cite{LuceriEtAl2024Unmasking}, or track temporal and multiplex communities \cite{TardelliEtAl2024TemporalDynamics,IannucciEtAl2026TemporalMultiplex}. Graph--language models further target cross-campaign generalization \cite{MiniciEtAl2025IOHunter}, and a recent survey synthesizes this broader actor- and group-finding literature \cite{MannocciEtAl2026Survey}. These approaches derive evidence from actors, actions, or relations and seek suspicious accounts or groups directly. Our aggregate-first interface withholds identities while each outcome increment is formed, targeting a complementary regime in which co-activity may be weak but reused identities repeatedly enter distorted contexts. 

\paragraph{Aggregate anomaly detection and distribution comparison.} A separate literature detects anomalies in counts, graphs, streams, or distributions \cite{Akoglu2015GraphAnomaly,Chandola2009Anomaly}, including dynamic edge streams \cite{LeeEtAl2024SLADE} and evolving distributions \cite{HinderEtAl2024ConceptDrift}. Classical discrepancies and tests compare empirical observations with a reference \cite{LehmannRomano2005,RubnerEtAl2000}. We do not propose a new distance, stream detector, or reference-management method. Instead, we remove the matched finite-sample null expectation to obtain signed evidence, handle finite-reference uncertainty separately, and freeze each increment before identities are examined. Whereas aggregate anomaly methods typically localize anomalous contexts or events, we study when such evidence becomes recoverable at the account level under matched activity. 

\paragraph{Positioning.} Our framework connects these lines through an outcome-first order of inference: establish signed context evidence, freeze it, and only then attribute it through participation. The bridge is \emph{evidence-conditioned reuse}---dependence between repeated participation and already-fixed evidence even when marginal activity is matched. Aggregate evidence alone does not establish coordination, intent, or responsibility, and reuse alone is not suspicious. The contribution is an explicit information boundary and an identifiability account of when their alignment yields account-level separation. The resulting signal complements graph, content, device, behavioral, and co-activity evidence rather than replacing it.
\section{Aggregate Evidence and Reuse-Based Attribution}
\label{sec:framework}

We develop a two-stage bridge from aggregate anomaly evidence to account-level
attribution. An evidence engine first emits a signed increment from an interval
summary without observing identities. Only after this increment is fixed are
participation logs used to accumulate evidence across accounts. Aggregate
deviation alone does not establish coordination, and reuse alone does not
establish suspicion; attribution arises through \emph{evidence-conditioned reuse}: repeated
participation that is systematically aligned with independently established
aggregate evidence.

\subsection{Aggregate Evidence Interface}
\label{sec:evidence}

Intervals are indexed by $t=1,\ldots,T$. Each belongs to a context
$c_t\in\mathcal C$, such as an item, query, region, or content stream, and is
reported as $(H_t,n_t,h_t)$, where $H_t\in\Delta_{h_t}$ is the normalized
histogram of $n_t$ actions at resolution $h_t$. Each context $c$ has a
reference distribution $H_c^{\mathrm{ref}}$.

The evidence engine maps
$(H_t,n_t,h_t,H_{c_t}^{\mathrm{ref}})$ to a real-valued increment $d_t$.
It does not observe account identifiers, per-action features, within-interval
ordering, or cross-interval identity linkage. This is an architectural
information boundary: it isolates what can be established from the aggregate
outcome before graph, content, device, behavioral, or co-activity evidence is
introduced downstream.

Let
$H_{c,h}^{\mathrm{ref}}=\Pi_h(H_c^{\mathrm{ref}})$ denote the reference
projected to $h$ bins. We use Wasserstein--1 distance on the ordered
support~\cite{RubnerEtAl2000}. Because $H_t$ is empirical,
$W_1(H_t,H_{c_t,h_t}^{\mathrm{ref}})$ has positive expectation even under the
reference distribution, so directly accumulating this nonnegative discrepancy
would create artificial drift. We instead define
\[
\begin{aligned}
b_c(n,h)
&=
\E_{N\sim\mathrm{Mult}(n,H_{c,h}^{\mathrm{ref}})}
\!\left[
W_1\!\left(N/n,H_{c,h}^{\mathrm{ref}}\right)
\right],\\
d_t
&=
W_1\!\left(H_t,H_{c_t,h_t}^{\mathrm{ref}}\right)
-
b_{c_t}(n_t,h_t).
\end{aligned}
\]
For a fixed declared reference, $\E_0[d_t\mid c_t,n_t,h_t]=0$.
Thus $d_t>0$ means only that the aggregate outcome is less compatible
with the declared reference than expected from finite sampling; it does not by
itself establish coordination, intent, or responsibility. On ordered discrete
support, the matched expectation can be computed exactly from cumulative
Binomial marginals and cached for recurring configurations.

When the reference itself is estimated from finite history, this
fixed-reference guarantee need not hold. Reference-estimation uncertainty
and temporal distribution shift are separate problems. The former can be
incorporated into the predictive null model; the latter changes the target
distribution itself and requires reference management outside the scope of
the present evidence engine. Section~\ref{sec:amazon-design} therefore separately calibrates
reference uncertainty and uses paired counterfactual evaluation on historical
Amazon streams to isolate intervention-induced evidence under temporal drift.

Thus, the evidence engine is conditional on reference adequacy:
deployment should abstain on poorly supported contexts rather than
assign possibly misleading signed evidence.

\subsection{Exposure Attribution}
\label{sec:attribution}

Let $a_{u,t}\in\{0,1,2,\ldots\}$ denote the number of actions contributed by
account $u$ to interval $t$. Only after $d_t$ has been fixed do we compute
\[
S_u(T)=\sum_{t=1}^{T} a_{u,t}d_t.
\]
We call this parameter-free rule \emph{exposure attribution}. It is
deliberately uniform: no per-action features, identity relations, or
learned allocation model are used, so account separation can arise
only from repeated alignment with the already-fixed interval evidence.
The score supports ranking or downstream investigation; it does not
establish individual causal responsibility.

The signed increment is essential. High activity or repeated participation
alone need not create positive score drift because positive and nonpositive
increments can cancel. The relevant signal is
\emph{evidence-conditioned reuse}: repeated participation in contexts carrying
positive aggregate evidence.

For an ordered $h_t$-bin histogram, $W_1$ is computed in $O(h_t)$ time.
Given the sparse participation log
$P_t=\{(u,a_{u,t}):a_{u,t}>0\}$, account-score updates cost
$O(|P_t|)$, yielding $O(h_t+|P_t|)$ time per interval and a direct
streaming implementation.

\subsection{Theory of Evidence-Conditioned Reuse}
\label{sec:divergence}

We characterize when aggregate-level evidence can be converted into account-level separation. The central identifiability question is: under matched marginal exposure, when can independently established aggregate evidence still distinguish repeatedly involved accounts from equally active controls? Because frequency is then uninformative, any systematic score gap must arise from \emph{which} evidence-bearing intervals the accounts enter.

Let $I_t\in\{0,1\}$ indicate whether a campaign is inactive or active, with
$p=\Pr(I_t=1)$. For a representative account from group
$g\in\{\mathrm c,\mathrm n\}$, let $A_{g,t}\in\{0,1\}$ indicate
participation and $S_g(T)=\sum_{t=1}^{T}A_{g,t}d_t$. Define
\[
q_{g,i}=\Pr(A_{g,t}=1\mid I_t=i),
\qquad
\nu_{g,i}=\E[d_t\mid A_{g,t}=1,I_t=i].
\]
The conditioning allows participation to depend on interval evidence and a
participating account to contribute to the histogram from which $d_t$ is
computed.

\begin{theorem}[Matched-exposure score gap]
\label{thm:divergence}
Assume $p$, $q_{g,i}$, and $\nu_{g,i}$ are stationary in $t$.
Suppose coalition accounts participate only in active intervals,
$q_{\mathrm c,0}=0$; normal participation is independent of campaign state,
$q_{\mathrm n,0}=q_{\mathrm n,1}=q_{\mathrm n}$; and marginal exposure is
matched, $pq_{\mathrm c,1}=q_{\mathrm n}$. Then
\begin{equation}
\label{eq:expected-gap}
\Delta
=
q_{\mathrm n}
\left[
\nu_{\mathrm c,1}
-
p\nu_{\mathrm n,1}
-
(1-p)\nu_{\mathrm n,0}
\right].
\end{equation}
Hence matched participation frequency cannot distinguish the groups, while
$\Delta>0$ implies expected cumulative gap $T\Delta$ after $T$ intervals.
\end{theorem}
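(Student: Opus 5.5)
The plan is to compute the per-interval expected score increment for each group by conditioning on the campaign state and the participation indicator, and then sum over intervals using stationarity. Here $\Delta$ denotes the per-interval expected gap $\E[A_{\mathrm c,t}d_t]-\E[A_{\mathrm n,t}d_t]$. The identity $A_{g,t}d_t = \mathbf{1}\{A_{g,t}=1\}\,d_t$ is what drives the computation. For each $g$ and $t$ we apply the law of total expectation over $I_t\in\{0,1\}$:
\[
\E[A_{g,t}d_t]=\sum_{i\in\{0,1\}}\Pr(I_t=i)\,\Pr(A_{g,t}=1\mid I_t=i)\,\E[d_t\mid A_{g,t}=1,I_t=i].
\]
By the definitions this equals $(1-p)q_{g,0}\nu_{g,0}+p\,q_{g,1}\nu_{g,1}$. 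The formula requires no independence between $A_{g,t}$ and $d_t$, because all such dependence is absorbed into $\nu_{g,i}$. This is precisely why the definition of $\nu_{g,i}$ conditions on participation. When some $q_{g,i}=0$, the corresponding term is read as zero, so an undefined conditional expectation causes no difficulty.

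Next we substitute the hypotheses. For the coalition, $q_{\mathrm c,0}=0$ removes the inactive term, leaving $p\,q_{\mathrm c,1}\nu_{\mathrm c,1}=q_{\mathrm n}\nu_{\mathrm c,1}$ by matched exposure. For normal accounts, $q_{\mathrm n,0}=q_{\mathrm n,1}=q_{\mathrm n}$ gives $q_{\mathrm n}[p\nu_{\mathrm n,1}+(1-p)\nu_{\mathrm n,0}]$. Subtracting the two expressions yields \eqref{eq:expected-gap}.

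For the frequency claim, the same total-probability step applied to $\E[A_{g,t}]$ gives $\Pr(A_{\mathrm c,t}=1)=p\,q_{\mathrm c,1}=q_{\mathrm n}=\Pr(A_{\mathrm n,t}=1)$. Hence the marginal participation rates coincide, and so do the expected counts $\E\sum_t A_{g,t}$. Frequency therefore cannot distinguish the groups.

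Finally, linearity of expectation gives
\[
\E[S_{\mathrm c}(T)-S_{\mathrm n}(T)]=\sum_{t=1}^T\bigl(\E[A_{\mathrm c,t}d_t]-\E[A_{\mathrm n,t}d_t]\bigr).
\]
Stationarity makes every summand equal to $\Delta$, so the expected cumulative gap is $T\Delta$, and it is positive whenever $\Delta>0$. No step is genuinely hard. The only point needing care is not implicitly assuming independence between $d_t$ and participation, since self-influence and evidence-dependent participation are allowed. Conditioning on the event $A_{g,t}=1$ handles this cleanly. The other subtlety is the convention for terms with zero conditioning probability, which we treat as zero.
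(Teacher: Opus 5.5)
Your proposal is correct and follows the paper's proof essentially step for step: you condition $\E[A_{g,t}d_t]$ on campaign state to get $p q_{g,1}\nu_{g,1}+(1-p)q_{g,0}\nu_{g,0}$, substitute $q_{\mathrm c,0}=0$, $q_{\mathrm n,0}=q_{\mathrm n,1}=q_{\mathrm n}$ and $pq_{\mathrm c,1}=q_{\mathrm n}$, and use linearity plus stationarity to obtain $T\Delta$, with the frequency claim read off from $\Pr(A_{\mathrm c,t}=1)=pq_{\mathrm c,1}=q_{\mathrm n}=\Pr(A_{\mathrm n,t}=1)$. Your explicit remarks that no independence between participation and $d_t$ is assumed, and that zero-probability conditioning terms are read as zero, are sound points that the paper leaves implicit.
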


The result follows by conditioning each group's expected score
increment on campaign state and applying the matched-exposure
condition; the full proof and count-valued and nonstationary
extensions appear in Appendix~\ref{sec:supp-score-gap}. It formalizes
\emph{evidence-conditioned reuse}: coalition accounts score higher not because
they participate more often, but because their participation is more strongly
aligned with positive aggregate evidence.

Conversely, if two groups induce the same joint distribution over their
exposure-weighted fixed-evidence sequences
$\{A_{g,t}d_t\}_{t=1}^{T}$, then their cumulative exposure scores have
the same distribution and hence cannot be separated in expectation.
Thus recoverability requires differential alignment with evidence-bearing
contexts, not reuse alone.

\paragraph{Controlling participation-conditioned dependence.} Because a participant contributes to the histogram from which it is later scored, dependence can be mechanical. Let $\mu_i=\mathbb E[d_t\mid I_t=i]$ and suppose $|\nu_{g,i}-\mu_i|\leq\eta$. Here $\eta$ bounds any participation-conditioned deviation from the
state-level evidence mean, including direct mechanical self-influence
and selection into particular evidence-bearing intervals; the one-action bound below isolates only the former.

\begin{corollary}[Influence-aware divergence]
\label{cor:influence}
Under Theorem~\ref{thm:divergence},
\[
\Delta
\geq
q_{\mathrm n}
\left[
(1-p)(\mu_1-\mu_0)-2\eta
\right].
\]
Thus $(1-p)(\mu_1-\mu_0)>2\eta$ is sufficient for positive expected
divergence.
\end{corollary}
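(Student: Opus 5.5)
We start from the exact expression \eqref{eq:expected-gap} of Theorem~\ref{thm:divergence}. The idea is to replace each participation-conditioned mean $\nu_{g,i}$ by the matching state-level mean $\mu_i$, and to pay at most $\eta$ for each replacement. First we rewrite the bracket as
\[
\nu_{\mathrm c,1}-p\nu_{\mathrm n,1}-(1-p)\nu_{\mathrm n,0}
=
\bigl[\mu_1-p\mu_1-(1-p)\mu_0\bigr]+R,
\]
where
\[
R=(\nu_{\mathrm c,1}-\mu_1)-p(\nu_{\mathrm n,1}-\mu_1)-(1-p)(\nu_{\mathrm n,0}-\mu_0).
\]
The leading term collapses to $(1-p)(\mu_1-\mu_0)$. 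This step is only algebra.

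Next we bound the remainder $R$ from below using $|\nu_{g,i}-\mu_i|\le\eta$ and the triangle inequality. The first term contributes at least $-\eta$. The other two terms together contribute at least $-p\eta-(1-p)\eta=-\eta$, because $p\in[0,1]$ makes the weights $p$ and $1-p$ nonnegative and they sum to one. Hence $R\ge -2\eta$.

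Finally we multiply by $q_{\mathrm n}\ge 0$, which does not flip the inequality, and obtain
\[
\Delta\ge q_{\mathrm n}\bigl[(1-p)(\mu_1-\mu_0)-2\eta\bigr].
\]
The sufficient condition follows directly. If $(1-p)(\mu_1-\mu_0)>2\eta$ and $q_{\mathrm n}>0$, then $\Delta>0$. If $q_{\mathrm n}=0$, the bound is trivially $0$; we will note this nondegeneracy case explicitly.

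There is no real obstacle. The only care needed is to check that the hypotheses of Theorem~\ref{thm:divergence} are exactly the ones that produce the three-term bracket. The coalition's inactive-state term vanishes because $q_{\mathrm c,0}=0$. Matched exposure is what lets us factor out $q_{\mathrm n}$. It also justifies using $\eta$ only for the three $(g,i)$ pairs that actually appear, and never for $(\mathrm c,0)$, where $\nu_{\mathrm c,0}$ is undefined.
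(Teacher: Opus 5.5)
Your proof is correct and is essentially the paper's argument: the paper substitutes the one-sided bounds $\nu_{\mathrm c,1}\geq\mu_1-\eta$, $\nu_{\mathrm n,1}\leq\mu_1+\eta$, $\nu_{\mathrm n,0}\leq\mu_0+\eta$ directly into Equation~\eqref{eq:expected-gap}, which is your remainder bound $R\geq-2\eta$ without naming $R$. Your note that strict positivity of $\Delta$ needs $q_{\mathrm n}>0$, with the weights $p$, $1-p$, $q_{\mathrm n}$ nonnegative, states explicitly what the paper leaves implicit.
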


For a block of $B$ observations on support $\mathcal X$, replacing
one observation changes fixed-reference Wasserstein--1 evidence by
at most $\operatorname{diam}(\mathcal X)/B$, equal to $4/B$ for
ratings in $\{1,\ldots,5\}$. We separately evaluate direct self-influence through a leave-one-out
ablation: removing each scored account's own review from both paired
histograms increases matched-twin ROC--AUC from 0.744 to 0.779; full results are in Appendix~\ref{sec:supp-self-influence-ablation}.

\paragraph{Finite-horizon separation.}
Let $X_t=A_{\mathrm c,t}d_t-A_{\mathrm n,t}d_t$ and let
$\mathcal F_t=\sigma(X_1,\ldots,X_t)$ be its natural filtration.
Suppose there exists a uniform predictable-drift lower bound
$\underline{\Delta}>0$ such that, for every $t$, $\E[X_t\mid\mathcal F_{t-1}]
\geq
\underline{\Delta},$
and that $\left|
X_t-\E[X_t\mid\mathcal F_{t-1}]
\right|
\leq c$
almost surely. Then Azuma--Hoeffding gives
\[
\Pr\!\left[
S_{\mathrm c}(T)\leq S_{\mathrm n}(T)
\right]
\leq
\exp\!\left(
-\frac{T\underline{\Delta}^{\,2}}{2c^2}
\right).
\]
Thus the coalition--normal pairwise non-win probability is at most
$\delta$ whenever
\[
T
\geq
\frac{2c^2}{\underline{\Delta}^{\,2}}
\log\frac{1}{\delta}.
\]
This uniform conditional-drift assumption is stronger than the
stationary expectation identity in Theorem~1. The proof appears in Appendix~\ref{sec:supp-concentration}.

\paragraph{Paired-context reuse law.}
The Amazon experiment asks a complementary controlled question: when
item-level evidence is fixed exactly, how does changing only its partition
across identities affect recoverability? Let $d_i^{\mathrm{cf}}$ denote the
paired evidence increment for item $i$, and let
$R\in\{0,1\}^{M\times L}$ be balanced so that every account appears in
exactly $r$ items and every item receives exactly $m$ synthetic accounts,
with $Mr=Lm$. For matched attacked/clean account pair $j$, define
$G_j(r)=S_j^{\mathrm{attack}}-S_j^{\mathrm{clean}}$.

\begin{corollary}[Paired-context reuse law]
\label{cor:paired-reuse}
Under the balanced shared-exposure construction,
\begin{equation}
\label{eq:paired-reuse}
\frac{1}{M}\sum_{j=1}^{M}G_j(r)
=
r\,\overline d_{\mathrm{cf}},
\qquad
\overline d_{\mathrm{cf}}
=
\frac{1}{L}\sum_{i=1}^{L} d_i^{\mathrm{cf}}.
\end{equation}
Thus whenever $\overline d_{\mathrm{cf}}>0$, the mean attacked--clean account
gap grows exactly linearly with reuse while every item-level intervention and
evidence increment remains fixed.
\end{corollary}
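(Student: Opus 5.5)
The identity follows from a double-counting (Fubini) argument on the incidence matrix $R$. The first step is to make precise what the attacked and clean scores are under the shared-exposure construction. For synthetic account $j$, exposure attribution (Section~\ref{sec:attribution}) gives
\[
S_j^{\mathrm{attack}}=\sum_{i}R_{ji}\,d_i^{\mathrm{attack}}
\quad\text{and}\quad
S_j^{\mathrm{clean}}=\sum_{i}R_{ji}\,d_i^{\mathrm{clean}}.
\]
Here the clean twin of $j$ occupies exactly the same items, with the same action counts, as its attacked partner. I will take $a_{u,t}\in\{0,1\}$ per item, absorbing multiple intervals per item into the item-level increment. The paired increment is then defined as $d_i^{\mathrm{cf}}=d_i^{\mathrm{attack}}-d_i^{\mathrm{clean}}$.

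Subtracting gives
\[
G_j(r)=\sum_{i=1}^{L}R_{ji}\,d_i^{\mathrm{cf}}.
\]
The key point to verify is that $d_i^{\mathrm{cf}}$ depends only on the item-level intervention and not on how that intervention is partitioned across identities. This holds because the evidence engine of Section~\ref{sec:evidence} sees only $(H_t,n_t,h_t,H^{\mathrm{ref}})$ and never identities. Reassigning the same manipulated events among accounts therefore leaves every $d_i^{\mathrm{cf}}$ unchanged as $r$ varies. This is where the information boundary is actually used.

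Averaging over $j$ and swapping the order of summation gives
\[
\frac1M\sum_j G_j(r)=\frac1M\sum_i d_i^{\mathrm{cf}}\sum_j R_{ji}=\frac{m}{M}\sum_i d_i^{\mathrm{cf}},
\]
using the column-balance condition that every item receives exactly $m$ synthetic accounts. The double-counting identity $Mr=Lm$ gives $m/M=r/L$, so the average equals $r\,\overline d_{\mathrm{cf}}$. Linear growth in $r$ for $\overline d_{\mathrm{cf}}>0$ is then immediate, because $\overline d_{\mathrm{cf}}$ is held fixed as $r$ varies.

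The main obstacle is stating the construction precisely enough that both scores use the same incidence $R$ and the same per-item increments. If self-influence makes an account's own review part of the histogram, the paired difference is still an item-level quantity, since the twin contributes the matched clean review. I expect the remaining care to be checking that no per-account dependence enters $d_i^{\mathrm{cf}}$. Notably, row balance (each account in exactly $r$ items) is needed only through $Mr=Lm$; column balance does the real work.
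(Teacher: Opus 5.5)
Your proposal is correct and follows essentially the same route as the paper: write $G_j(r)=\sum_i R_{ji}d_i^{\mathrm{cf}}$, exchange the order of summation, use column degree $m$ to get $mL\,\overline d_{\mathrm{cf}}$, and divide by $M$ using $Mr=Lm$. Your added remarks, that $d_i^{\mathrm{cf}}$ is invariant to how slots are assigned to identities and that row balance enters only through $Mr=Lm$, are accurate elaborations rather than a different argument.
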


Each item increment is assigned to exactly $m$ matched account pairs
and $Mr=Lm$, yielding Equation~(2); the full proof and empirical matched-pair evaluation appear in Appendices~\ref{sec:supp-paired-theory} and~\ref{sec:supp-extra-amazon}, respectively. The equation gives
a parameter-free prediction for the fixed-attack reuse experiment. In the
exact matched-twin condition, frequency and exposure context are additionally
neutralized by construction.

These results characterize a \emph{persistence mechanism}, not a universal detector. One-off attacks, disposable or unlinkable identities, insufficient exposure, below-resolution manipulation, or attacks matching the monitored outcome distribution may fail to produce separable account scores. Our scope is the complementary regime where aggregate distortion is observable and persistent, linkable identities repeatedly participate in evidence-bearing contexts.

\section{Experimental Design}
\label{sec:design}

We use two complementary evaluations. A controlled rotation model isolates
finite-sample calibration and evidence-conditioned reuse under a known
reference and known account labels. Paired interventions on historical Amazon
review streams test the same mechanism against heterogeneous real-item
histories with known synthetic intervention identities and, in the matched
experiments, exact synthetic clean controls. In both settings, aggregate
evidence is fixed before identities are used.

\paragraph{Threat model and evaluation objective.}
We study persistent manipulation campaigns that reuse a bounded or
rate-limited pool of linkable identities. The campaign may know the scoring
rule, estimate the platform reference, and choose values, timing and
reuse, but cannot alter other participants' actions, fixed historical reference
data, aggregation, or participation logs.
One-off attacks, effectively free disposable identities, and unlinkable
identities are outside our scope. The system produces an account ranking for
investigation, not an automatic fraud decision.

The experiments test two requirements: whether an intervention creates
aggregate evidence and whether repeated exposure concentrates that
already-fixed evidence on participating identities. The Amazon experiments are therefore counterfactual mechanism tests, not
attempts to identify historical fraud: real reviews supply the background,
synthetic intervention identities provide known coalition membership, and
the matched constructions provide exact clean counterfactual controls.

\subsection{Controlled Rotation Model}
\label{sec:rotation}

The controlled model tests three predictions: matched finite-sample centering
removes null drift, the observed coalition--normal score gap follows
Theorem~\ref{thm:divergence}, and attribution depends jointly on marginal
exposure and temporal concentration.

Normal accounts act independently and draw actions from a reference
distribution $\mathcal D_{\mathrm n}$. Campaign activity follows an ON/OFF
process. In active intervals, a fixed number of coalition accounts draw from
$\mathcal D_{\mathrm c}$ and are chosen by cyclic balanced rotation; no
coalition account acts in inactive intervals. This scheduler equalizes
long-run exposure and suppresses ``most-active account'' cues.

Let $R_{\mathrm{exp}}$ be the ratio of coalition to normal marginal
participation rates. The principal condition sets $R_{\mathrm{exp}}=1$, so
frequency alone is uninformative. We sweep exposure imbalance and campaign intermittency under the mean-shift intervention. 
A normal-only condition tests calibration. To test the theory without
fitting account trajectories, we estimate the quantities in
Equation~\eqref{eq:expected-gap} directly from state-conditional
participation and interval evidence and compare the predicted and empirical
score gaps. Exact distributions and parameter grids are in Appendix~\ref{sec:supp-rotation}.

\subsection{Paired Interventions on Amazon Review Streams}
\label{sec:amazon-design}

We use Amazon Reviews 2023~\cite{amazonreviews2023}, with
\texttt{Home\_and\_Kitchen} as the primary domain and
\texttt{Electronics} as a cross-category replication reported in
Appendix~\ref{sec:supp-electronics}. Reviews are grouped by item
and ordered by timestamp with stable source order for deterministic
tie-breaking. We retain ratings in $\{1,\ldots,5\}$ and, for repeated
user--item pairs, keep the earliest usable review. Items with at least 300
usable reviews are eligible.

For each item, positions 1--120 form the historical reference prefix,
121--180 are calibration blocks, 181--240 are reserved for interventions,
and 241--300 remain untouched for temporal diagnostics. Each seed samples 2,000 eligible items. Dataset composition
and preprocessing details are reported in Appendix~\ref{sec:supp-amazon-data}.

\paragraph{Reference estimation and temporal calibration.}
Let $c_i\in\mathbb N^5$ be the rating counts in item $i$'s reference prefix.
We shrink the item reference toward a leave-one-item-out domain distribution
$H_{-i}$:
\[
\widehat H_i^{(\lambda)}
=
\frac{c_i+\lambda H_{-i}}{120+\lambda}.
\]
We select
$\lambda\in\{0,5,10,20,40,80,160\}$ using only the two calibration blocks,
minimizing the absolute mean posterior-predictive signed $W_1$ increment,
with ties favoring the smaller value, and then freeze it. For ordered ratings,
the plug-in and Dirichlet--multinomial predictive expectations of $W_1$ are
computed exactly from cumulative Binomial and Beta--Binomial marginals.
Untouched later blocks diagnose temporal drift but are never used to retune
$\lambda$.

\paragraph{Paired counterfactual evidence.}
Amazon traffic is not assumed stationary. Rather than interpret an absolute
later-block discrepancy as manipulation, we compare clean and attacked
versions of the same historical block. For discrepancy $D$,
\[
d_{it}^{\mathrm{cf}}
=
D(H_{it}^{\mathrm{attack}},\widehat H_i)
-
D(H_{it}^{\mathrm{clean}},\widehat H_i).
\]
The pair shares the same item, block position, timestamps, block size, and
unmodified reviews, so natural temporal variation and any fixed finite-sample
offset cancel in the difference. One of the two experimental blocks is chosen
uniformly after requiring both to satisfy the intervention's feasibility
rule, preventing feasibility from being confounded with block position.
The counterfactual difference is an experimental identification device, not
the online scoring rule. In deployment, the framework uses the one-world
signed increment of Section~3. The paired construction is possible here
because the controlled intervention provides matched clean and attacked
worlds, allowing us to isolate intervention-induced evidence from uncontrolled
temporal variation.

\paragraph{Fixed-attack reuse and exact matched twins.}
The primary intervention replaces $k=6$ non-five-star ratings in the selected
30-review block with five stars. Across $r\in\{1,2,4,8,16\}$, the attack
world---items, blocks, manipulated positions, timestamps, replacement ratings,
histograms, $d_{it}^{\mathrm{cf}}$, and total manipulation volume---is fixed;
only the assignment of manipulated slots to synthetic identities changes.
With $m=k=6$, an exact regular incidence gives each item degree $m$ and each
synthetic intervention account degree $r$, with no repeated account--item
edge. Hence $M=Lm/r$, and the mean synthetic intervention-account score is
$\overline S_{\mathrm{coalition}}
=r\overline d_{\mathrm{cf}}$.

For fixed-sweep ROC--AUC, synthetic intervention identities form the
positive class; historical reviewers in the 24 non-donor positions of
each treated block form the comparison class. Every treated-context
exposure contributes the same $d_{it}^{\mathrm{cf}}$. The historical
comparison population is fixed across $r$ and is not assumed benign.
Synthetic accounts have frequency exactly $r$, whereas comparison frequency
reflects natural cross-item reuse; the sweep therefore does not control
activity and its ROC--AUC is descriptive.

At $r=8$, we instead assign every intervention identity an exact synthetic
clean twin occupying the corresponding clean slots. Each pair matches
frequency, item, block, review position, source record, and timestamp; only
attacked versus clean aggregate evidence differs. Thus $S_a^{\mathrm{attack}}-S_{a'}^{\mathrm{clean}}
=
\sum_{(i,t)\in E_a} d_{it}^{\mathrm{cf}}.$
Frequency and exposure context are therefore uninformative by construction.
Clean twins are synthetic counterfactual identities, not labeled benign
Amazon accounts.

\paragraph{Exact mean-preserving shape intervention.}
To remove a simple first-moment cue, we also modify six ratings through three
disjoint sum-preserving polarization pairs. Each transformation changes both
selected ratings while preserving their sum, so the 30-review block mean is
unchanged exactly. Pair selection uses only rating feasibility and seeded
random ordering and does not optimize any discrepancy. At $r=8$, exact
matched twins compare $W_1$, Jensen--Shannon divergence, and absolute mean
deviation. The mean-based paired effect is identically zero, so any remaining
separation must arise from distributional shape.

\subsection{Baselines and Evaluation Metrics}
\label{sec:metrics}

\paragraph{Evidence channels and baselines.}
Wasserstein--1 is primary: the controlled model uses matched-null centering,
whereas Amazon uses paired counterfactual differences. For the mean-preserving Amazon intervention, we also evaluate
Jensen--Shannon divergence and absolute mean deviation. Participation frequency is the
account baseline; exact matched twins force its ROC--AUC to $0.5$, and
exact mean preservation likewise neutralizes mean-based attribution.

To our knowledge, no public benchmark provides both manipulated actions
and a matched clean outcome for the same item and interval. Even verified
review-, account-, or campaign-level labels cannot supply this
counterfactual. We therefore use exact attacked--clean counterfactuals;
historical Amazon reviewers remain unlabeled, making these mechanism
controls rather than leaderboard baselines.

\paragraph{Metrics and uncertainty}
At the aggregate level, we report mean paired evidence and the positive-block
fraction. The fixed-attack sweep reports evidence and frequency ROC--AUC
against the historical non-donor comparison class, together with the
reuse-law check. Matched experiments report mean paired gap, positive-pair fraction,
misordering, and attacked-versus-clean-twin ROC--AUC. Historical
reviewers are not verified benign; outside the descriptive fixed-attack
comparison they remain unlabeled background. Results aggregate 30 randomized
seeds with 95\% confidence intervals from between-seed variation; the
temporal-calibration diagnostic instead uses an item-cluster bootstrap.

\begin{figure*}[t]
    \centering
    \begin{subfigure}[t]{0.32\textwidth}
        \centering
        \includegraphics[width=\linewidth]{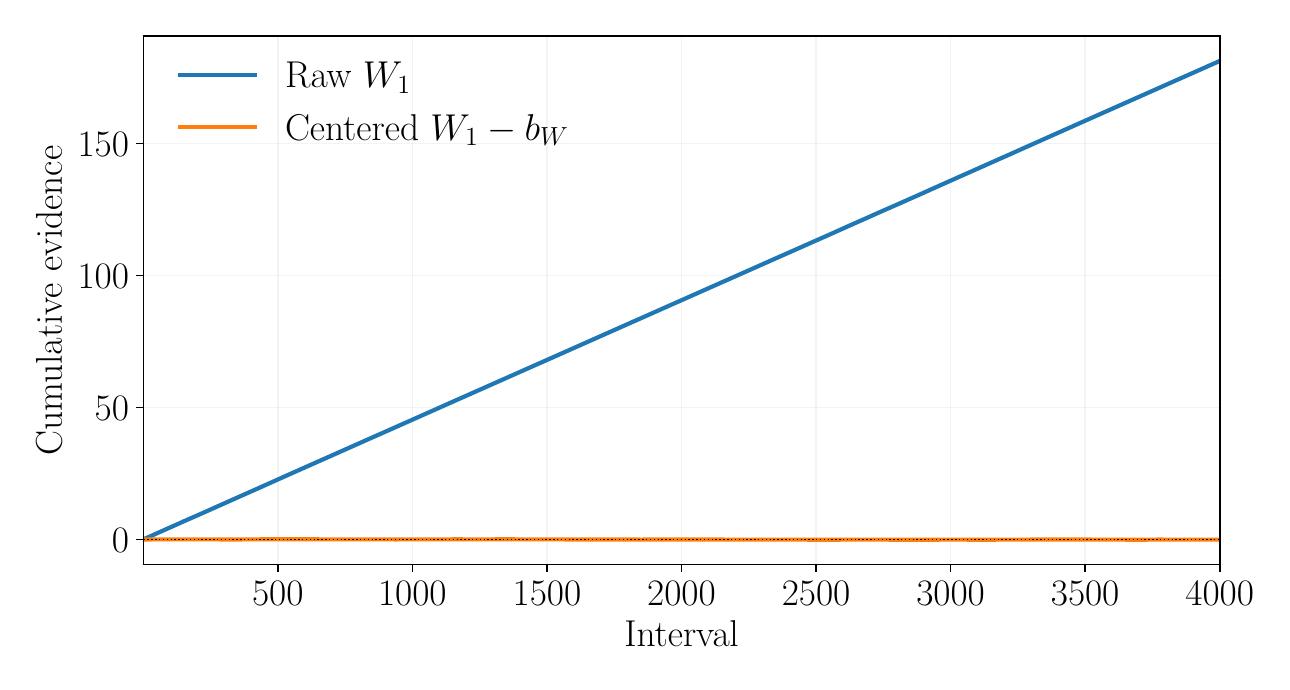}
        \caption{Finite-sample null calibration.}
        \label{fig:controlled-null}
    \end{subfigure}
    \hfill
    \begin{subfigure}[t]{0.32\textwidth}
        \centering
        \includegraphics[width=\linewidth]{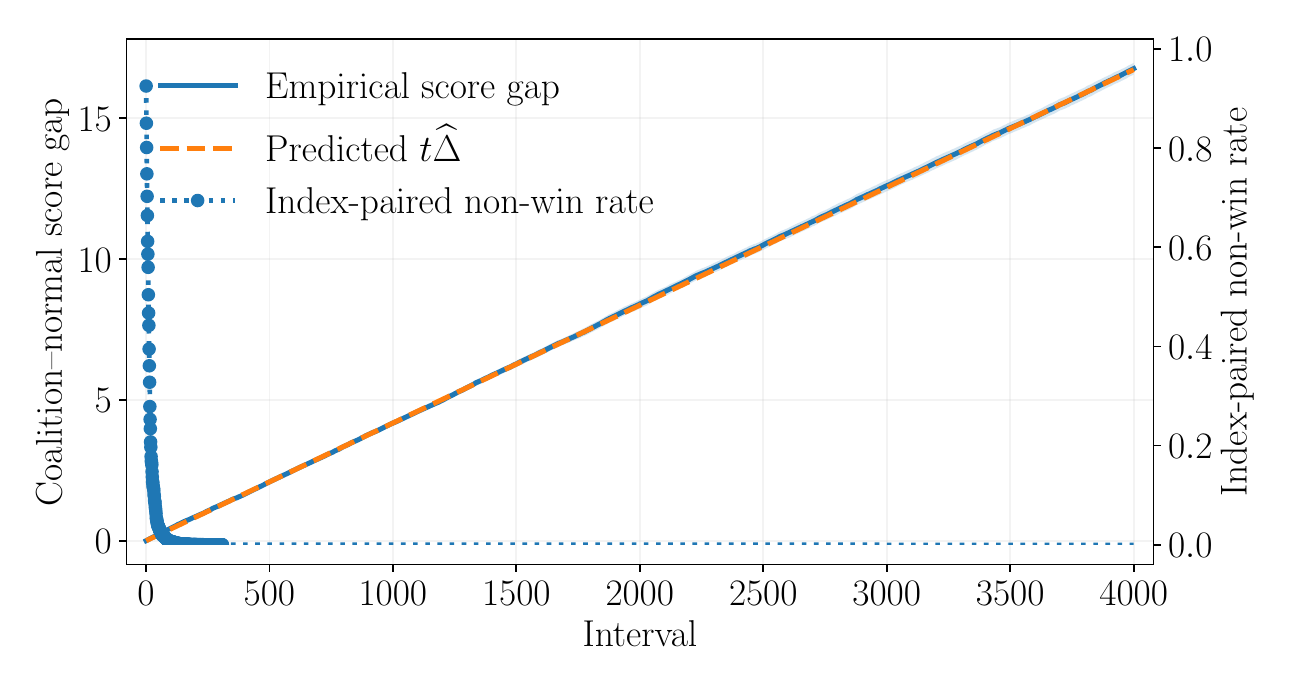}
        \caption{Predicted and observed divergence.}
        \label{fig:controlled-theory}
    \end{subfigure}
    \hfill
    \begin{subfigure}[t]{0.32\textwidth}
        \centering
        \includegraphics[width=\linewidth]{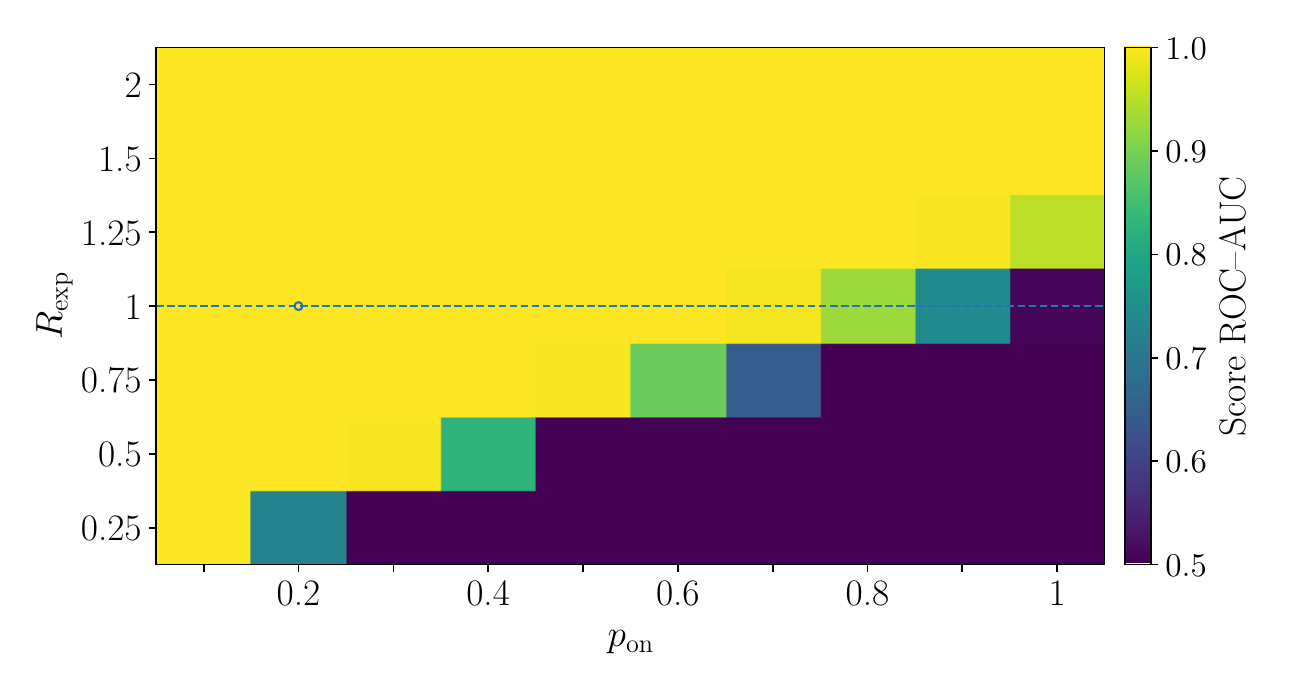}
        \caption{Exposure--intermittency regime.}
        \label{fig:controlled-regime}
    \end{subfigure}
\caption{\textbf{Controlled mechanism validation.}
    (a) Under normal-only traffic, raw $W_1$ accumulates a
    finite-sample null floor, whereas matched-null centering removes the drift.
    (b) Under matched marginal exposure, the empirical coalition--normal
score gap follows the predicted $t\widehat{\Delta}$ growth;
the right axis shows the corresponding non-win rate over index-paired
coalition--normal accounts.
    (c) Attribution depends jointly on marginal exposure and temporal
    concentration. The heat map reports account-level ROC--AUC over
    $(R_{\mathrm{exp}},p_{\mathrm{on}})$; the matched-exposure row
    $R_{\mathrm{exp}}=1$ isolates evidence-conditioned reuse from raw activity.}
    \label{fig:controlled}
\end{figure*}

\section{Results}
\label{sec:results}

Unless otherwise stated, results aggregate 30 randomized seeds with
95\% confidence intervals from between-seed variation; the Amazon
temporal-calibration diagnostic uses an item-cluster bootstrap.

\subsection{Controlled Mechanism Validation}
\label{sec:controlled-results}

\paragraph{Finite-sample centering removes null drift.}
Figure~\ref{fig:controlled}(a) evaluates normal-only traffic, where every
action follows the declared reference and no coalition is present. Raw
Wasserstein--1 has mean discrepancy $0.04531$ per interval and thus
accumulates positive evidence despite no manipulation. After
subtracting the matched finite-sample baseline, the mean signed increment is
$-1.35\times 10^{-5}$, with 95\% confidence interval
$[-1.01\times 10^{-4},7.59\times 10^{-5}]$. Correspondingly, the
cumulative centered trajectory has slope $-3.37\times 10^{-5}$, versus
$0.04529$ for raw $W_1$. Thus, the drift of the raw
nonnegative discrepancy is a finite-sample effect; centering restores the
near-zero null increment required for meaningful temporal accumulation.

\paragraph{Observed divergence follows the matched-exposure prediction.}
Figure~\ref{fig:controlled}(b) compares the empirical coalition--normal score
gap with the prediction from Theorem~\ref{thm:divergence}. We compute a same-run plug-in estimate $\widehat{\Delta}$ from
state-conditional participation and interval evidence, without
fitting the cumulative account-score trajectories.
Across 30 seeds, $\widehat{\Delta}=0.004180$, while the fitted empirical
gap slope is $0.004190$. At the final horizon, the observed mean score gap is
$16.76$, compared with the prediction $T\widehat{\Delta}=16.72$. The mean
per-seed relative slope error is $4.57\%$.

More importantly for attribution, the index-paired coalition--normal
non-win rate decreases with accumulation horizon and reaches 0 at the
final horizon. Marginal exposure is also matched
($R_{\mathrm{exp}}=1.002$ on average), and participation frequency alone
is uninformative (ROC--AUC 0.513), whereas the accumulated evidence
score achieves ROC--AUC 1.000. Separation thus arises from which
evidence-bearing intervals accounts enter, rather than how often they
participate. This is precisely the \emph{evidence-conditioned reuse}
mechanism: marginal activity is matched, but participation is concentrated in
intervals carrying positive aggregate evidence.

\paragraph{Exposure and intermittency define the operating regime.}
Figure~\ref{fig:controlled}(c) varies marginal exposure ratio
$R_{\mathrm{exp}}$ and campaign activity probability $p_{\mathrm{on}}$.
At $R_{\mathrm{exp}}=1$, activity is matched; at the principal
setting $p_{\mathrm{on}}=0.2$, frequency yields ROC--AUC 0.486, while
centered-$W_1$ attribution achieves 1.000 with zero index-paired
non-wins.

\begin{figure*}[t]
    \centering
    \begin{subfigure}[t]{0.32\textwidth}
        \centering
        \includegraphics[width=\linewidth]{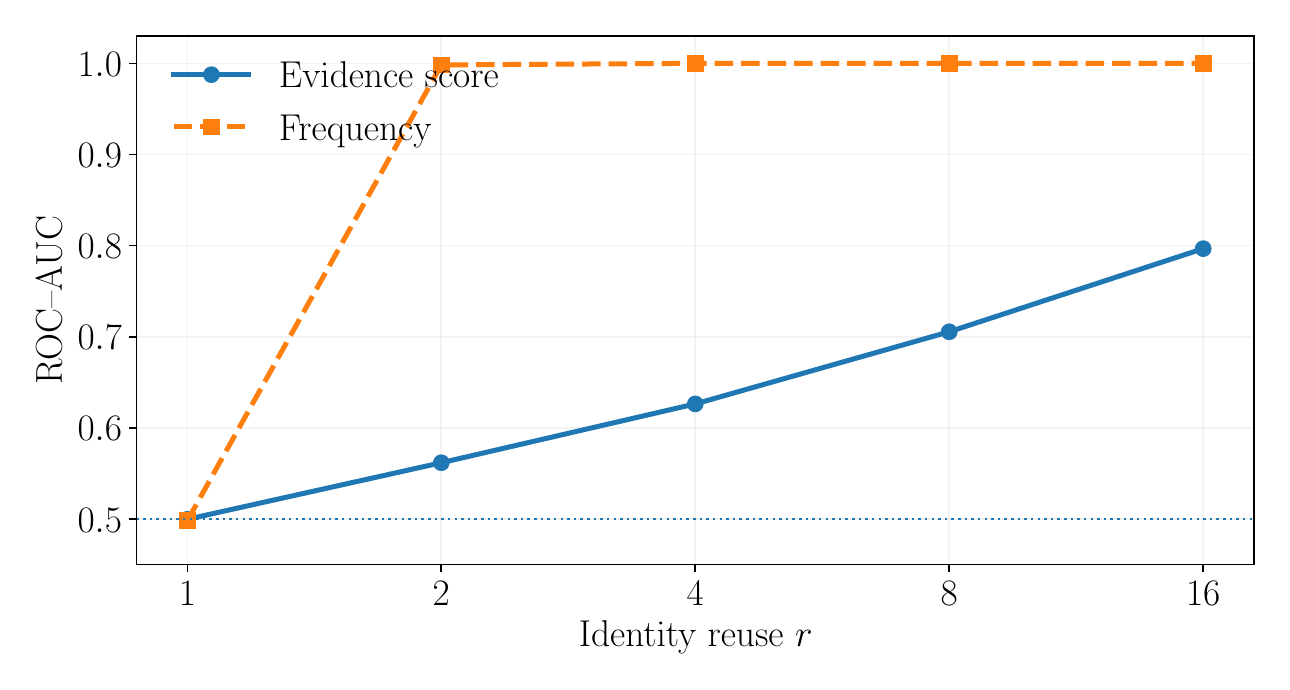}
        \caption{Fixed-attack reuse law.}
        \label{fig:amazon-reuse}
    \end{subfigure}
    \hfill
    \begin{subfigure}[t]{0.32\textwidth}
        \centering
        \includegraphics[width=\linewidth]{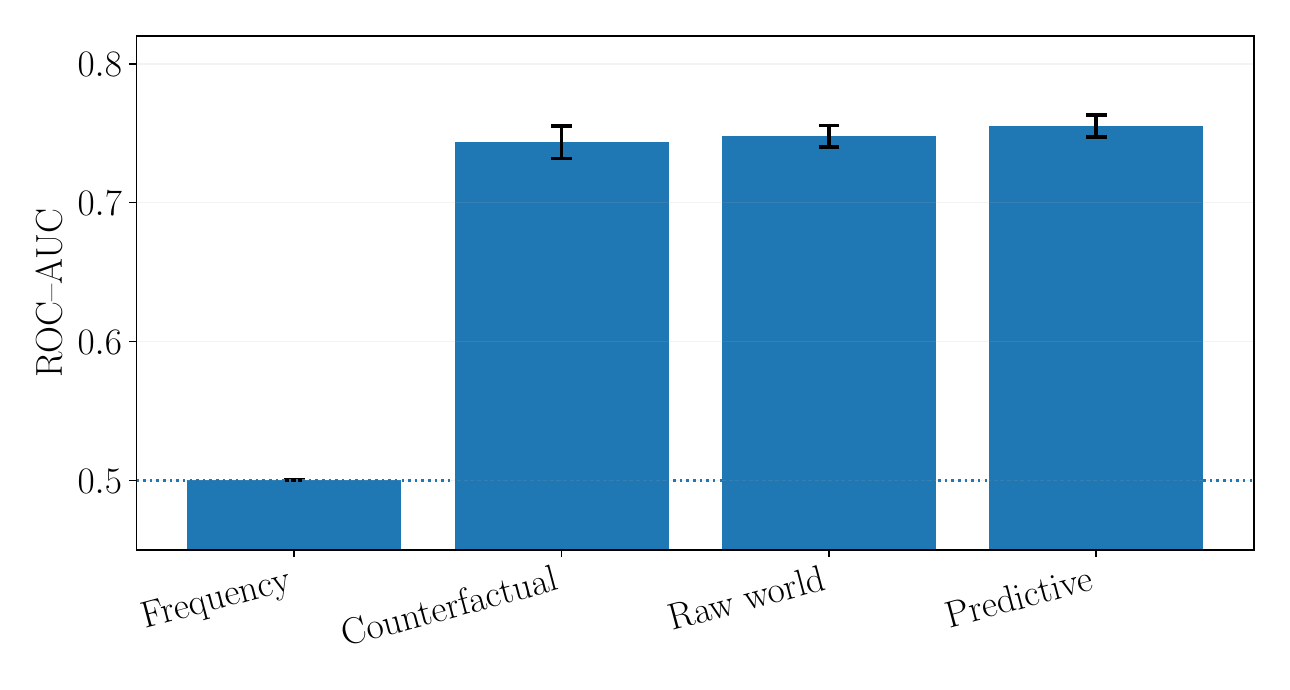}
        \caption{Exact matched-twin attribution at $r=8$.}
        \label{fig:amazon-matched}
    \end{subfigure}
    \hfill
    \begin{subfigure}[t]{0.32\textwidth}
        \centering
        \includegraphics[width=\linewidth]{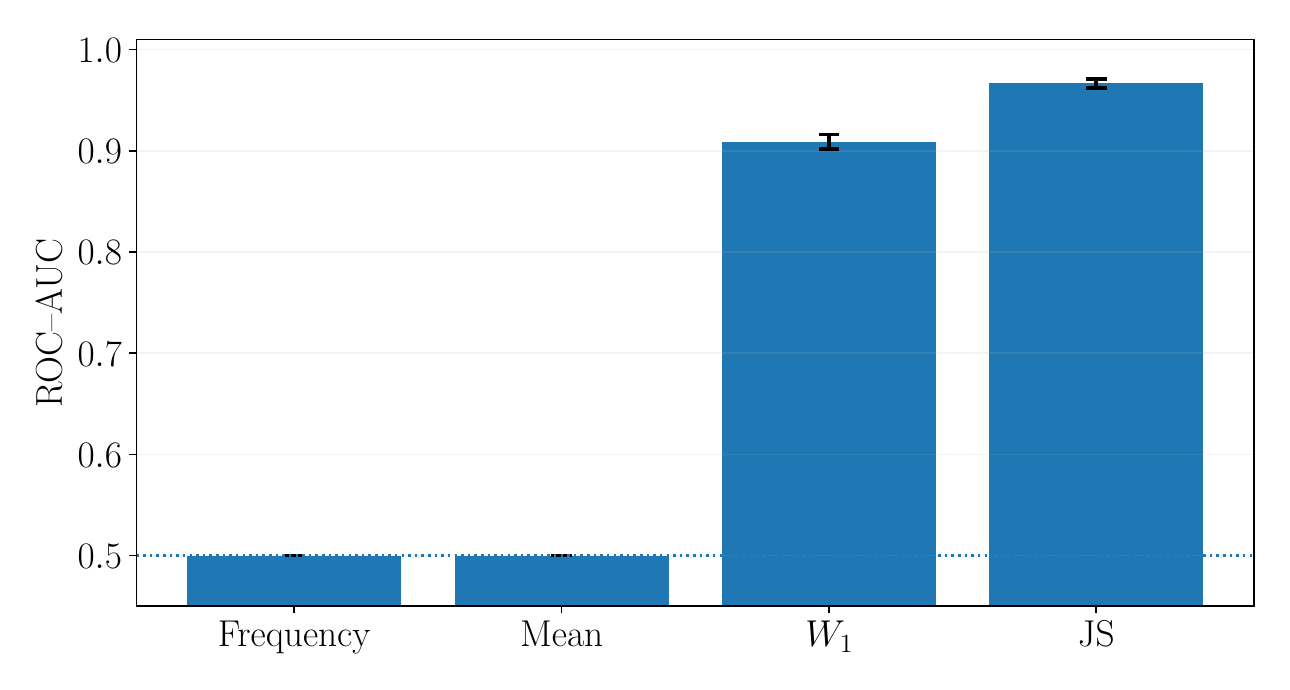}
        \caption{Exact mean-preserving shape distortion.}
        \label{fig:amazon-shape}
    \end{subfigure}
    \caption{\textbf{Counterfactual mechanism validation on Amazon Reviews.} (a) A fixed-attack identity-reassignment sweep compares synthetic intervention identities with historical non-donor accounts. Concentrating the same intervention events onto identities with larger reuse $r$ increases evidence-score separation; frequency is not controlled because synthetic activity also increases with $r$. (b) Exact matched twins control activity and exposure context: frequency is at chance while counterfactual, raw-world, and predictive evidence remain discriminative. (c) Exact mean preservation removes first-moment cues, while distribution-sensitive $W_1$ and Jensen--Shannon evidence retain strong separation. Error bars show 95\% confidence intervals across 30 seeds.}
    \label{fig:amazon-main}
\end{figure*}

Holding $R_{\mathrm{exp}}=1$ isolates temporal concentration. Centered-$W_1$
remains nearly perfect through $p_{\mathrm{on}}=0.7$ (ROC--AUC $0.994$),
but falls to $0.926$ at $0.8$, $0.742$ at $0.9$, and $0.507$ at $1.0$.
At this endpoint, the mean signed aggregate increment remains positive
(0.01583), yet account-level separation collapses to chance
(ROC--AUC $0.507$). This is an attribution boundary, not merely a
performance degradation: when $p_{\mathrm{on}}=1$, campaign activity is no
longer concentrated in a subset of intervals, so coalition participation
ceases to be differentially aligned with evidence-bearing contexts. The aggregate
distortion remains observable, but the participation pattern no longer
separates coalition accounts from normal accounts. Thus aggregate
detectability does not imply account recoverability.

\subsection{Paired Interventions on Historical Amazon Streams}
\label{sec:amazon-results}
The Amazon experiments target \emph{counterfactual mechanism isolation}:
historical traffic provides the background while attack and identity
assignment are controlled to isolate the aggregate-to-attribution
mechanism rather than measure classifier accuracy.

\paragraph{Evaluation background.}
The primary Amazon Reviews 2023 domain is
\texttt{Home\_and\_Kitchen}. After deduplicating user--item records and
retaining ratings in $\{1,\ldots,5\}$, 31,247 items have at least 300 usable
reviews. We retain the earliest 300: positions 1--120 form the
reference prefix, 121--180 are used for calibration, 181--240 for
interventions, and 241--300 for temporal diagnostics. Each seed samples
2,000 eligible items; preprocessing removes 786,064 repeated user--item
records from the original domain.

\paragraph{Real Amazon traffic motivates paired counterfactual evidence.}
Calibration blocks select $\lambda = 20$ using only positions 121--180. On
untouched holdout blocks, plug-in centering leaves mean signed $W_1$
increment 0.1227 with 95\% CI [0.1203, 0.1251], while posterior-predictive
centering reduces this to 0.1040 ([0.1016, 0.1064]). The predictive-centered
mean increases from 0.0958 in the first holdout block to 0.1121 in the
second, showing that finite-reference correction reduces but does not
eliminate real temporal drift.

These results distinguish three effects. Matched null centering removes the
positive finite-sample floor when the reference is fixed;
posterior-predictive centering additionally accounts for uncertainty in a
finite historical reference; neither procedure is intended to eliminate
genuine temporal drift in the underlying item distribution. The residual
holdout offset therefore motivates paired counterfactual evaluation rather
than future-data retuning.

Rather than retuning on future data, we compare clean and attacked versions
of the same historical block:
\[
d_{it}^{\mathrm{cf}}
=
D(H_{it}^{\mathrm{attack}},\widehat H_i)
-
D(H_{it}^{\mathrm{clean}},\widehat H_i).
\]
Because the two worlds share item, block, timestamps, and finite-sample
offsets, natural temporal drift and reference-induced offsets cancel in the
paired difference. Throughout this section, $d^{\mathrm{cf}}$ is used only as a
counterfactual evaluation quantity to isolate intervention-induced evidence;
the deployable detector remains the one-world signed increment of Section~3.

\paragraph{Fixed-attack identity reassignment isolates the effect of reuse.}
Figure~\ref{fig:amazon-main}(a) fixes the aggregate intervention and varies
only the assignment of manipulated events across identities. Replacing six
non-five-star ratings per treated block with five stars yields mean paired
aggregate evidence $\overline d_{\mathrm{cf}}=0.0764$ with 95\% CI
$[0.0737,0.0790]$; $63.8\%$ of treated blocks have positive paired evidence.
We vary identity reuse, $r\in\{1,2,4,8,16\}$, while items, blocks, donor
positions, timestamps, replacement ratings, histograms, paired increments,
and total attack volume remain identical. Against the fixed historical
non-donor comparison population, account-score ROC--AUC increases from 0.500
at $r=1$ to 0.562, 0.626, 0.706, and 0.797 at $r=2,4,8,16$, respectively,
while the mean intervention-account score follows
$\overline S_{\mathrm{coalition}}=r\overline d_{\mathrm{cf}}$
to numerical precision. This verifies the reuse law without strengthening
the aggregate attack. Historical comparison accounts are not verified benign,
and their natural activity is not matched to the synthetic identities. Since
synthetic participation is exactly $r$, the sweep does not by itself isolate
evidence-conditioned attribution from activity; the matched-twin experiment
in Figure~\ref{fig:amazon-main}(b) provides that control.

\begin{figure*}[t]
    \centering
    \begin{subfigure}[t]{0.32\textwidth}
        \centering
        \includegraphics[width=\linewidth]{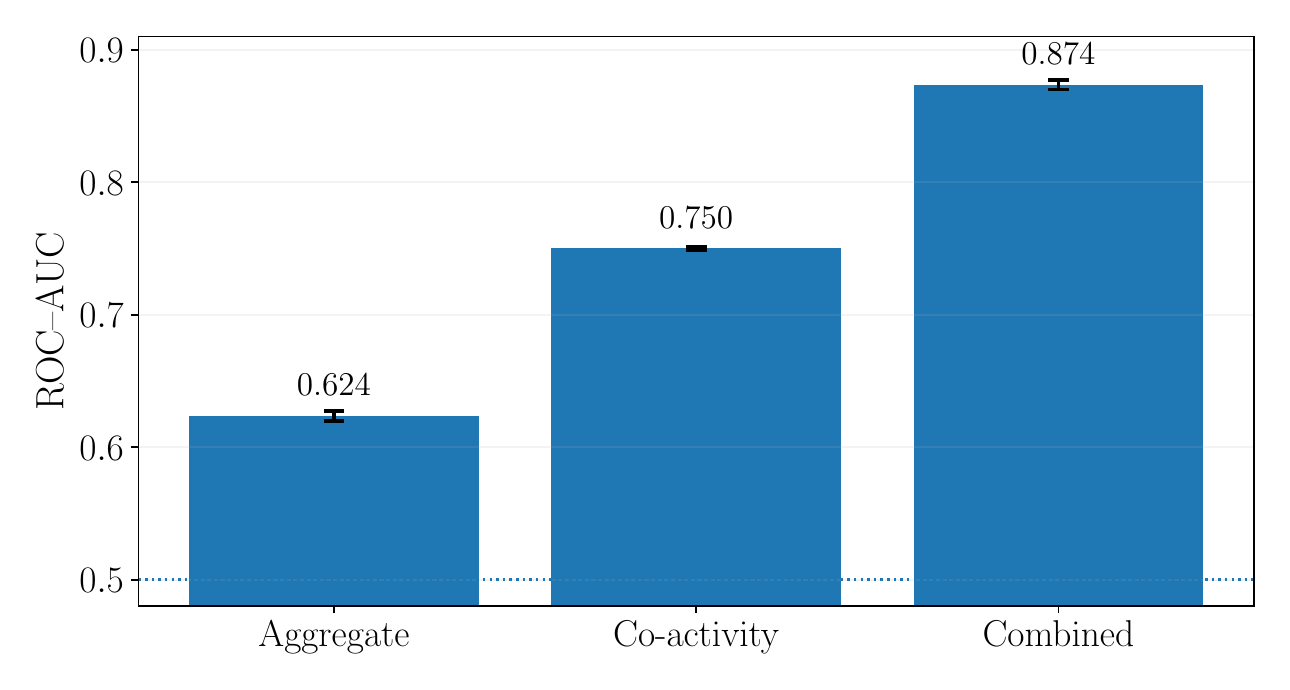}
        \caption{Complementarity across heterogeneous coordination mechanisms.}
        \label{fig:robust-complementarity}
    \end{subfigure}
    \hfill
    \begin{subfigure}[t]{0.32\textwidth}
        \centering
        \includegraphics[width=\linewidth]{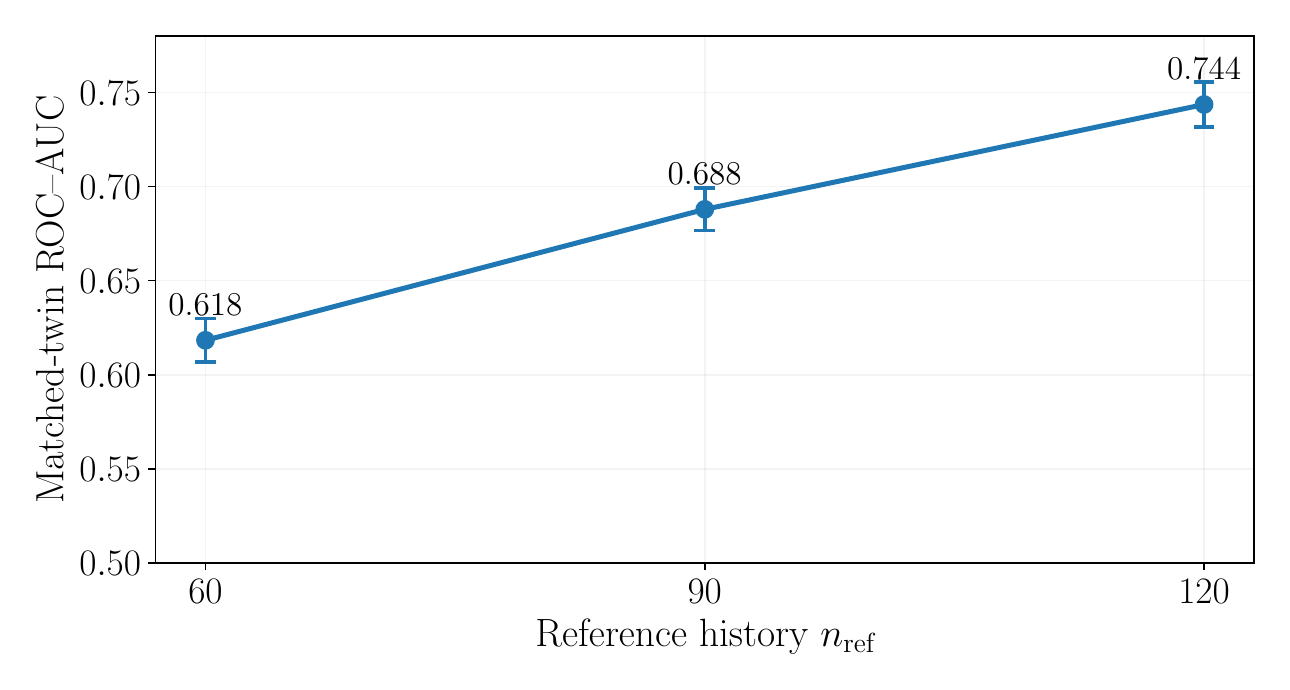}
        \caption{Sensitivity to item-reference history.}
        \label{fig:robust-reference}
    \end{subfigure}
    \hfill
    \begin{subfigure}[t]{0.32\textwidth}
        \centering
        \includegraphics[width=\linewidth]{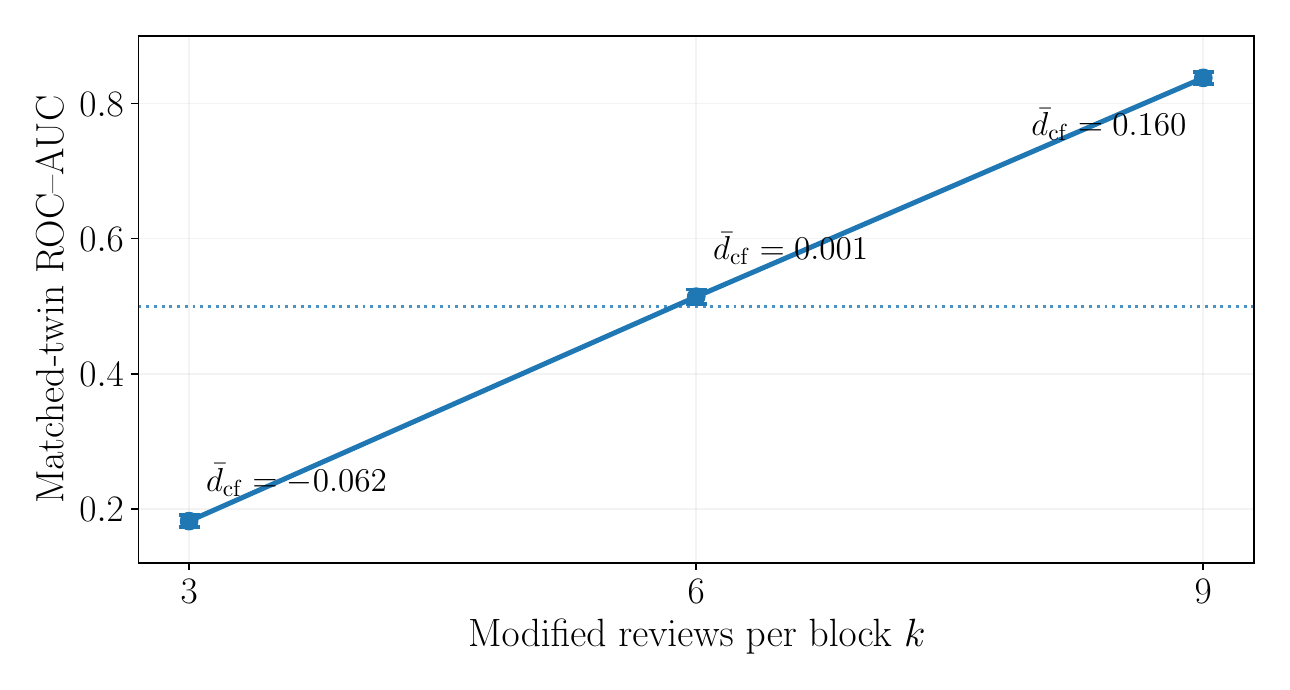}
        \caption{Context-dependent intervention effect.}
        \label{fig:robust-strength}
    \end{subfigure}
    \caption{Robustness and complementarity on Amazon Reviews.
    (a) Aggregate evidence and repeated co-activity capture deliberately
    different coordination mechanisms; a simple untrained combination
    improves coverage in a mixed population.
    (b) Matched-twin attribution degrades as the historical item reference
    becomes shorter, but remains above chance with only 60 reference reviews.
    (c) The same nominal five-star intervention can yield negative, null, or
    positive evidence depending on whether it moves a block toward or away
    from its historical reference; account recoverability follows the induced
    contextual distortion. Error bars show 95\% confidence intervals across
    30 seeds.}
    \label{fig:robustness}
\end{figure*}

\paragraph{Exact matched twins identify evidence-conditioned attribution.}
Figure~\ref{fig:amazon-main}(b) provides the decisive counterfactual
identification test. At $r=8$, each intervention identity has an exact clean
twin with identical participation frequency, item set, block, review position,
source record, and timestamp exposure. The pair differs only in whether these
matched exposures receive attacked or clean aggregate evidence, so frequency
ROC--AUC is exactly $0.500$.

Despite exact matching, counterfactual attribution achieves ROC--AUC
$0.744$ (95\% CI $[0.732,0.755]$), while deployable raw-world and
predictive-centered scores reach $0.748$ and $0.755$. The attacked identity
outranks its clean twin in $74.4\%$ of pairs, with misordering probability
$0.256$, and the mean paired gap $0.6109$ matches
$8\overline d_{\mathrm{cf}}$. This identifies the mechanism: after activity
and exposure context are neutralized, aggregate evidence remains recoverable
through repeated participation, reflecting \emph{evidence-conditioned
reuse} rather than activity or exposure context.

The $k=6$ point in Figure~\ref{fig:robustness}(c) is not a replication of this primary
experiment: it uses the stricter common population feasible at $k=9$;
the resulting feasibility-population effect on expected mean paired evidence is quantified exactly in Appendix~\ref{sec:supp-intervention-strength}.

\paragraph{Distribution-sensitive evidence survives exact mean preservation.}
Figure~\ref{fig:amazon-main}(c) imposes a still stronger control by
neutralizing both activity and block mean. Six ratings are changed through
three disjoint sum-preserving polarization pairs, leaving the 30-review block
mean exactly unchanged. Matched twins again have identical $r=8$ exposure,
so frequency ROC--AUC is $0.500$; mean-based attribution is also exactly at
chance.

Distribution-sensitive evidence remains strongly informative. Across 30
seeds, $W_1$ counterfactual attribution achieves ROC--AUC $0.909$
($[0.902,0.916]$), while Jensen--Shannon divergence reaches $0.967$
($[0.962,0.971]$). JS is particularly effective for this deliberately
polarizing intervention. Importantly, our contribution is not a new
distributional distance: $W_1$ and JS are alternative instantiations of the
same aggregate-evidence interface, followed by the same reuse-based
attribution rule. We use $W_1$ as the primary instantiation because the rating
support is ordered and its finite-sample behavior and single-action influence
admit particularly transparent analysis.

Their matched-pair misordering probabilities are $0.091$ and $0.033$.
At the block level, only $55.7\%$ of $W_1$ increments are positive, yet after
reuse across eight exposures, $90.9\%$ of account pairs favor the attacked
identity. Thus weak and heterogeneous event-level evidence can become reliable
through repeated participation in evidence-bearing contexts.

\paragraph{One-world investigative ranking burden.}
Against approximately 1.68 million unlabeled historical accounts per seed,
predictive-centered $W_1$ places the median planted identity at the 99.74th
percentile; recovering half requires inspecting 0.31\% of accounts.
After restricting background accounts to $F_u=8$, the median remains at the
92.6th percentile. Further ranking results are reported in Appendix~\ref{sec:supp-one-world-ranking}.

\subsection{Robustness and Complementarity}
\label{sec:robustness}

\paragraph{Identity-independent evidence construction yields
complementary evidence.}
Figure~\ref{fig:robustness}(a) tests whether the aggregate-first
information boundary produces evidence that is nonredundant with
identity-linked structure. The aggregate channel is constructed without
consulting identities or account relations; statistical independence
from co-activity is neither assumed nor required. To isolate
complementarity, we construct two coordination mechanisms in which
one channel is informative while the other is deliberately
uninformative. All synthetic accounts have degree $r=8$ and all items
have degree $m=6$, so participation frequency is uninformative.

In the \emph{evidence-only} branch, attacked ratings use randomized incidence
while clean controls use the identical graph. Aggregate evidence achieves
ROC--AUC $0.748$, whereas co-activity remains at $0.500$. In the
\emph{topology-only} branch, ratings remain clean but accounts repeatedly act
in fixed teams; aggregate evidence is $0.500$, while co-activity reaches
$1.000$. These crossover failures are the critical control: each evidence
source succeeds on a mechanism for which the other contains no discriminative
information, showing that the two scores are not merely correlated views of
the same coordination cue.

In a mixed population containing both mechanisms, aggregate evidence alone
achieves ROC--AUC $0.624$ and co-activity alone $0.750$. For the co-activity
channel, define
$C_u=\sum_{v\neq u}\max\{N_{uv}-1,0\}$,
where $N_{uv}$ is the number of items shared by accounts $u$ and $v$;
full construction details are in Appendices~\ref{sec:supp-coactivity} and~\ref{sec:supp-extra-complementarity}. Combining the aggregate-evidence score $S_u$ with $C_u$ using the simple
untrained score $Z(S_u)+Z(C_u)$ reaches $0.874$, exceeding the stronger
individual channel. The aggregate channel is constructed without consulting
identity relations; the two scores need not be statistically independent.
Their gain instead shows that they retain complementary information under
different mechanisms.

\paragraph{Shorter reference histories degrade attribution gracefully.}
Figure~\ref{fig:robustness}(b) varies the historical prefix while holding
attack worlds and $r=8$ identity assignments fixed. For
$n_{\mathrm{ref}}\in\{60,90,120\}$, matched-twin ROC--AUC is $0.618$,
$0.688$, and $0.744$; paired misordering falls from $0.382$ to $0.312$
and $0.256$. The same $\lambda=20$ is selected at all three lengths. Shorter
histories weaken attribution via noisier item context, but 
separation remains with only 60 historical reviews.

\paragraph{Intervention strength reveals a contextual reversal.}
Figure~\ref{fig:robustness}(c) varies the number $k$ of replaced ratings over
$k\in\{3,6,9\}$ on the common population for which both candidate blocks are
feasible at $k=9$. Across strengths, the sampled items, treatment blocks,
references, nine donor positions, synthetic identity population, account--item
incidence, and per-account reuse $r=8$ are fixed; the manipulated positions
are nested, with only the number actually replaced changing. At $k=3$, the
intervention produces negative mean paired evidence
($\overline d_{\mathrm{cf}}=-0.0625$) and below-chance attribution
(ROC--AUC $0.182$). At $k=6$, the average contextual effect is essentially
zero ($0.0006$) and attribution returns to chance (ROC--AUC $0.514$). Only at
$k=9$ does the same intervention produce reliably positive evidence
($0.1604$) and strong account separation (ROC--AUC $0.838$).

\paragraph{The mechanism replicates in a second Amazon category.}
We repeat the same protocol on \texttt{Electronics}, where 21,409 items
satisfy the 300-review eligibility criterion. Exact matched-twin attribution
reaches ROC--AUC $0.794$ with frequency at chance; distribution-sensitive
evidence remains strong under exact mean preservation; and the untrained
aggregate-plus-co-activity score reaches $0.895$. Reference-history and
intervention-strength trends are likewise preserved; full results are in Appendix~\ref{sec:supp-electronics}.

\section{Discussion and Limitations}
\label{sec:limitations}

\paragraph{Reference validity and temporal drift.}
The evidence channel is defined relative to a context-specific reference.
Predictive calibration accounts for finite-reference uncertainty, but not
seasonality, cold starts, contaminated history, or concept drift; a
misspecified reference can therefore create positive evidence without
manipulation. Deployment would require reference-stability monitoring, robust
updates, and abstention or wider uncertainty bounds for poorly supported
contexts. Our experiments characterize sensitivity to reference quality rather
than solve online reference management.

\paragraph{Visibility, persistence, and scope.}
Two failure modes matter. First, attacks that match the monitored reference,
alter unreported attributes, or remain below the reporting or noise resolution
may produce no aggregate evidence. Second, even when interval-level distortion
is detectable, one-off participation, disposable identities, or identities that
cannot be linked across contexts may prevent account-level attribution. Reuse
alone is also insufficient: attribution requires persistent alignment between
participation and positive evidence. The theoretical guarantees for this regime require positive conditional
drift and bounded increments and need not hold for strongly adaptive or
nonstationary campaigns.

\paragraph{Interpretation of the real-background evaluation.}
The Amazon experiments combine ecological realism in historical item
trajectories, timestamps, ratings, and temporal heterogeneity with known
planted intervention identities and, in the matched experiments,
exact synthetic clean counterfactuals. The fixed-attack reuse sweep
uses historical non-donor reviewers as an unlabeled comparison class rather
than verified benign ground truth; its ROC--AUC is therefore descriptive.
Naturally occurring labels would support complementary operational
validation, but cannot identify this mechanism without both manipulated
actions and the matched clean outcome for the same item and interval. We therefore do not claim to identify
historical fraudulent reviewers, estimate fraud prevalence, or establish
deployment precision and recall. The Electronics replication reduces
dependence on one category's composition, but both domains share the same
Amazon review interface; cross-platform validation remains future work.

\paragraph{Aggregate-first evidence and downstream use.}
The aggregate-first boundary is architectural, not a claim that richer
telemetry is unavailable. Separating \emph{what happened} at the outcome layer from \emph{who participated}
gives each account score a separately constructed anomaly component
and reduces circularity between anomaly definition and actor
identification.
The resulting signal can be used directly for ranking or
combined with graph, content, device, behavioral, or policy evidence; our
co-activity experiment demonstrates the value of this complementary
construction.

\paragraph{Attribution is intentionally minimal and investigative.} Exposure attribution assigns the same fixed interval increment to all
participants, weighted only by exposure. This parameter-free rule is
deliberate: it isolates whether fixed aggregate evidence contains
recoverable identity-level information without introducing a learned
allocation model. It is not an optimal responsibility-allocation rule
and does not establish individual causality. We identify differential second-stage allocation---allocating fixed aggregate
evidence using action-level information while preserving the aggregate-first
information boundary---as a distinct problem for future work.

\section{Conclusion}

We introduced a two-stage framework that converts context-level outcome
distortion into signed evidence without identities, then attributes those
fixed increments through repeated participation. The analysis characterizes
matched-exposure divergence, self-influence, finite-horizon separation, and
the exact paired-context reuse law.

Controlled experiments and counterfactual evaluations on two Amazon
categories validate the mechanism across controlled and heterogeneous
real-background conditions. In the primary domain, exact activity- and exposure-matched twins
achieve ROC--AUC $0.744$ with frequency at chance; under exact mean
preservation, $W_1$ and Jensen--Shannon reach $0.909$ and $0.967$ while
frequency and mean-based attribution remain at chance. Aggregate distortion
can remain observable even when account recoverability collapses, while
aggregate evidence complements co-activity.

More broadly, aggregate detectability does not imply actor recoverability:
attribution requires alignment between participation and already-fixed
evidence. This separation clarifies the distinct roles of outcome distortion
and identity reuse: the former supplies the signal, while the latter provides
the linkage needed to recover it at the account level. The resulting auditable
evidence layer complements identity-linked detectors and provides a principled
basis for investigating persistent, linkable campaigns.
\section*{Ethical Considerations}

This work studies coordinated-manipulation detection and therefore has
potential dual-use implications. The framework supports platform integrity
analysis by identifying aggregate outcome distortion and ranking accounts
whose repeated participation aligns with that evidence. It is not designed
for automatic enforcement, and a positive score does not establish intent,
causality, or wrongdoing. In deployment, such scores should be combined with
independent evidence, reference audits, and appropriate human review.

Our Amazon evaluation does not label historical fraudulent reviewers or treat
historical reviewers as verified benign. Historical reviews, timestamps, item
trajectories, and ratings provide the real behavioral background. In the
descriptive fixed-attack reuse sweep, non-donor reviewers form an unlabeled
comparison class; the controlled matched experiments instead use exact
synthetic clean twins. We therefore make no claims about fraud prevalence or
the behavior of any real Amazon user.

The framework deliberately restricts the aggregate evidence engine: identities
and per-action features are withheld until context-level evidence is fixed,
making score provenance explicit and reducing circularity between anomaly
definition and actor attribution. Reference misspecification, natural drift,
or incomplete telemetry can nevertheless produce misleading evidence, so the
system should abstain when reference quality is insufficient.

Finally, the paper focuses on persistent campaigns that reuse linkable
identities and provides no operational guidance for manipulation, evasion, or
targeting users. Experimental interventions use only offline copies of public
research data and do not affect Amazon, its users, or displayed ratings.

\section*{Generative AI Disclosure}

ChatGPT was used to assist with code drafting and debugging and with the
interpretation of experimental results. The authors reviewed all AI-assisted
outputs and independently verified the reported results. It was not used to
generate or modify the underlying data or reported numerical results.

\bibliographystyle{ACM-Reference-Format}
\bibliography{reference}

\appendix

\numberwithin{equation}{section}
\counterwithin{figure}{section}
\counterwithin{table}{section}

\makeatletter
\@addtoreset{algocf}{section}
\makeatother
\renewcommand{\thealgocf}{\thesection.\arabic{algocf}}
\counterwithin{theorem}{section}

\providecommand{\E}{\mathbb{E}}
\providecommand{\Prb}{\mathbb{P}}
\providecommand{\R}{\mathbb{R}}
\providecommand{\1}{\mathbf{1}}

\section{Guide to the Appendices}
\label{sec:supp-guide}

These appendices provide proofs, experimental protocols,
counterfactual-construction details, supporting results, and reproducibility
information supporting the main text. They follow the paper's
aggregate-to-attribution structure: Appendix~\ref{sec:supp-theory} gives the
theoretical proofs, Appendices~\ref{sec:supp-rotation}--\ref{sec:supp-evaluation}
specify the controlled and Amazon experiments and evaluation procedures,
Appendix~\ref{sec:supp-extra-results} reports supporting experimental results,
and Appendix~\ref{sec:supp-reproducibility} documents reproducibility checks.

The appendices are organized to support the corresponding claims and
experiments in the main text.
Appendix~\ref{sec:supp-theory} gives the theoretical proofs.
Appendix~\ref{sec:supp-rotation} specifies the controlled rotation protocol.
Appendix~\ref{sec:supp-amazon-data} documents Amazon preprocessing and
chronological splits.
Appendix~\ref{sec:supp-calibration} specifies reference estimation,
calibration, and reference-history sensitivity.
Appendix~\ref{sec:supp-counterfactual} gives the counterfactual Amazon
construction.
Appendix~\ref{sec:supp-evaluation} defines the reported evidence channels,
baselines, metrics, and statistical procedures.
Appendix~\ref{sec:supp-extra-results} reports additional results supporting the
main-text experiments.
Appendix~\ref{sec:supp-reproducibility} documents implementation details and
experimental invariants. Table~\ref{tab:supp-map} maps the main-text claims to
the corresponding appendices.

\begin{table}[t]
\caption{Map from main-text claims to appendix support.}
\label{tab:supp-map}
\centering
\small
\setlength{\tabcolsep}{4pt}
\renewcommand{\arraystretch}{1.08}
\begin{tabularx}{\columnwidth}{@{}
  >{\raggedright\arraybackslash}X
  >{\raggedright\arraybackslash}X@{}}
\toprule
\textbf{Main-text claim} & \textbf{Appendix support} \\
\midrule

Matched-exposure score divergence
&
Appendix~\ref{sec:supp-score-gap}
\\

Non-identifiability boundary
&
Appendix~\ref{sec:supp-nonidentifiability}
\\
Self-influence bound
&
Appendix~\ref{sec:supp-influence}
\\

Finite-horizon separation
&
Appendix~\ref{sec:supp-concentration}
\\

Paired-context reuse law
&
Appendix~\ref{sec:supp-paired-theory}
\\

Controlled null calibration
&
Appendix~\ref{sec:supp-rotation-calibration}
\\

Exposure--intermittency boundary
&
Appendices~\ref{sec:supp-rotation-ratio} and
\ref{sec:supp-extra-controlled}
\\

Amazon preprocessing and splits
&
Appendix~\ref{sec:supp-amazon-data}
\\

Finite-reference calibration
&
Appendix~\ref{sec:supp-calibration}
\\

Reference-history robustness
&
Appendix~\ref{sec:supp-reference-history}
\\

Fixed-attack identity reassignment
&
Appendix~\ref{sec:supp-fixed-attack}
\\

Exact matched-twin attribution
&
Appendix~\ref{sec:supp-twins}
\\

Mean-preserving shape intervention
&
Appendix~\ref{sec:supp-shape}
\\

Evidence channels and statistical inference
&
Appendix~\ref{sec:supp-evaluation}
\\

Intervention-strength reversal
&
Appendix~\ref{sec:supp-intervention-strength}
\\

One-world investigative ranking burden
&
Appendix~\ref{sec:supp-one-world-ranking}

\\

Leave-one-out self-influence ablation
&
Appendix~\ref{sec:supp-self-influence-ablation}

\\
Identity-independent evidence construction and complementarity
&
Appendix~\ref{sec:supp-extra-complementarity}
\\

Cross-category replication on \texttt{Electronics}
&
Appendix~\ref{sec:supp-electronics}
\\

Reproducibility and invariants
&
Appendix~\ref{sec:supp-reproducibility}
\\

\bottomrule
\end{tabularx}
\end{table}

Historical Amazon reviewers provide the ecological background and are never
labeled fraudulent or treated as verified benign accounts. In the
fixed-attack reuse sweep, reviewers occupying non-donor positions of the
selected treated blocks form an unlabeled historical comparison class for
descriptive ROC--AUC evaluation. The exact matched-twin and mean-preserving
experiments instead use synthetic attacked and clean identities. Coalition
membership is known by construction for the planted intervention identities,
and exact clean counterfactual outcomes are available in the matched
experiments described in Appendix~\ref{sec:supp-counterfactual}.

\section{Complete Theory}
\label{sec:supp-theory}

This section gives the complete theoretical support for the identifiability
analysis in the main text. We first formalize matched-exposure score
divergence and its non-identifiability boundary, then bound direct
self-influence and derive finite-horizon separation under predictable positive
drift. We finally establish the exact paired-context reuse law and related
extensions. Assumptions specific to individual results are stated locally;
the exposure-attribution rule itself does not require stationarity, matched
marginal exposure, or positive drift.

\subsection{Notation}
\label{sec:supp-notation}

Table~\ref{tab:supp-notation} summarizes the notation used in the main text
and appendices.

\begin{table*}[t]
\caption{Core notation used in the main text and appendices.}
\label{tab:supp-notation}
\centering
\small
\setlength{\tabcolsep}{5pt}
\renewcommand{\arraystretch}{1.08}
\begin{tabularx}{\textwidth}{@{}lXlX@{}}
\toprule
\textbf{Symbol} & \textbf{Meaning}
& \textbf{Symbol} & \textbf{Meaning} \\
\midrule

$t$
& Interval index
&
$c_t$
& Context associated with interval $t$
\\

$H_t$
& Empirical outcome histogram in interval $t$
&
$n_t$
& Number of actions contributing to $H_t$
\\

$h_t$
& Reporting resolution
&
$H_c^{\mathrm{ref}}$
& Declared reference distribution for context $c$
\\

$b_c(n,h)$
& Matched finite-sample null expectation
&
$d_t$
& Signed aggregate evidence increment
\\

$a_{u,t}$
& Number of actions contributed by account $u$ in interval $t$
&
$S_u(T)$
& Cumulative exposure-attribution score
\\

$I_t$
& Indicator that the campaign is active
&
$p$
& Campaign activity probability $\Pr(I_t=1)$
\\

$A_{g,t}$
& Binary participation indicator for group $g\in\{\mathrm c,\mathrm n\}$
&
$q_{g,i}$
& $\Pr(A_{g,t}=1\mid I_t=i)$
\\

$\nu_{g,i}$
& $\E[d_t\mid A_{g,t}=1,I_t=i]$
&
$\mu_i$
& $\E[d_t\mid I_t=i]$
\\

$\Delta$
& Expected per-interval coalition--normal score gap
&
$\eta$
& Bound on participation-conditioned evidence deviation
\\

$d_i^{\mathrm{cf}}$
& Paired attack--clean evidence increment for Amazon item $i$
&
$\overline d_{\mathrm{cf}}$
& Mean paired evidence increment
\\

$R$
& Balanced synthetic account--item incidence matrix
&
$(L,M,m,r)$&
items, accounts, synthetic exposures per item
(item-side incidence degree), and items per account
(account-side incidence degree or reuse)
\\

$G_j(r)$
& Attacked--clean score gap for matched account pair $j$
&
$R_{\mathrm{exp}}$
& Coalition-to-normal marginal exposure ratio
\\

\bottomrule
\end{tabularx}
\end{table*}


\subsection{Matched-Exposure Score-Gap Theorem}
\label{sec:supp-score-gap}

\begin{theorem}[Matched-exposure score gap]
\label{thm:supp-divergence}
Assume $p$, $q_{g,i}$, and $\nu_{g,i}$ are stationary in $t$.
Suppose coalition accounts participate only in active intervals
($q_{\mathrm c,0}=0$), normal participation is independent of campaign state
($q_{\mathrm n,0}=q_{\mathrm n,1}=q_{\mathrm n}$), and marginal exposure is
matched, $p q_{\mathrm c,1}=q_{\mathrm n}$. Then
\begin{equation}
\label{eq:supp-expected-gap}
\Delta
=
q_{\mathrm n}
\left[
\nu_{\mathrm c,1}
-
p\nu_{\mathrm n,1}
-
(1-p)\nu_{\mathrm n,0}
\right],
\end{equation}
and hence
\[
\E[S_{\mathrm c}(T)-S_{\mathrm n}(T)]
=
T\Delta.
\]
\end{theorem}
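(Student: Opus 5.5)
The plan is to compute each group's expected per-interval score increment $\E[A_{g,t}d_t]$ by conditioning on campaign state and on participation, and then to take the difference. First, $\Delta$ must be made explicit as $\Delta=\E[A_{\mathrm c,t}d_t]-\E[A_{\mathrm n,t}d_t]$. By stationarity this quantity does not depend on $t$.

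For a fixed $t$ and group $g$, since $A_{g,t}\in\{0,1\}$ the law of total expectation gives
\[
\E[A_{g,t}d_t]=\sum_{i\in\{0,1\}}\Pr(I_t=i)\,\Pr(A_{g,t}=1\mid I_t=i)\,\E[d_t\mid A_{g,t}=1,I_t=i],
\]
where $\Pr(I_t=1)=p$ and $\Pr(I_t=0)=1-p$. In the paper's notation this reads
\[
\E[A_{g,t}d_t]=p\,q_{g,1}\nu_{g,1}+(1-p)\,q_{g,0}\nu_{g,0}.
\]
This identity already absorbs any dependence between participation and $d_t$, including self-influence, because it uses only conditional expectations given participation. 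No independence between $A_{g,t}$ and $d_t$ is needed. One small point to note: when $q_{g,i}=0$, $\nu_{g,i}$ may be undefined, but it enters only multiplied by zero, so that term is interpreted as $0$.

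Next, substitute the hypotheses.
\begin{itemize}
\item For the coalition, $q_{\mathrm c,0}=0$ leaves $p\,q_{\mathrm c,1}\nu_{\mathrm c,1}$. The matched-exposure condition $p\,q_{\mathrm c,1}=q_{\mathrm n}$ turns this into $q_{\mathrm n}\nu_{\mathrm c,1}$.
\item For the normal group, $q_{\mathrm n,0}=q_{\mathrm n,1}=q_{\mathrm n}$ gives $q_{\mathrm n}\bigl[p\nu_{\mathrm n,1}+(1-p)\nu_{\mathrm n,0}\bigr]$.
\end{itemize}
Subtracting the second from the first yields exactly Equation~\eqref{eq:supp-expected-gap}.

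Finally, linearity of expectation applied to $S_{\mathrm c}(T)-S_{\mathrm n}(T)=\sum_{t=1}^T(A_{\mathrm c,t}-A_{\mathrm n,t})d_t$ gives a sum of $T$ identical terms $\Delta$, so the expected cumulative gap is $T\Delta$. For the claim that frequency cannot distinguish the groups, I would observe that $\Pr(A_{\mathrm c,t}=1)=p\,q_{\mathrm c,1}=q_{\mathrm n}=\Pr(A_{\mathrm n,t}=1)$, so the participation marginals coincide.

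There is no real obstacle here; the argument is bookkeeping. The only points requiring care are the following:
\begin{itemize}
\item stating that $d_t$ is integrable, so that all expectations exist;
\item treating $\nu_{g,i}$ on null events as above;
\item making clear that stationarity is used only so that the per-interval expectation is the same for every $t$.
\end{itemize}
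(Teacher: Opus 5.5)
Your proposal is correct and follows essentially the same route as the paper's proof: condition on campaign state to get $\E[A_{g,t}d_t]=p\,q_{g,1}\nu_{g,1}+(1-p)\,q_{g,0}\nu_{g,0}$, substitute $q_{\mathrm c,0}=0$, $q_{\mathrm n,0}=q_{\mathrm n,1}=q_{\mathrm n}$, and $p\,q_{\mathrm c,1}=q_{\mathrm n}$, then sum over $t$ using linearity and stationarity. Your added remarks on integrability of $d_t$, on interpreting $q_{g,i}\nu_{g,i}$ as $0$ when $q_{g,i}=0$, and on making $\Delta=\E[A_{\mathrm c,t}d_t]-\E[A_{\mathrm n,t}d_t]$ explicit are sound points that the paper leaves implicit.
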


\begin{proof}
For group $g\in\{\mathrm c,\mathrm n\}$, conditioning on campaign state gives
\begin{equation}
\label{eq:supp-group-increment}
\E[A_{g,t}d_t]
=
p q_{g,1}\nu_{g,1}
+
(1-p)q_{g,0}\nu_{g,0}.
\end{equation}
Thus coalition accounts satisfy
$\E[A_{\mathrm c,t}d_t]=p q_{\mathrm c,1}\nu_{\mathrm c,1}$, while normal
accounts satisfy
$\E[A_{\mathrm n,t}d_t]
=
q_{\mathrm n}[p\nu_{\mathrm n,1}+(1-p)\nu_{\mathrm n,0}]$.
Substituting $p q_{\mathrm c,1}=q_{\mathrm n}$ yields
$\E[A_{\mathrm c,t}d_t-A_{\mathrm n,t}d_t]=\Delta$.
Linearity of expectation and stationarity then give
\[
\E[S_{\mathrm c}(T)-S_{\mathrm n}(T)]
=
T\Delta.
\]
\end{proof}

Under matched marginal exposure,
$\Pr(A_{\mathrm c,t}=1)=p q_{\mathrm c,1}
=q_{\mathrm n}=\Pr(A_{\mathrm n,t}=1)$.
Participation frequency therefore does not separate the two groups; the score
gap is determined by the evidence encountered conditional on participation.

\paragraph{Time-varying extension.}
If $p$, $q_{g,i}$, and $\nu_{g,i}$ vary with $t$, suppose
$q_{\mathrm c,0,t}=0$,
$q_{\mathrm n,0,t}=q_{\mathrm n,1,t}=q_{\mathrm n,t}$, and
$p_tq_{\mathrm c,1,t}=q_{\mathrm n,t}$. Define
\[
\Delta_t
=
q_{\mathrm n,t}
\left[
\nu_{\mathrm c,1,t}
-
p_t\nu_{\mathrm n,1,t}
-
(1-p_t)\nu_{\mathrm n,0,t}
\right].
\]
Applying the same argument interval by interval gives
\begin{equation}
\label{eq:supp-nonstationary-gap}
\E[S_{\mathrm c}(T)-S_{\mathrm n}(T)]
=
\sum_{t=1}^{T}\Delta_t.
\end{equation}
No independence across intervals is required for this expectation identity.

\paragraph{Count-valued extension.}
Theorem~\ref{thm:supp-divergence} also extends to count-valued exposure
$A_{g,t}\in\{0,1,2,\ldots\}$, provided the required first moments
exist. Let
$\rho_{g,i}=\E[A_{g,t}\mid I_t=i]$ and, whenever
$\rho_{g,i}>0$, define the action-weighted evidence mean
\[
\nu^{\mathrm w}_{g,i}
=
\frac{\E[A_{g,t}d_t\mid I_t=i]}{\rho_{g,i}}.
\]
If $\rho_{\mathrm c,0}=0$,
$\rho_{\mathrm n,0}=\rho_{\mathrm n,1}=\rho_{\mathrm n}$, and
$p\rho_{\mathrm c,1}=\rho_{\mathrm n}$, then
\[
\E\!\left[
S_{\mathrm c}(T)-S_{\mathrm n}(T)
\right]
=
T\rho_{\mathrm n}
\left[
\nu^{\mathrm w}_{\mathrm c,1}
-p\nu^{\mathrm w}_{\mathrm n,1}
-(1-p)\nu^{\mathrm w}_{\mathrm n,0}
\right].
\]
When $\rho_{g,i}=0$, the corresponding contribution is zero and
$\nu^{\mathrm w}_{g,i}$ need not be defined. Thus the same
matched-exposure mechanism applies when an account may contribute
multiple actions within one interval.

\subsection{Non-Identifiability Under Matched Evidence Exposure}
\label{sec:supp-nonidentifiability}

The matched-exposure theorem identifies differential alignment with fixed
aggregate evidence as the source of account-level separation. The converse
boundary is immediate for exposure attribution.

\begin{proposition}[Non-identifiability under matched evidence exposure]
\label{prop:supp-nonidentifiability}
Let $u$ and $v$ denote representative accounts from two groups. Suppose their
exposure-weighted fixed-evidence processes have the same joint
distribution:
\[
\{A_{u,t}d_t\}_{t=1}^T
\overset{d}{=}
\{A_{v,t}d_t\}_{t=1}^T.
\]
Then
\[
S_u(T)
=
\sum_{t=1}^T A_{u,t}d_t
\quad\text{and}\quad
S_v(T)
=
\sum_{t=1}^T A_{v,t}d_t
\]
have the same distribution. In particular,
\[
\E[S_u(T)]
=
\E[S_v(T)],
\]
so exposure attribution cannot separate the two groups in expectation.
\end{proposition}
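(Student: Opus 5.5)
The plan is to treat the proposition as a pushforward statement. Write $X^{(u)}=(A_{u,1}d_1,\ldots,A_{u,T}d_T)$ and $X^{(v)}$ analogously. By hypothesis these are random vectors in $\R^T$ with the same law. The summation map $\sigma:\R^T\to\R$, $\sigma(x)=\sum_{t=1}^T x_t$, is continuous and hence Borel measurable. Moreover, $S_u(T)=\sigma(X^{(u)})$ and $S_v(T)=\sigma(X^{(v)})$. For any Borel set $B\subseteq\R$, I would then write
\[
\Pr[S_u(T)\in B]=\Pr[X^{(u)}\in\sigma^{-1}(B)]=\Pr[X^{(v)}\in\sigma^{-1}(B)]=\Pr[S_v(T)\in B].
\]
This gives equality in distribution directly, with no independence or stationarity needed.

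For the expectation claim, I would first record an integrability condition. Either each $\E|A_{g,t}d_t|<\infty$, which holds automatically in the paper's setting because $W_1$-based increments on bounded support are bounded and exposures have finite first moments, or the expectation is interpreted in the extended sense. Equal laws then give equal expectations of any integrable function of the variables, in particular of the identity. Equivalently, linearity gives $\E[S_u(T)]=\sum_t\E[A_{u,t}d_t]$, and each summand agrees with its $v$ counterpart because the one-dimensional marginals coincide. The marginal argument only needs equal marginals, so it is a slightly weaker hypothesis sufficient for the expectation statement.

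Finally, I would interpret the conclusion. Any functional of the score distribution is identical for the two groups, including the mean gap $\E[S_u(T)-S_v(T)]=0$, quantiles, and ranking probabilities against a common third account. This is the converse to Theorem~\ref{thm:supp-divergence}: a nonzero $\Delta$ requires the exposure-weighted evidence processes to differ.

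There is no genuine obstacle here. The only point requiring care is measurability and integrability: making sure the sum is a measurable function of the joint vector and that expectations are finite. I would handle this with a one-line boundedness remark, namely $|d_t|\le\operatorname{diam}(\mathcal X)$ for $W_1$ evidence, together with finite expected exposure.
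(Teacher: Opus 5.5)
Your pushforward argument is the paper's proof: $S_u(T)$ and $S_v(T)$ are images of identically distributed vectors under the measurable summation map, so their laws coincide and the expectation identity follows. Your integrability remark and your observation that equal one-dimensional marginals already suffice for the mean are correct refinements.

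One caveat concerns your interpretive aside, which goes beyond the statement and is false as written. Ranking probabilities against a common third account $w$ depend on the joint laws of $(S_u(T),S_w(T))$ and $(S_v(T),S_w(T))$, not on the marginal score law. Because all accounts share the same $d_t$, these probabilities can differ. For example, take $T=1$, $d_1=1$, $A_{u,1}\sim\mathrm{Bernoulli}(1/2)$, $A_{w,1}=A_{u,1}$ and $A_{v,1}=1-A_{u,1}$. Then $\Pr[S_u>S_w]=0$ but $\Pr[S_v>S_w]=1/2$, so that claim needs an additional joint-law or independence assumption.
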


\begin{proof}
The cumulative exposure score is a deterministic measurable function of the
exposure-weighted evidence sequence. Equality in distribution of those
sequences therefore implies equality in distribution of their sums.
\end{proof}

This establishes the complementary boundary to
Theorem~\ref{thm:supp-divergence}: reuse alone is not identifiable evidence.
Account-level recoverability requires the distribution of participation over
already-fixed evidence to differ between the groups.


\subsection{Influence-Aware Divergence and Self-Influence}
\label{sec:supp-influence}

Let $\mu_i=\E[d_t\mid I_t=i]$ and suppose that
$|\nu_{g,i}-\mu_i|\leq\eta$ for every relevant group--state pair.
The parameter $\eta$ captures how much the evidence seen conditional on a
particular account's participation can differ from the corresponding
state-level evidence mean.

\begin{corollary}[Influence-aware divergence]
\label{cor:supp-influence}
Under the conditions of Theorem~\ref{thm:supp-divergence},
\begin{equation}
\label{eq:supp-influence-bound}
\Delta
\geq
q_{\mathrm n}
\left[
(1-p)(\mu_1-\mu_0)-2\eta
\right].
\end{equation}
Consequently, $(1-p)(\mu_1-\mu_0)>2\eta$ is sufficient for
$\Delta>0$.
\end{corollary}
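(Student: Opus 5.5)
We start from the closed form for $\Delta$ given by Theorem~\ref{thm:supp-divergence}, namely $\Delta = q_{\mathrm n}\bigl[\nu_{\mathrm c,1} - p\nu_{\mathrm n,1} - (1-p)\nu_{\mathrm n,0}\bigr]$. The idea is to replace each conditional evidence mean $\nu_{g,i}$ by its state-level counterpart $\mu_i$ plus a perturbation of size at most $\eta$. Then we bound the perturbations in the direction that makes $\Delta$ smallest.

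Since $q_{\mathrm n}\ge 0$, it suffices to lower-bound the bracket. Write $\nu_{\mathrm c,1}=\mu_1+\epsilon_{\mathrm c,1}$, $\nu_{\mathrm n,1}=\mu_1+\epsilon_{\mathrm n,1}$ and $\nu_{\mathrm n,0}=\mu_0+\epsilon_{\mathrm n,0}$, with $|\epsilon_{g,i}|\le\eta$. These are the three relevant group--state pairs. The pair $(\mathrm c,0)$ never enters, because $q_{\mathrm c,0}=0$, so no assumption about $\nu_{\mathrm c,0}$ is needed, and it may even be undefined. Substituting, the bracket becomes
\[
\mu_1 - p\mu_1 - (1-p)\mu_0 + \epsilon_{\mathrm c,1} - p\,\epsilon_{\mathrm n,1} - (1-p)\,\epsilon_{\mathrm n,0}.
\]
This equals $(1-p)(\mu_1-\mu_0)$ plus an error term.

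The error term is bounded below by $-\eta - p\eta - (1-p)\eta = -2\eta$, using $p\in[0,1]$ so that the weights $p$ and $1-p$ are nonnegative and sum to one. Multiplying by $q_{\mathrm n}$ gives \eqref{eq:supp-influence-bound}.

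The sufficiency claim then follows in one step. If $(1-p)(\mu_1-\mu_0)>2\eta$, the bracket is strictly positive, so $\Delta>0$ provided $q_{\mathrm n}>0$. The one point needing care is this positivity of $q_{\mathrm n}$. If $q_{\mathrm n}=0$, matched exposure forces $pq_{\mathrm c,1}=0$, so neither group has any exposure and $\Delta=0$. The bound still holds, but strict positivity requires $q_{\mathrm n}>0$, and we will state this nondegeneracy explicitly. No other obstacle is expected; the argument is a direct triangle-inequality perturbation of the exact identity.
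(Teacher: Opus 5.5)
Your proposal is correct and matches the paper's proof. The paper likewise substitutes the one-sided bounds $\nu_{\mathrm c,1}\geq\mu_1-\eta$, $\nu_{\mathrm n,1}\leq\mu_1+\eta$, $\nu_{\mathrm n,0}\leq\mu_0+\eta$ into the closed form for $\Delta$ and collects terms, and your remark that strict positivity additionally needs $q_{\mathrm n}>0$ is a valid refinement that the paper leaves implicit.
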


\begin{proof}
From $|\nu_{g,i}-\mu_i|\leq\eta$,
$\nu_{\mathrm c,1}\geq\mu_1-\eta$,
$\nu_{\mathrm n,1}\leq\mu_1+\eta$, and
$\nu_{\mathrm n,0}\leq\mu_0+\eta$.
Substituting these inequalities into
Equation~\eqref{eq:supp-expected-gap} gives
\begin{align*}
\Delta
&\geq
q_{\mathrm n}
\left[
(\mu_1-\eta)
-p(\mu_1+\eta)
-(1-p)(\mu_0+\eta)
\right] \\
&=
q_{\mathrm n}
\left[
(1-p)(\mu_1-\mu_0)-2\eta
\right].
\end{align*}
\end{proof}

\begin{lemma}[One-action Wasserstein sensitivity]
\label{lem:supp-w1-sensitivity}
Let
$\widehat P=B^{-1}\sum_{j=1}^{B}\delta_{x_j}$
be an empirical distribution on bounded support $\mathcal X$, and let
$\widehat P'$ differ from $\widehat P$ by replacing one observation
$x_k$ with $x_k'$. For any fixed reference $Q$,
\begin{equation}
\label{eq:supp-w1-sensitivity}
\left|
W_1(\widehat P,Q)-W_1(\widehat P',Q)
\right|
\leq
\frac{\operatorname{diam}(\mathcal X)}{B}.
\end{equation}
\end{lemma}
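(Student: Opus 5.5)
The argument combines two standard facts: $W_1$ is a metric, and replacing one atom of an empirical measure moves it by at most $\operatorname{diam}(\mathcal X)/B$ in $W_1$.

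\emph{Step 1 (reverse triangle inequality).} Because $W_1$ is a metric on probability measures on $\mathcal X$, and bounded support guarantees finite first moments, we have for any fixed $Q$
\[
\left|W_1(\widehat P,Q)-W_1(\widehat P',Q)\right|\le W_1(\widehat P,\widehat P').
\]
This removes the reference entirely, so the bound holds uniformly in $Q$.

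\emph{Step 2 (explicit coupling).} We bound $W_1(\widehat P,\widehat P')$ by exhibiting one coupling. The two measures share the atoms $x_j$ for $j\neq k$, each with mass $1/B$. We couple each shared atom with itself at zero cost and transport the remaining mass $1/B$ from $x_k$ to $x_k'$. The transport cost is
\[
\frac{1}{B}\,\|x_k-x_k'\| \le \frac{\operatorname{diam}(\mathcal X)}{B}.
\]
Since $W_1$ is the infimum over couplings, $W_1(\widehat P,\widehat P')\le \operatorname{diam}(\mathcal X)/B$. Combining this with Step 1 gives Equation~\eqref{eq:supp-w1-sensitivity}.

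An alternative route uses Kantorovich--Rubinstein duality. For any $1$-Lipschitz $f$,
\[
\left|\int f\,d\widehat P-\int f\,d\widehat P'\right| = \frac{1}{B}\left|f(x_k)-f(x_k')\right| \le \frac{\operatorname{diam}(\mathcal X)}{B},
\]
and taking the supremum over $f$ gives the same bound.

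\emph{Step 3 (rating instance).} On the ordered rating support $\{1,\ldots,5\}$ with the usual metric, the diameter is $4$. This recovers the main-text bound $4/B$. On this support $W_1$ also equals the $L^1$ distance between CDFs, and one replacement changes each CDF value by at most $1/B$ over at most $4$ unit gaps, which confirms the bound directly.

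\emph{Main point to check.} The only real obstacle is confirming that the histogram-based $W_1$ used in Section~\ref{sec:evidence}, on the ordered binned support after projection $\Pi_h$, is a genuine metric with the stated ground distance. If $Q$ is projected to $h$ bins, the diameter refers to the binned support $\mathcal X$. The argument is unchanged provided both $\widehat P$ and $Q$ live on the same $\mathcal X$.
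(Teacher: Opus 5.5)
Your proof is correct and follows the paper's own argument essentially step for step. The reverse triangle inequality reduces the claim to bounding $W_1(\widehat P,\widehat P')$, and the coupling that keeps the $B-1$ shared atoms fixed while moving mass $1/B$ from $x_k$ to $x_k'$ gives $\operatorname{diam}(\mathcal X)/B$; your Kantorovich--Rubinstein and CDF checks are valid extras that the proof does not need.
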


\begin{proof}
By the reverse triangle inequality,
\[
\left|
W_1(\widehat P,Q)-W_1(\widehat P',Q)
\right|
\leq
W_1(\widehat P,\widehat P').
\]
The two empirical distributions share $B-1$ observations. Coupling those
observations to themselves and transporting only the remaining mass $1/B$
from $x_k$ to $x_k'$ costs at most
$\operatorname{diam}(\mathcal X)/B$. Since $W_1$ is the minimum coupling
cost, the result follows.
\end{proof}

For fixed reference, block size, and reporting resolution, replacing one
action leaves the matched finite-sample centering term unchanged. Hence the
same sensitivity bound applies to the signed evidence increment $d_t$:
one-action replacement sensitivity is at most
$\operatorname{diam}(\mathcal X)/B$, equal to $4/B$ for ratings on
$\{1,\ldots,5\}$. This is a fixed-size replacement bound. The deletion-based
leave-one-out ablation reported later in
Appendix~\ref{sec:supp-self-influence-ablation} addresses a related
but distinct question by removing the scored action from both paired worlds
and recomputing the resulting 29-review histograms.

The bound isolates only the direct mechanical dependence created by including
the scored action in its interval histogram. Any additional
participation-conditioned dependence---for example, selective participation
in particular campaign states or evidence-bearing contexts---remains part of
$\eta$ and is not covered by the one-action sensitivity bound.

\subsection{Finite-Horizon Separation}
\label{sec:supp-concentration}

Let $X_t=A_{\mathrm c,t}d_t-A_{\mathrm n,t}d_t$, so that
$S_{\mathrm c}(T)-S_{\mathrm n}(T)=\sum_{t=1}^{T}X_t$.
Let $\{\mathcal F_t\}_{t\geq0}$ be a filtration such that $X_t$ is
$\mathcal F_t$-measurable.

\begin{corollary}[Finite-horizon separation]
\label{cor:supp-finite}
Suppose there exists a uniform predictable-drift lower bound
$\underline{\Delta}>0$ such that
$\E[X_t\mid\mathcal F_{t-1}]\geq\underline{\Delta}$ for every $t$,
and suppose
$\left|X_t-\E[X_t\mid\mathcal F_{t-1}]\right|\leq c$
almost surely. Then
\begin{equation}
\label{eq:supp-finite-horizon-bound}
\Pr\!\left[
S_{\mathrm c}(T)\leq S_{\mathrm n}(T)
\right]
\leq
\exp\!\left(
-\frac{T\underline{\Delta}^{\,2}}{2c^2}
\right).
\end{equation}
\end{corollary}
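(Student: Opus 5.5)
The plan is to split the difference process into a martingale part and a predictable drift part, and then apply the Azuma--Hoeffding inequality to the martingale.

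For each $t$ define
\[
D_t = X_t-\E[X_t\mid\mathcal F_{t-1}], \qquad M_T=\sum_{t=1}^T D_t .
\]
Each $D_t$ is $\mathcal F_t$-measurable because $X_t$ is, and by construction $\E[D_t\mid\mathcal F_{t-1}]=0$. So $\{M_T\}$ is a martingale with respect to $\{\mathcal F_t\}$. The assumed bound $|X_t-\E[X_t\mid\mathcal F_{t-1}]|\le c$ says exactly that $|D_t|\le c$ almost surely. This also gives integrability, provided $X_t$ is integrable so that the conditional expectations are defined; that holds, for example, whenever the evidence increments are bounded as in Lemma~\ref{lem:supp-w1-sensitivity}.

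Next, use $S_{\mathrm c}(T)-S_{\mathrm n}(T)=\sum_t X_t$ to write
\[
S_{\mathrm c}(T)-S_{\mathrm n}(T)=M_T+\sum_{t=1}^T \E[X_t\mid\mathcal F_{t-1}].
\]
The uniform predictable-drift assumption bounds the second sum below by $T\underline{\Delta}$ almost surely. Hence the non-win event is contained in a martingale deviation event:
\[
\{S_{\mathrm c}(T)\le S_{\mathrm n}(T)\}\subseteq\{M_T\le -T\underline{\Delta}\}.
\]

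Now apply the one-sided Azuma--Hoeffding inequality to a martingale with differences bounded by $c$:
\[
\Pr[M_T\le -\lambda]\le \exp\!\left(-\frac{\lambda^2}{2Tc^2}\right).
\]
Setting $\lambda=T\underline{\Delta}$ gives $\exp(-T\underline{\Delta}^{\,2}/(2c^2))$, which is Equation~\eqref{eq:supp-finite-horizon-bound}. Solving $\exp(-T\underline{\Delta}^{\,2}/(2c^2))\le\delta$ for $T$ then gives the sample-size statement in the main text.

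There is no substantive obstacle; two checks need care. First, Azuma--Hoeffding requires the differences to be bounded relative to the same filtration used for the drift bound. The statement provides this, since both conditions are stated with respect to $\mathcal F_{t-1}$ and $X_t$ is adapted. In particular the corollary holds for the natural filtration used in the main text. Second, the inequality is applied one-sided, in the lower-tail form, so the factor is $\exp(-\lambda^2/(2Tc^2))$ with no factor of $2$ in front. If one only has the range form $D_t\in[a_t,a_t+2c]$, the Hoeffding-lemma constant gives the same $2c^2$ denominator, so the bound is unchanged.
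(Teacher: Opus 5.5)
Your proposal is correct and follows the paper's proof essentially step for step. Both center the increments to get the bounded-difference martingale $M_T=\sum_{t\le T}(X_t-\E[X_t\mid\mathcal F_{t-1}])$, use the drift bound to obtain $S_{\mathrm c}(T)-S_{\mathrm n}(T)\ge T\underline{\Delta}+M_T$, place the non-win event inside $\{M_T\le -T\underline{\Delta}\}$, and apply one-sided Azuma--Hoeffding with $\sum_t c^2=Tc^2$; your side checks (the lower-tail form with no leading factor of $2$, and the range form of Hoeffding's lemma giving the same $2c^2$ denominator) are correct details the paper leaves implicit.
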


\begin{proof}
Define
$Z_t=X_t-\E[X_t\mid\mathcal F_{t-1}]$.
Then $\E[Z_t\mid\mathcal F_{t-1}]=0$, so
$M_T=\sum_{t=1}^{T}Z_t$ is a martingale with $|Z_t|\leq c$
almost surely. Moreover,
\[
S_{\mathrm c}(T)-S_{\mathrm n}(T)
=
\sum_{t=1}^{T}\E[X_t\mid\mathcal F_{t-1}]
+
M_T
\geq
T\underline{\Delta}+M_T.
\]
Hence $S_{\mathrm c}(T)\leq S_{\mathrm n}(T)$ implies
$M_T\leq-T\underline{\Delta}$. Applying Azuma--Hoeffding proves
Equation~\eqref{eq:supp-finite-horizon-bound}.
\end{proof}

For any target coalition--normal pairwise non-win probability
$\delta\in(0,1)$, Equation~\eqref{eq:supp-finite-horizon-bound} is at
most $\delta$ whenever
\begin{equation}
\label{eq:supp-monitoring-horizon}
T
\geq
\frac{2c^2}{\underline{\Delta}^{\,2}}
\log\frac{1}{\delta}.
\end{equation}
The uniform conditional-drift assumption is stronger than the
stationary unconditional expectation identity in
Theorem~\ref{thm:supp-divergence}. No independence across intervals is
required; the result uses only predictable positive drift and bounded
martingale differences.


\subsection{Paired-Context Reuse Law}
\label{sec:supp-paired-theory}

Let $L$ denote the number of experimental items and let
$d_i^{\mathrm{cf}}$ be the fixed paired attack--clean evidence increment for
item $i$. Let $R\in\{0,1\}^{M\times L}$ be a balanced account--item incidence
matrix in which every synthetic account appears in exactly $r$ items and every
item receives exactly $m$ synthetic accounts. Hence $Mr=Lm$.

For matched attacked/clean account pair $j$, define
\[
G_j(r)
=
S_j^{\mathrm{attack}}
-
S_j^{\mathrm{clean}}
=
\sum_{i=1}^{L}R_{ji}d_i^{\mathrm{cf}}.
\]

\begin{corollary}[Paired-context reuse law]
\label{cor:supp-paired-reuse}
Under the balanced incidence construction,
\begin{equation}
\label{eq:supp-paired-reuse}
\frac{1}{M}\sum_{j=1}^{M}G_j(r)
=
r\,\overline d_{\mathrm{cf}},
\qquad
\overline d_{\mathrm{cf}}
=
\frac{1}{L}\sum_{i=1}^{L}d_i^{\mathrm{cf}}.
\end{equation}
\end{corollary}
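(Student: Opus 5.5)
The plan is to prove the identity by a double-counting (interchange of summation) argument on the incidence matrix $R$, using the representation $G_j(r)=\sum_{i=1}^{L}R_{ji}d_i^{\mathrm{cf}}$ stated just above the corollary. First I would justify that representation from the matched-twin construction. Account $j$ and its clean twin occupy exactly the same item--block slots, so in each exposed item $i$ the attacked account receives $D(H^{\mathrm{attack}}_{i},\widehat H_i)$ and the twin receives $D(H^{\mathrm{clean}}_{i},\widehat H_i)$. Their difference is $d_i^{\mathrm{cf}}$ by definition. The absence of repeated account--item edges ensures each exposure contributes exactly once, with weight $R_{ji}\in\{0,1\}$.

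Next I would average over accounts and swap the order of summation:
\[
\frac{1}{M}\sum_{j=1}^{M}G_j(r)=\frac{1}{M}\sum_{j=1}^{M}\sum_{i=1}^{L}R_{ji}d_i^{\mathrm{cf}}=\frac{1}{M}\sum_{i=1}^{L}d_i^{\mathrm{cf}}\sum_{j=1}^{M}R_{ji}.
\]
The column-sum condition of the balanced construction gives $\sum_{j}R_{ji}=m$ for every item $i$. The expression therefore becomes $\frac{m}{M}\sum_i d_i^{\mathrm{cf}}=\frac{mL}{M}\,\overline d_{\mathrm{cf}}$. Finally, the row-sum condition gives the double-counting identity $Mr=\sum_{j,i}R_{ji}=Lm$, so $mL/M=r$, which yields Equation~\eqref{eq:supp-paired-reuse}.

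For the interpretive consequence, the key observation is that $\overline d_{\mathrm{cf}}$ depends only on the attack world: the items, blocks, replacement ratings and histograms. In the fixed-attack sweep these are held identical across $r$, with only $R$ changing. The mean gap is then the fixed constant $\overline d_{\mathrm{cf}}$ times $r$, which is exactly linear in $r$ and positive whenever $\overline d_{\mathrm{cf}}>0$.

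The algebra itself is routine. The only step needing care is the first, namely that $G_j$ is exactly $\sum_i R_{ji}d_i^{\mathrm{cf}}$. This requires every shared exposure of the twins to cancel except for the attack--clean difference, and requires both twins' scores to use the same reference $\widehat H_i$ and the same discrepancy $D$. I would state this explicitly as a consequence of the exact twin matching rather than as an additional assumption.
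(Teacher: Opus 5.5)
Your proof is correct and follows the paper's argument: exchange the order of summation, use the column degree $m$ to get $mL\,\overline d_{\mathrm{cf}}$, and divide by $M$ using $Mr=Lm$. Your extra step deriving $G_j(r)=\sum_i R_{ji}d_i^{\mathrm{cf}}$ from the twin construction is a harmless addition, since the paper takes this representation as the definition of $G_j$.
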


\begin{proof}
Summing over accounts and exchanging the order of summation,
\begin{align*}
\sum_{j=1}^{M}G_j(r)
&=
\sum_{i=1}^{L}
d_i^{\mathrm{cf}}
\sum_{j=1}^{M}R_{ji} \\
&=
m\sum_{i=1}^{L}d_i^{\mathrm{cf}}
=
mL\,\overline d_{\mathrm{cf}},
\end{align*}
because every column of $R$ has degree $m$. Since $Mr=Lm$, dividing by $M$
gives
\[
\frac{1}{M}\sum_{j=1}^{M}G_j(r)
=
r\,\overline d_{\mathrm{cf}}.
\]
\end{proof}

Equation~\eqref{eq:supp-paired-reuse} determines the mean matched-account
gap exactly. ROC--AUC and matched-pair misordering additionally depend on the
distribution of the item-level increments and their assignment across
accounts, and are therefore measured empirically.

\section{Controlled Rotation Protocol}
\label{sec:supp-rotation}

This section specifies the controlled experiments used for the null-centering,
matched-exposure, and exposure--intermittency results in the main text.


\subsection{Generative Model}
\label{sec:supp-rotation-model}

Time is divided into $T=4000$ intervals. The population contains
$N_{\mathrm n}=20{,}000$ normal accounts and
$N_{\mathrm c}=2{,}000$ coalition accounts.

Actions are continuous scalars on $\mathcal X=[-4,4]$. Values outside this
range are clipped to the corresponding endpoint before histogram
construction. Each interval is represented by 40 equal-width bins over
$\mathcal X$, and Wasserstein--1 is evaluated on the bin centers.

Each normal account participates independently in every interval with
probability $p_{\mathrm n}=0.04$. Conditional on participation, its
pre-clipping action is drawn from $X_{\mathrm n}\sim\mathcal N(0,1)$.
The declared aggregate reference is the exact 40-bin probability
distribution induced by this clipped normal model.

Campaign activity follows
$I_t\sim\operatorname{Bernoulli}(p_{\mathrm{on}})$ independently across
intervals. If $I_t=0$, coalition accounts do not participate. If $I_t=1$,
exactly $k_{\mathrm{on}}$ coalition accounts are selected using the balanced
rotation in Appendix~\ref{sec:supp-rotation-process}; each selected account
contributes one action drawn from the specified coalition distribution.

Table~\ref{tab:supp-rotation-parameters} summarizes the fixed controlled-model
parameters and the operating-regime grid used in the experiments.
\begin{table}[t]
\caption{Controlled-rotation parameters.}
\label{tab:supp-rotation-parameters}
\centering
\small
\begin{tabular}{@{}ll@{}}
\toprule
\textbf{Parameter} & \textbf{Value or grid} \\
\midrule
Horizon $T$ & $4000$ \\
Normal accounts $N_{\mathrm n}$ & $20{,}000$ \\
Coalition accounts $N_{\mathrm c}$ & $2{,}000$ \\
Normal participation $p_{\mathrm n}$ & $0.04$ \\
Campaign activity $p_{\mathrm{on}}$
& $\{0.1,0.2,\ldots,1.0\}$ \\
Exposure ratio $R_{\mathrm{exp}}$
& $\{0.25,0.5,0.75,1,1.25,1.5,2\}$ \\
Principal $(R_{\mathrm{exp}},p_{\mathrm{on}})$
& $(1,0.2)$ \\
Action support & $[-4,4]$ \\
Histogram resolution & $40$ equal-width bins \\
Normal distribution & $\mathcal N(0,1)$, clipped \\
Mean-shift distribution & $\mathcal N(0.5,1)$, clipped \\
Randomized seeds & $30$ \\
\bottomrule
\end{tabular}
\end{table}

\subsection{Campaign Process and Balanced Rotation}
\label{sec:supp-rotation-process}

Normal-account participation is independent of campaign state:
$\Pr(A_{\mathrm n,t}=1\mid I_t=0)
=
\Pr(A_{\mathrm n,t}=1\mid I_t=1)
=
p_{\mathrm n}=0.04$.

Coalition accounts participate only when $I_t=1$. In each active interval,
exactly $k_{\mathrm{on}}$ accounts are selected from
$U=(u_1,\ldots,u_{N_{\mathrm c}})$ by cyclic balanced rotation in Algorithm~\ref{alg:supp-rotation}.

\begin{algorithm}[t]
\caption{Balanced coalition rotation}
\label{alg:supp-rotation}
\KwIn{Coalition accounts $U=(u_1,\ldots,u_{N_{\mathrm c}})$;
selection size $k_{\mathrm{on}}$; initial pointer $q$.}
\KwOut{Coalition accounts selected in each active interval.}

\For{$t\gets 1$ \KwTo $T$}{
    sample $I_t$\;
    \If{$I_t=1$}{
        select
        $u_q,u_{q+1},\ldots,u_{q+k_{\mathrm{on}}-1}$
        with indices modulo $N_{\mathrm c}$\;
        $q\gets(q+k_{\mathrm{on}})\bmod N_{\mathrm c}$\;
    }
}
\end{algorithm}

For a representative coalition account,
\[
q_{\mathrm c,0}=0,
\qquad
q_{\mathrm c,1}
=
\frac{k_{\mathrm{on}}}{N_{\mathrm c}},
\]
so its expected marginal participation rate is
$p_{\mathrm{on}}k_{\mathrm{on}}/N_{\mathrm c}$.

If $K_T$ active intervals occur over the horizon, the scheduler assigns
exactly $K_Tk_{\mathrm{on}}$ coalition participations. Because the pointer
advances cyclically through the account list, the participation counts of any
two coalition accounts differ by at most one.


\subsection{Exposure-Ratio Construction}
\label{sec:supp-rotation-ratio}

For requested exposure ratio $R_{\mathrm{exp}}^{\mathrm{req}}$, the ideal
active-interval selection size is
$k_{\mathrm{on}}^{*}
=
R_{\mathrm{exp}}^{\mathrm{req}}p_{\mathrm n}N_{\mathrm c}/p_{\mathrm{on}}$.
The implementation sets
$k_{\mathrm{on}}=\operatorname{round}(k_{\mathrm{on}}^{*})$
using Python's round-to-nearest-even rule, and requires
$1\leq k_{\mathrm{on}}\leq N_{\mathrm c}$. Configurations outside this range
are infeasible. All grid points in the reported frozen sweep satisfy this
condition.

Because campaign activity and normal participation are realized randomly,
the reported exposure ratio is computed from the realized marginal rates:
\begin{equation}
\label{eq:supp-realized-rexp}
R_{\mathrm{exp}}^{\mathrm{real}}
=
\frac{
\widehat p_{\mathrm{on}}k_{\mathrm{on}}/N_{\mathrm c}
}{
\widehat p_{\mathrm n}
},
\end{equation}
where $\widehat p_{\mathrm{on}}$ is the realized fraction of active intervals
and $\widehat p_{\mathrm n}$ is the realized normal-account participation
rate.

The principal configuration requests
$R_{\mathrm{exp}}^{\mathrm{req}}=1$ and $p_{\mathrm{on}}=0.2$.


\subsection{Exact Finite-Sample Null Calibration}
\label{sec:supp-rotation-calibration}

Let $Q=(q_1,\ldots,q_h)$ be the fixed binned reference on ordered bin
centers $x_1<\cdots<x_h$, with cumulative probabilities
$F_k=\sum_{j=1}^{k}q_j$ for $k=1,\ldots,h-1$.
For $N\sim\operatorname{Mult}(n,Q)$ and $\widehat P=N/n$, define the
cumulative count $C_k=\sum_{j=1}^{k}N_j$. Marginally,
$C_k\sim\operatorname{Bin}(n,F_k)$, and the discrete Wasserstein distance is
\begin{equation}
\label{eq:supp-discrete-w1}
W_1(\widehat P,Q)
=
\sum_{k=1}^{h-1}
(x_{k+1}-x_k)
\left|
\frac{C_k}{n}-F_k
\right|.
\end{equation}

Taking expectations term by term gives the exact finite-sample null baseline
\begin{equation}
\label{eq:supp-exact-null}
b_Q(n,h)
=
\sum_{k=1}^{h-1}
(x_{k+1}-x_k)
\sum_{\ell=0}^{n}
\left|
\frac{\ell}{n}-F_k
\right|
\binom{n}{\ell}
F_k^\ell(1-F_k)^{n-\ell}.
\end{equation}
No independence among the cumulative counts is required; linearity of
expectation applies directly to the sum in
Equation~\eqref{eq:supp-discrete-w1}.

In the controlled experiments, $h=40$ with equal-width bins on $[-4,4]$,
and Wasserstein--1 is evaluated on the bin centers. The signed interval
evidence is
\begin{equation}
\label{eq:supp-centered-evidence}
d_t
=
W_1(H_t,Q)-b_Q(n_t,40).
\end{equation}
By construction, under the fixed-reference null,
$\E_0[d_t\mid n_t,Q]=0$ exactly. Numerically, this identity holds up to
floating-point precision.

The baseline is evaluated using numerically stable Binomial routines and
cached for each realized interval size $n_t$. Earlier Monte Carlo
implementations of the same null expectation are not used for the canonical
controlled $W_1$ results.


\subsection{Controlled Intervention Distribution}
\label{sec:supp-rotation-attacks}

The primary mean-shift condition uses pre-clipping coalition actions
\[
X_{\mathrm c}^{\mathrm{shift}}
\sim
\mathcal N(0.5,1),
\]
while the normal reference is induced by
$X_{\mathrm n}\sim\mathcal N(0,1)$.
Both are clipped to $[-4,4]$ before histogram construction.

\subsection{Predicted Divergence Estimator}
\label{sec:supp-rotation-prediction}

For group $g\in\{\mathrm c,\mathrm n\}$, let
$\overline A_{g,t}=N_g^{-1}\sum_{u\in g}A_{u,t}$ denote the mean
participation rate in interval $t$, and let
$\widehat p=T^{-1}\sum_{t=1}^{T}I_t$ be the realized fraction of active
intervals. Write
$T_1=\sum_t I_t$ and $T_0=T-T_1$.

The implementation estimates the state-conditional participation rates as
$\widehat q_{\mathrm c,1}
=T_1^{-1}\sum_{t:I_t=1}\overline A_{\mathrm c,t}$,
$\widehat q_{\mathrm n,1}
=T_1^{-1}\sum_{t:I_t=1}\overline A_{\mathrm n,t}$, and
$\widehat q_{\mathrm n,0}
=T_0^{-1}\sum_{t:I_t=0}\overline A_{\mathrm n,t}$.
For group $g$ and campaign state $i$, the participation-conditioned evidence
mean is
\[
\widehat\nu_{g,i}
=
\frac{
\sum_{t:I_t=i}\overline A_{g,t}d_t
}{
\sum_{t:I_t=i}\overline A_{g,t}
}.
\]

Let
$\widehat q_{\mathrm n}
=(N_{\mathrm n}T)^{-1}
\sum_{u\in\mathrm n}\sum_{t=1}^{T}A_{u,t}$
be the realized normal marginal participation rate. The predicted
per-interval score gap used in the principal matched-exposure experiment is
\begin{equation}
\label{eq:supp-delta-est}
\widehat\Delta
=
\widehat q_{\mathrm n}
\left[
\widehat\nu_{\mathrm c,1}
-
\widehat p\,\widehat\nu_{\mathrm n,1}
-
(1-\widehat p)\widehat\nu_{\mathrm n,0}
\right].
\end{equation}
The corresponding predicted cumulative gap at horizon $t$ is
$t\widehat\Delta$.

All quantities in Equation~\eqref{eq:supp-delta-est} are computed directly
from campaign states, realized participation, and interval evidence. All terms in $\widehat{\Delta}$ are computed from the realized run.
It is a same-run plug-in mechanism check, computed separately from
and without fitting the cumulative account-score trajectory.


\subsection{Evaluation Grid}
\label{sec:supp-rotation-evaluation}

Each controlled configuration is evaluated over 30 primary randomized seeds.
The frozen operating-regime grid is
\[
p_{\mathrm{on}}\in\{0.1,0.2,\ldots,1.0\},
\qquad
R_{\mathrm{exp}}^{\mathrm{req}}
\in\{0.25,0.5,0.75,1.0,1.25,1.5,2.0\}.
\]

Within each seed of the operating-regime sweep, all grid cells share the
same realized normal background. They also share a common vector of campaign
uniforms, so the active-interval sets are nested as $p_{\mathrm{on}}$
increases. Coalition action draws are generated separately for each
$(R_{\mathrm{exp}}^{\mathrm{req}},p_{\mathrm{on}})$ cell.

The controlled evaluations comprise:
\begin{itemize}
    \item a normal-only condition for finite-sample null calibration;
    \item a principal matched-exposure trajectory at
          $(R_{\mathrm{exp}}^{\mathrm{req}},p_{\mathrm{on}})=(1,0.2)$;
          and
    \item the full
          $R_{\mathrm{exp}}^{\mathrm{req}}\times p_{\mathrm{on}}$
          operating-regime sweep.
\end{itemize}

The principal matched-exposure trajectory and the operating-regime sweep
are separate Monte Carlo evaluations with experiment-specific random streams.
Their frequency ROC--AUC estimates at the common requested setting
$(R_{\mathrm{exp}}^{\mathrm{req}},p_{\mathrm{on}})=(1,0.2)$ therefore
need not coincide exactly: they are $0.513$ in the trajectory experiment
and $0.486$ in the sweep, both near chance.

For the controlled index-pair diagnostic, coalition account $j$ is paired
with normal account $j$ for $j=1,\ldots,N_{\mathrm c}=2000$. The
index-paired non-win rate is the fraction of these pairs for which the
coalition score is no larger than the paired normal score.

Metric definitions and confidence-interval procedures are given in
Appendix~\ref{sec:supp-metrics}.

\section{Amazon Data Construction}
\label{sec:supp-amazon-data}

\subsection{Data and Preprocessing}
\label{sec:supp-amazon-preprocessing}

The primary Amazon experiments use the \texttt{Home\_and\_Kitchen}
domain from the Amazon Reviews 2023 release~\cite{amazonreviews2023}; the
cross-category replication in Appendix~\ref{sec:supp-electronics} applies
the same preprocessing and experimental protocol to \texttt{Electronics}.
The experiments use item identifier \texttt{asin}, reviewer identifier
\texttt{user\_id}, numeric rating, review timestamp, and source line number
for deterministic ordering. Only exact ratings in $\{1,\ldots,5\}$ are
retained.

Repeated reviewer--item records are removed before item eligibility is
determined. For each $(\texttt{user\_id},\texttt{asin})$ pair, we retain the
earliest usable review by timestamp, breaking exact timestamp ties by source
line number. Retained reviews for each item are then ordered by
$(\text{timestamp},\text{source line})$.

For the primary domain, this procedure removes 786,064 repeated
reviewer--item records. After deduplication, an item is eligible if it
contains at least 300 usable reviews, yielding 31,247 eligible items. Applying the same eligibility rule to \texttt{Electronics} yields
21,409 eligible items.
For each eligible item, only the first 300
chronologically ordered reviews are retained for the reported Amazon
experiments; later reviews do not enter reference estimation, calibration,
intervention construction, temporal diagnostics, or attribution evaluation.


\subsection{Chronological Split}
\label{sec:supp-amazon-splits}

The first 300 usable reviews of every eligible item are assigned the fixed
roles shown in Table~\ref{tab:supp-amazon-splits}.

\begin{table}[t]
\caption{Chronological roles of the first 300 usable reviews per eligible item.}
\label{tab:supp-amazon-splits}
\centering
\small
\begin{tabular}{@{}ll@{}}
\toprule
\textbf{Review positions} & \textbf{Role} \\
\midrule
1--120 & Historical reference prefix \\
121--180 & Calibration blocks \\
181--210 & Experimental block A \\
211--240 & Experimental block B \\
241--300 & Untouched temporal diagnostics \\
\bottomrule
\end{tabular}
\end{table}

Positions 1--120 are used to estimate the item reference, and positions
121--180 are used to select the shrinkage strength $\lambda$. Positions
181--210 and 211--240 form the two candidate experimental blocks. For a given
intervention, both candidate blocks must satisfy its feasibility rule before
one is selected uniformly at random for treatment. Positions 241--300 remain
untouched and are used only for temporal-calibration diagnostics.

Experimental and diagnostic blocks are not used to estimate the historical
reference or select $\lambda$.

\section{Reference Estimation and Calibration}
\label{sec:supp-calibration}

\subsection{Shrunk Item Reference}
\label{sec:supp-shrinkage}

Let $c_i\in\mathbb{N}^5$ denote the rating-count vector in item $i$'s
historical reference prefix, with
$n_{\mathrm{ref}}=\sum_{x=1}^{5}c_i(x)$.
Let $C=\sum_j c_j$ be the aggregate reference-prefix count vector over all
eligible items. The leave-one-item-out domain distribution is
\begin{equation}
\label{eq:supp-loo-domain}
H_{-i}
=
\frac{C-c_i}
{\sum_{x=1}^{5}[C(x)-c_i(x)]}.
\end{equation}

For shrinkage strength $\lambda\geq0$, the item reference is
\begin{equation}
\label{eq:supp-shrunk-reference}
\widehat H_i^{(\lambda)}
=
\frac{c_i+\lambda H_{-i}}
{n_{\mathrm{ref}}+\lambda}.
\end{equation}
Thus $\lambda$ acts as an effective prior sample size for the
leave-one-item-out domain reference. Zero-mass bins are permitted and no pseudocount is added. We
interpret the corresponding distribution on its positive-support
face. For cumulative boundary $k$, degenerate cases are handled
directly:
\[
A_{i,k}=0
\;\Longrightarrow\;
C_k=0 \quad\text{almost surely},
\]
and
\[
\alpha_{i,0}-A_{i,k}=0
\;\Longrightarrow\;
C_k=B \quad\text{almost surely}.
\]
These cases are evaluated deterministically rather than passed to a
generic Beta--Binomial routine.


\subsection{Plug-In and Posterior-Predictive Null Expectations}
\label{sec:supp-null-expectations}

Let
$\widehat H_i=\widehat H_i^{(\lambda)}
=(\widehat p_{i,1},\ldots,\widehat p_{i,5})$
and define cumulative reference probabilities
$F_{i,k}=\sum_{x=1}^{k}\widehat p_{i,x}$ for $k=1,\ldots,4$.

For a block of size $B$ generated under the plug-in null,
$N\sim\operatorname{Mult}(B,\widehat H_i)$.
The cumulative count $C_k=\sum_{x=1}^{k}N_x$ then satisfies
$C_k\sim\operatorname{Bin}(B,F_{i,k})$. Hence the exact plug-in null
expectation is
\begin{equation}
\label{eq:supp-plugin-expectation}
b_i^{\mathrm{plug}}(B)
=
\sum_{k=1}^{4}
\sum_{\ell=0}^{B}
\left|
\frac{\ell}{B}-F_{i,k}
\right|
\binom{B}{\ell}
F_{i,k}^{\ell}
(1-F_{i,k})^{B-\ell}.
\end{equation}
The corresponding centered evidence is
\begin{equation}
\label{eq:supp-plugin-centered}
d_{it}^{\mathrm{plug}}
=
W_1(H_{it},\widehat H_i)
-
b_i^{\mathrm{plug}}(B).
\end{equation}

For posterior-predictive calibration, define
$\alpha_i=c_i+\lambda H_{-i}$ and
$\alpha_{i,0}=n_{\mathrm{ref}}+\lambda$.
We place the posterior
$P_i\mid c_i\sim\operatorname{Dirichlet}(\alpha_i)$
and draw
$N\mid P_i\sim\operatorname{Mult}(B,P_i)$.

For cumulative boundary $k$, let
$A_{i,k}=\sum_{x=1}^{k}\alpha_i(x)$, so that
$F_{i,k}=A_{i,k}/\alpha_{i,0}$.
The corresponding cumulative count has the Beta--Binomial distribution
\[
C_k
\sim
\operatorname{BetaBinomial}
\left(
B,
A_{i,k},
\alpha_{i,0}-A_{i,k}
\right).
\]
Therefore,
\begin{equation}
\label{eq:supp-predictive-expectation}
b_i^{\mathrm{pred}}(B)
=
\sum_{k=1}^{4}
\sum_{\ell=0}^{B}
\left|
\frac{\ell}{B}-F_{i,k}
\right|
\Pr(C_k=\ell),
\end{equation}
with
\begin{equation}
\label{eq:supp-beta-binomial-pmf}
\Pr(C_k=\ell)
=
\binom{B}{\ell}
\frac{
\mathrm{B}
\left(
\ell+A_{i,k},
B-\ell+\alpha_{i,0}-A_{i,k}
\right)
}{
\mathrm{B}
\left(
A_{i,k},
\alpha_{i,0}-A_{i,k}
\right)
}.
\end{equation}
The predictive-centered evidence is
\begin{equation}
\label{eq:supp-predictive-centered}
d_{it}^{\mathrm{pred}}
=
W_1(H_{it},\widehat H_i)
-
b_i^{\mathrm{pred}}(B).
\end{equation}

Both procedures score the observed block against the same point reference
$\widehat H_i$; they differ only in the null expectation being subtracted.
All expectations are evaluated exactly using numerically stable Binomial and
Beta--Binomial probability calculations and are cached by item and block size.


\subsection{Selection of the Shrinkage Strength}
\label{sec:supp-lambda}

The shrinkage strength is selected using only the two calibration blocks at
positions 121--150 and 151--180, over the candidate grid
$\lambda\in\{0,5,10,20,40,80,160\}$.
For calibration block $b$ of item $i$, define
\[
d_{ib}^{\mathrm{pred}}(\lambda)
=
W_1\!\left(
H_{ib},
\widehat H_i^{(\lambda)}
\right)
-
b_{i,\lambda}^{\mathrm{pred}}(30).
\]
Let $\mathcal C$ denote the set of all calibration blocks. We select
\begin{equation}
\label{eq:supp-lambda-objective}
\lambda^\star
=
\arg\min_{\lambda\in\{0,5,10,20,40,80,160\}}
\left|
\frac{1}{|\mathcal C|}
\sum_{(i,b)\in\mathcal C}
d_{ib}^{\mathrm{pred}}(\lambda)
\right|.
\end{equation}
This objective targets global signed-score drift rather than
per-item predictive loss, because the downstream account statistic
accumulates signed evidence across contexts. Item-level positive and
negative calibration errors may therefore offset one another under
this objective.
Ties are resolved in favor of the smaller $\lambda$.

For the primary \texttt{Home\_and\_Kitchen} domain, this procedure selects
$\lambda^\star=20$, which is frozen before evaluating the experimental
blocks at positions 181--240 and the untouched diagnostic blocks at
positions 241--300.


\subsection{Untouched-Holdout Calibration}
\label{sec:supp-holdout}

For the primary \texttt{Home\_and\_Kitchen} domain, after selecting $\lambda^\star=20$, we evaluate the two untouched 30-review
blocks at positions 241--270 and 271--300.

Aggregating both blocks, the mean signed increments are
\[
\begin{aligned}
\overline d^{\mathrm{plug}}
&=
0.12269
\quad
[0.12025,\,0.12511],\\
\overline d^{\mathrm{pred}}
&=
0.10399
\quad
[0.10156,\,0.10640].
\end{aligned}
\]
where brackets denote 95\% confidence intervals.

For the two untouched blocks separately, the posterior-predictive mean signed
increments are $0.09584$ for positions 241--270 and $0.11215$ for positions
271--300.

Confidence intervals for this diagnostic use an item-cluster percentile
bootstrap. Each replicate samples the 31,247 eligible items with replacement
and retains both untouched blocks from every sampled item. We use 10,000
bootstrap replicates with seed 314159 and report the 2.5th and 97.5th
percentiles.


\subsection{Reference-History Sensitivity}
\label{sec:supp-reference-history}

For the primary \texttt{Home\_and\_Kitchen} domain, we vary the reference-history length over
$n_{\mathrm{ref}}\in\{60,90,120\}$.
For each randomized seed, the primary frozen $k=6$ attack world and $r=8$ identity assignment are held fixed. Only the historical prefix used to construct the
item reference and the calibration-selected shrinkage parameter are
recomputed.

For each reference length, we use the first $n_{\mathrm{ref}}$ reviews from
the original positions 1--120 reference block: positions 1--60 for
$n_{\mathrm{ref}}=60$, positions 1--90 for $n_{\mathrm{ref}}=90$, and
positions 1--120 for $n_{\mathrm{ref}}=120$.

For each $n_{\mathrm{ref}}$, the leave-one-item-out domain target is recomputed
using the same reference length for every eligible item, and $\lambda$ is
reselected from $\{0,5,10,20,40,80,160\}$ using only the two fixed calibration
blocks at positions 121--180. The selected value is $\lambda^\star=20$ for
all three reference lengths.

The sampled items, clean and attacked block histograms, manipulated donor
slots, replacement ratings, and $r=8$ synthetic identity assignments are
inherited unchanged from the primary $k=6$ experiment. Table~\ref{tab:supp-reference-history} reports the resulting attribution
performance as the reference-history length varies.

\begin{table}[t]
\caption{Sensitivity to reference-history length in the primary
\texttt{Home\_and\_Kitchen} domain under the exact $r=8$
matched-twin construction.}
\label{tab:supp-reference-history}
\centering
\small
\begin{tabular}{@{}cccc@{}}
\toprule
$n_{\mathrm{ref}}$
& $\lambda^\star$
& ROC--AUC
& Misordering \\
\midrule
60  & 20 & 0.618 & 0.382 \\
90  & 20 & 0.688 & 0.312 \\
120 & 20 & 0.744 & 0.256 \\
\bottomrule
\end{tabular}
\end{table}

The $n_{\mathrm{ref}}=120$ condition reproduces the primary item-level $d_i^{\mathrm{cf}}$ values and the exact $r=8$ counterfactual
matched-twin ROC--AUC.

\section{Counterfactual Amazon Experiment Construction}
\label{sec:supp-counterfactual}

\subsection{Counterfactual Identification Quantity}
\label{sec:supp-counterfactual-score}

For discrepancy $D$, define the paired attack--clean evidence increment
\[
d_{it}^{\mathrm{cf}}
=
D(H_{it}^{\mathrm{attack}},\widehat H_i)
-
D(H_{it}^{\mathrm{clean}},\widehat H_i).
\]
The attacked and clean worlds share the same item, experimental block, review
positions, source records, timestamps, block size, unmodified reviews, and
item reference $\widehat H_i$; they differ only through the controlled rating
replacements.

The quantity $d_{it}^{\mathrm{cf}}$ is used only for offline counterfactual
evaluation. In deployment, the framework uses the one-world signed increment
$d_t=W_1(H_t,H_{c_t}^{\mathrm{ref}})-b_{c_t}(n_t,h_t)$.


\subsection{Block and Donor Selection}
\label{sec:supp-block-selection}

Each eligible item contributes two candidate 30-review blocks: block A at
positions 181--210 and block B at positions 211--240.

For the primary five-star experiment, a block is feasible if it contains at
least $k=6$ non-five-star reviews. Both candidate blocks must be feasible.
For each seed, we first form the feasible item universe, sample 2,000 items
uniformly without replacement, and then select block A or B independently for
each sampled item with probability $1/2$.

For a selected block, let
\[
\mathcal J_i
=
\{j\in\{1,\ldots,30\}:r_{ij}<5\}.
\]
Using the same seed-specific random generator, exactly six distinct positions
are sampled uniformly without replacement from $\mathcal J_i$. The selected
local indices are sorted before the attack manifest is constructed. Each
selected rating is replaced by five, and all other ratings remain unchanged.
Donor selection does not use $W_1$, $d_i^{\mathrm{cf}}$, account scores, or
downstream evaluation metrics.

The intervention-strength experiment uses $k \in \{3,6,9\}$ under a
separate common-random-number construction. Both candidate blocks must
contain at least nine non-five-star reviews. For each sampled item, one
treatment block and an ordered sample of nine distinct non-five-star donor
positions are generated once. The $k=3$, $k=6$, and $k=9$ conditions use
the first 3, 6, and 9 donors from this common ordering, respectively.

All three intervention strengths use the same sampled items, treatment
blocks, item references, ordered nine donor positions, synthetic-account
population, and account--item incidence. The identity construction fixes
item-side degree $m=9$ and account-side degree $r=8$, yielding
$M=Lm/r=2000\times 9/8=2250$ synthetic identities per seed. Thus changing
$k$ changes only the nested subset of donor slots that is actually modified,
not the identity population or account--item exposure structure. The number
of modified exposures assigned to each identity is balanced to within one:
2--3 for $k=3$, 5--6 for $k=6$, and exactly 8 for $k=9$.


\subsection{Fixed-Attack Identity Reassignment}
\label{sec:supp-fixed-attack}

For each randomized seed, the primary $k=6$ five-star attack world is
generated once before identity reuse is varied over
$r\in\{1,2,4,8,16\}$. The attack manifest records, for every treated item,
the selected experimental block, manipulated positions and source lines,
original and replacement ratings, clean and attacked histograms, and the
resulting $d_i^{\mathrm{cf}}$.

Across all reuse values, the attack manifest and aggregate evidence remain
unchanged. Only the assignment of the fixed manipulated slots to synthetic
identities varies.

In the primary construction, each item has $k=6$ manipulated slots and is
assigned $m=6$ synthetic identities, one per slot. For $L=2000$ items, the
$Lm$ synthetic exposures are partitioned among
\begin{equation}
\label{eq:supp-fixed-attack-account-count}
M=\frac{Lm}{r}
\end{equation}
synthetic accounts, each appearing on exactly $r$ distinct items.

\subsubsection{Balanced regular incidence construction}
\label{sec:supp-fixed-attack-incidence}

Identity assignment is represented by
$R\in\{0,1\}^{M\times L}$ and satisfies
\[
\sum_{i=1}^{L}R_{ji}=r
\quad\text{for every account }j,
\quad
\sum_{j=1}^{M}R_{ji}=m
\quad\text{for every item }i.
\]
No synthetic account appears more than once on the same item.

The implementation uses an exact circulant construction. For each reuse
value, the $L$ items are first randomly permuted. Indexing the permuted items
by $j=0,\ldots,L-1$, item $j$ is assigned numeric account indices
\[
(jm+s)\bmod M,
\qquad
s=0,\ldots,m-1.
\]
The construction is then checked for exact row degree $r$ and column degree
$m$. A final random permutation of the $M$ account labels changes the visible
synthetic identifiers without altering the incidence structure.

For seed $s$ and reuse value $r$, the incidence random generator is initialized
with seed $100003\,s+997\,r+41$. Within each item, the sorted manipulated
slots are assigned in order to its $m$ incident synthetic accounts.

The implementation verifies $Mr=Lm$, exact account degree $r$, exact item
degree $m$, and absence of duplicate account--item edges. Any failed invariant
aborts the run.

\subsubsection{Historical comparison accounts for the fixed-attack reuse sweep.}

For each seed and reuse value $r$, the positive class consists of the
synthetic intervention identities generated by the exact-regular incidence
construction above. With $L=2000$ treated items and $m=6$ manipulated slots
per item, the number of positive identities is
$M=Lm/r=12000/r$, giving $12000$, $6000$, $3000$, $1500$, and $750$
synthetic identities for $r\in\{1,2,4,8,16\}$, respectively. Every such
identity appears on exactly $r$ distinct treated items and occupies one
manipulated slot on each.

The comparison class is constructed from the original historical reviewers
occupying the 24 non-donor positions of each selected 30-review treated
block. Because repeated reviewer--item records were removed during
preprocessing, a historical comparison account appears at most once on a
given item, although the same reviewer may appear on multiple selected
items. Each seed therefore contains exactly $24L=48000$ historical
comparison exposures. After repeated reviewer identities across items are
merged, the reported seeds contain between 47,805 and 47,861 unique
historical comparison accounts.

For either class, an exposure on treated item $i$ contributes the same paired
counterfactual increment $d_i^{\mathrm{cf}}$ to the account score. Thus, for
the exposure set $E_u$ of account $u$,
$S_u=\sum_{i\in E_u}d_i^{\mathrm{cf}}$ and
$F_u=|E_u|$. The historical comparison population, its exposure sets, and
its scores are identical across reuse values within a seed; only the
partition of the fixed manipulated slots among synthetic intervention
identities changes.

The historical comparison accounts are not assumed to be benign and do not
provide clean-label ground truth. The fixed-attack reuse sweep is therefore
a descriptive mechanism test. Synthetic intervention frequency is exactly
$r$, whereas comparison-account frequency reflects naturally occurring
cross-item reuse; the exact matched-twin construction in
Appendix~\ref{sec:supp-twins}
removes this activity difference.


\subsection{Exact Matched-Twin Construction}
\label{sec:supp-twins}

The matched-twin experiment uses the frozen $r=8$ identity assignment from the primary $k=6$ experiment. For attacked identity $a$, let
\[
E_a
=
\{(i,t):a\text{ occupies manipulated slot }(i,t)\}
\]
denote its exposure set. A synthetic clean twin $a'$ is assigned to the exact
clean counterparts of the same slots.

For every matched pair, the following are identical:
\begin{itemize}
    \item participation count, equal to $r=8$;
    \item item set and experimental block;
    \item review position and source line; and
    \item timestamp, since the attacked and clean worlds use the same
    source record.
\end{itemize}

For attacked identity $a$ and exact clean twin $a'$ with shared exposure set
$E_a$, the raw-world scores are
\begin{align*}
S_a^{\mathrm{raw,attack}}
&=
\sum_{(i,t)\in E_a}
W_1\!\left(H_{it}^{\mathrm{attack}},\widehat H_i\right),\\
S_{a'}^{\mathrm{raw,clean}}
&=
\sum_{(i,t)\in E_a}
W_1\!\left(H_{it}^{\mathrm{clean}},\widehat H_i\right).
\end{align*}
The posterior-predictive-centered scores are
\begin{align*}
S_a^{\mathrm{pred,attack}}
&=
\sum_{(i,t)\in E_a}
\left[
W_1\!\left(H_{it}^{\mathrm{attack}},\widehat H_i\right)
-b_i^{\mathrm{pred}}(30)
\right],\\
S_{a'}^{\mathrm{pred,clean}}
&=
\sum_{(i,t)\in E_a}
\left[
W_1\!\left(H_{it}^{\mathrm{clean}},\widehat H_i\right)
-b_i^{\mathrm{pred}}(30)
\right].
\end{align*}
Because each pair has the same item set, the predictive baselines cancel
pairwise, and
\[
S_a^{\mathrm{raw,attack}}-S_{a'}^{\mathrm{raw,clean}}
=
S_a^{\mathrm{pred,attack}}-S_{a'}^{\mathrm{pred,clean}}
=
\sum_{(i,t)\in E_a} d_{it}^{\mathrm{cf}}.
\]

For counterfactual ROC--AUC, the implemented score assigns
$S_a^{\mathrm{cf}}=\sum_{(i,t)\in E_a}d_{it}^{\mathrm{cf}}$ to the attacked
identity and $S_{a'}^{\mathrm{cf}}=0$ to its clean twin. Each reported
ROC--AUC is then computed between the corresponding attacked- and
clean-score distributions; pairing is used separately for the misordering
and paired-gap statistics. Historical Amazon reviewers do not enter either
labeled class in this experiment.


\subsection{Exact Mean-Preserving Shape Intervention}
\label{sec:supp-shape}

The shape intervention modifies exactly six ratings in each treated block
through three disjoint rating pairs. The admissible unordered transformations
are shown in Table~\ref{tab:supp-shape-transforms}.

\begin{table}[t]
\caption{Admissible sum-preserving rating-pair transformations.}
\label{tab:supp-shape-transforms}
\centering
\small
\begin{tabular}{@{}cc@{}}
\toprule
\textbf{Original pair} & \textbf{Replacement pair} \\
\midrule
$(2,2)$ & $(1,3)$ \\
$(2,3)$ & $(1,4)$ \\
$(2,4)$ & $(1,5)$ \\
$(3,3)$ & $(1,5)$ \\
$(3,4)$ & $(2,5)$ \\
$(4,4)$ & $(3,5)$ \\
\bottomrule
\end{tabular}
\end{table}

Both candidate experimental blocks must admit three disjoint transformable
pairs. Feasibility is checked deterministically by enumerating candidate pairs
in review-position order and using backtracking to determine whether three
disjoint pairs exist.

For attack construction, the candidate-pair list is randomly permuted and a
depth-first backtracking search returns the first feasible set of three
disjoint pairs in that order. For each selected unordered replacement pair,
its orientation across the two review positions is chosen independently with
probability $1/2$.

\begin{table*}[t]
\caption{Controlled quantities in the Amazon counterfactual experiments.}
\label{tab:supp-controls}
\centering
\small
\setlength{\tabcolsep}{5pt}
\renewcommand{\arraystretch}{1.08}
\begin{tabularx}{\textwidth}{@{}lXXX@{}}
\toprule
\textbf{Quantity}
&
\textbf{Fixed-attack reuse}
&
\textbf{Exact matched twins}
&
\textbf{Mean-preserving shape}
\\
\midrule

Item set
& Fixed across $r$
& Exactly matched
& Exactly matched \\

Selected experimental block
& Fixed across $r$
& Exactly matched
& Exactly matched \\

Manipulated position / source record
& Fixed across $r$
& Exactly matched
& Exactly matched \\

Timestamp exposure
& Fixed across $r$
& Exactly matched
& Exactly matched \\

Replacement rating
& Fixed across $r$
& Attack--clean contrast
& Sum-preserving contrast \\

Aggregate attack volume
& Fixed across $r$
& Fixed
& Fixed \\

Block histogram
& Fixed across $r$
& Attack--clean contrast
& Shape contrast \\

Paired evidence $d_i^{\mathrm{cf}}$
& Fixed across $r$
& Determined by matched worlds
& Determined by matched worlds \\

Participation frequency
& Changes with $r$
& Exactly matched
& Exactly matched \\

Item exposure
& Exact regular incidence
& Exactly matched
& Exactly matched \\

Block mean
& May change
& May change
& Exactly preserved \\

Account classes
& Synthetic / historical comparison
& Synthetic matched twins
& Synthetic matched twins \\

\bottomrule
\end{tabularx}
\end{table*}

For seed $s$, the shape-intervention random generator is initialized with seed
$s+600000$ and is used for feasible-item sampling, A/B treatment-block
selection, randomized pair ordering, and replacement orientation. Pair
selection does not use $W_1$, Jensen--Shannon divergence,
$d_i^{\mathrm{cf}}$, account scores, or downstream evaluation metrics. Table~\ref{tab:supp-controls} summarizes the quantities held fixed or exactly
matched across the three principal Amazon counterfactual constructions.

Every admissible transformation preserves the corresponding pair sum, and
the implementation additionally requires all six selected positions to
change. Hence
\[
\sum_{j=1}^{30}r_{ij}^{\mathrm{attack}}
=
\sum_{j=1}^{30}r_{ij}^{\mathrm{clean}},
\]
so the 30-review block mean is preserved exactly.

The shape experiment uses the same exact-regular incidence construction at
$r=8$, with incidence seed $100003\,s+6008$.


\subsection{Construction Invariants}
\label{sec:supp-construction-invariants}

Before evaluation, the implementation checks all relevant incidence degrees,
slot uniqueness, histogram sizes, score-conservation identities,
matched-twin exposures, and mean-preservation constraints. For the
fixed-attack reuse sweep, it additionally verifies that each treated block
contains exactly 24 historical non-donor comparison exposures after the six
frozen donor source records are removed, and the same reconstructed
historical comparison population is reused across all $r$ within a seed.
Any failed invariant raises an exception and aborts the affected run; failed
constructions are not included in reported metrics.
\section{Evidence Channels, Baselines, Metrics, and Statistical Inference}
\label{sec:supp-evaluation}

\subsection{Aggregate Evidence Channels}
\label{sec:supp-evidence-channels}

For distributions $P,Q$ on ordered support
$\mathcal X=\{x_1<\cdots<x_h\}$, let
$F_P(k)=\sum_{j=1}^{k}P(j)$ and
$F_Q(k)=\sum_{j=1}^{k}Q(j)$. Then
\[
W_1(P,Q)
=
\sum_{k=1}^{h-1}
(x_{k+1}-x_k)
\left|
F_P(k)-F_Q(k)
\right|.
\]

In the controlled model, the centered Wasserstein evidence is
\begin{equation}
\label{eq:supp-eval-centered-w1}
d_t^{W_1}
=
W_1(H_t,H_{c_t}^{\mathrm{ref}})
-
b_{c_t}^{W_1}(n_t,h_t),
\end{equation}
with $b_{c_t}^{W_1}$ defined in
Appendix~\ref{sec:supp-rotation-calibration}.

For Amazon, the paired counterfactual Wasserstein increment is
\begin{equation}
\label{eq:supp-eval-cf-w1}
d_{it}^{\mathrm{cf},W_1}
=
W_1(H_{it}^{\mathrm{attack}},\widehat H_i)
-
W_1(H_{it}^{\mathrm{clean}},\widehat H_i).
\end{equation}

For the Amazon mean-preserving shape experiment, for probability vectors $P$ and $Q$, let $M=(P+Q)/2$ and define
\begin{equation}
\label{eq:supp-js}
\operatorname{JS}(P,Q)
=
\frac{1}{2}\operatorname{KL}(P\|M)
+
\frac{1}{2}\operatorname{KL}(Q\|M),
\end{equation}
where
$\operatorname{KL}(P\|M)
=
\sum_x P(x)\log[P(x)/M(x)]$.
The implementation uses the natural logarithm, so JS is measured in nats.
Zero-mass terms contribute zero and no smoothing is added.

The paired JS increment is
\begin{equation}
\label{eq:supp-eval-cf-js}
d_{it}^{\mathrm{cf},\mathrm{JS}}
=
\operatorname{JS}(H_{it}^{\mathrm{attack}},\widehat H_i)
-
\operatorname{JS}(H_{it}^{\mathrm{clean}},\widehat H_i).
\end{equation}

For a rating histogram $H$, let
$\mu(H)=\sum_{x=1}^{5}xH(x)$ and
$D_{\mathrm{mean}}(H,\widehat H_i)
=
|\mu(H)-\mu(\widehat H_i)|$.
The paired mean increment is
\begin{equation}
\label{eq:supp-eval-cf-mean}
d_{it}^{\mathrm{cf},\mathrm{mean}}
=
D_{\mathrm{mean}}(H_{it}^{\mathrm{attack}},\widehat H_i)
-
D_{\mathrm{mean}}(H_{it}^{\mathrm{clean}},\widehat H_i).
\end{equation}

For any evidence channel $\phi$, account scores use the same exposure rule:
\begin{equation}
\label{eq:supp-common-account-score}
S_u^\phi
=
\sum_t a_{u,t}d_t^\phi,
\end{equation}
or the corresponding sum of paired increments over the Amazon contexts
assigned to account $u$.


\subsection{Account-Level Mechanism Baselines}
\label{sec:supp-account-baselines}

Participation frequency is
\begin{equation}
\label{eq:supp-frequency-score}
F_u
=
\sum_t a_{u,t}.
\end{equation}
For the Amazon shape experiment, mean-based attribution uses
\begin{equation}
\label{eq:supp-mean-account-score}
S_u^{\mathrm{mean}}
=
\sum_{(i,t)\in E_u}
d_{it}^{\mathrm{cf},\mathrm{mean}},
\end{equation}
where $E_u$ is the account's assigned exposure set. Under the exact
mean-preserving shape construction,
$d_{it}^{\mathrm{cf},\mathrm{mean}}=0$ for every treated block, so
$S_u^{\mathrm{mean}}=0$ for every synthetic account.

\paragraph{Role of uniform exposure attribution.}
Uniform exposure attribution is used as a parameter-free
identifiability instrument rather than an optimal allocation model.
It uses no per-action features, account relations, training labels,
or learned weights. Consequently, any account separation in the
matched experiments must arise from repeated exposure to the
already-fixed context evidence. A differential second-stage allocator
could redistribute the same fixed evidence using action-level
information while preserving the aggregate-first boundary, but
optimizing such an allocator is a distinct problem.


\subsection{Co-Activity Evidence}
\label{sec:supp-coactivity}

Let $B\in\{0,1\}^{M\times L}$ be the synthetic account--item incidence
matrix, with $B_{ui}=1$ when account $u$ participates on item $i$. For
distinct accounts $u$ and $v$, define their item-level co-occurrence count as
\begin{equation}
\label{eq:supp-cooccur}
N_{uv}
=
\sum_{i=1}^{L}B_{ui}B_{vi}.
\end{equation}

The implemented pairwise co-activity score retains only repeated
co-occurrence beyond the first shared item:
\begin{equation}
\label{eq:supp-pair-coactivity}
c_{uv}
=
\max\{N_{uv}-1,0\}.
\end{equation}
The account-level score is
\begin{equation}
\label{eq:supp-account-coactivity}
C_u
=
\sum_{v\neq u}c_{uv}
=
\sum_{v\neq u}\max\{N_{uv}-1,0\}.
\end{equation}

No fitted pairwise null expectation or variance normalization is used.
Instead, the complementarity experiment uses structural controls obtained by
degree-preserving randomization of the account--item incidence graph.

The structured incidence consists of teams of six accounts, each repeated on
eight items. Thus every account has degree $r=8$, every item has degree
$m=6$, and with $L=2000$ items there are $M=1500$ accounts.

A randomized control graph is generated by degree-preserving bipartite edge
swaps. Each randomization performs exactly
$20Lm=240{,}000$ successful swaps. A proposed swap exchanges the item
endpoints of two edges only if the resulting graph remains simple and creates
no duplicate account--item edge. The implementation allows at most
12,000,000 swap attempts; failure to complete all 240,000 successful swaps
raises an exception and aborts the run. Two independently randomized control
incidences are generated for each seed.


\subsection{Untrained Combined Score}
\label{sec:supp-combination}

The complementarity experiment combines the aggregate-evidence and
co-activity channels using fixed equal weights after unlabeled
standardization.

For each seed, the primary mixed benchmark concatenates two mechanism
conditions:
\begin{itemize}
    \item an evidence-only condition, in which positive and negative accounts
          use the same randomized incidence but only the positive world carries
          the five-star intervention; and
    \item a topology-only condition, in which both classes carry zero aggregate
          intervention evidence but positive accounts use the repeated-team
          incidence and negative accounts use a degree-preserving randomized
          incidence.
\end{itemize}

Let $\mathcal U^{\mathrm{mix}}$ denote the pooled, unlabeled account population
from these two conditions. For channel $X\in\{S,C\}$, define
\[
\widehat\mu_X
=
\frac{1}{|\mathcal U^{\mathrm{mix}}|}
\sum_{u\in\mathcal U^{\mathrm{mix}}}X_u,
\qquad
\widehat\sigma_X
=
\left[
\frac{1}{|\mathcal U^{\mathrm{mix}}|}
\sum_{u\in\mathcal U^{\mathrm{mix}}}
(X_u-\widehat\mu_X)^2
\right]^{1/2}.
\]
Class labels are not used in either quantity. For $\widehat\sigma_X>0$, let
$Z(X_u)=(X_u-\widehat\mu_X)/\widehat\sigma_X$; if
$\widehat\sigma_X=0$, the standardized value for that channel is set to zero.

The combined score is
\begin{equation}
\label{eq:supp-combined-score}
S_u^{\mathrm{comb}}
=
Z(S_u)+Z(C_u).
\end{equation}
Thus the two standardized channels receive equal fixed weight. No classifier,
combination weight, or label-dependent parameter is fitted.

For the primary \texttt{Home\_and\_Kitchen} experiment, estimating the
normalization from the other 29 seeds instead of the current seed changes
mean combined ROC--AUC only from $0.873700$ to $0.873698$.


\subsection{Metrics and Statistical Inference}
\label{sec:supp-metrics}

For $L$ treated item contexts, the mean paired evidence and fraction of
positive increments are
\begin{equation}
\label{eq:supp-mean-dcf}
\overline d_{\mathrm{cf}}
=
\frac{1}{L}
\sum_{i=1}^{L}d_i^{\mathrm{cf}},
\end{equation}
and
\begin{equation}
\label{eq:supp-positive-dcf}
\pi_+
=
\frac{1}{L}
\sum_{i=1}^{L}
\mathbf 1\{d_i^{\mathrm{cf}}>0\}.
\end{equation}

Account ROC--AUC classes are experiment-specific. In the controlled model,
the positive and comparison classes are the generated coalition and normal
accounts. In the fixed-attack Amazon reuse sweep, the positive class consists
of synthetic intervention identities and the comparison class consists of
the historical non-donor accounts defined in
Appendix~\ref{sec:supp-fixed-attack}; these historical
accounts are used as the negative class for ROC--AUC computation but are not
assumed to be verified benign. In the exact matched-twin and mean-preserving
Amazon experiments, the classes are synthetic attacked identities and their
exact synthetic clean twins. In the complementarity experiment, both classes
are synthetic by construction. Tied scores receive their usual half-credit
in ROC--AUC through average ranks.

Historical Amazon reviewers do not enter the labeled classes of the
matched-twin, mean-preserving, or complementarity experiments. In the
one-world ranking experiment of
Appendix~\ref{sec:supp-one-world-ranking}, they remain an unlabeled
background population rather than a negative class.

For matched attacked--clean pair $j$,
\begin{equation}
\label{eq:supp-pair-gap}
G_j
=
S_j^{\mathrm{attack}}
-
S_j^{\mathrm{clean}},
\end{equation}
with
\[
\overline G
=
\frac{1}{M}
\sum_{j=1}^{M}G_j.
\]
The matched-pair misordering rate is
\begin{equation}
\label{eq:supp-misordering}
\widehat p_{\mathrm{mis}}
=
\frac{1}{M}
\sum_{j=1}^{M}
\mathbf 1\{G_j\leq0\}.
\end{equation}
We use this conservative convention throughout: exact ties count as
misorderings (equivalently, non-wins).

For the fixed-attack reuse construction, numerical agreement with the exact
reuse law is checked using
\begin{equation}
\label{eq:supp-reuse-error}
\epsilon_{\mathrm{reuse}}(r)
=
\left|
\overline G(r)
-
r\,\overline d_{\mathrm{cf}}
\right|.
\end{equation}

For the principal controlled matched-exposure experiment, the empirical
coalition--normal mean-score gap is evaluated at every interval. If
$G_t^{\mathrm{emp}}$ denotes this cumulative gap, its descriptive slope is the
ordinary least-squares slope from
\begin{equation}
\label{eq:supp-controlled-slope}
G_t^{\mathrm{emp}}
=
\alpha+\beta t+\varepsilon_t,
\qquad
t=1,\ldots,T.
\end{equation}
The fitted $\widehat{\beta}$ is compared with the separately computed
plug-in estimate $\widehat{\Delta}$  from Appendix~\ref{sec:supp-rotation-prediction}.

\paragraph{Confidence intervals.}
For the controlled experiments, each seed is an independent Monte Carlo
replicate. The frozen inference procedure uses a percentile bootstrap over
the 30 seed-level estimates. Each bootstrap replicate samples 30 seeds with
replacement, recomputes their mean, and the 2.5th and 97.5th percentiles form
the 95\% interval. We use 10,000 bootstrap replicates.

For the Amazon experiments, the independently seeded randomized construction
is the replication unit. The reported intervals therefore summarize variation
over item sampling, treatment-block and donor selection, and identity
assignment conditional on the fixed Amazon corpus. The primary Amazon
intervention and robustness summaries report 95\% Student-$t$ intervals over
the 30 seed-level estimates:
\[
\widehat{\theta}
\pm
t_{0.975,29}\frac{s_\theta}{\sqrt{30}},
\]
where $s_\theta$ is the sample standard deviation across seeds.

The untouched calibration diagnostic for the primary
\nolinkurl{Home\_and\_Kitchen} domain uses the item-cluster percentile
bootstrap described in Appendix~\ref{sec:supp-holdout}: 31,247 items are
sampled with replacement, both untouched blocks from each sampled item are
retained together, and 10,000 bootstrap replicates are used.

All reported seed-level metrics in the final experiment grid are finite.
Construction or invariant failures raise an exception rather than producing an
undefined metric or being silently omitted.

\section{Additional Experimental Results}
\label{sec:supp-extra-results}

\subsection{Controlled-Model Results}
\label{sec:supp-extra-controlled}

\paragraph{Null calibration.}
Under normal-only traffic, raw Wasserstein--1 has mean discrepancy $0.04531$
per interval. After matched finite-sample centering,
\[
\begin{aligned}
\overline d_{\mathrm{centered}}
&=
-1.35\times10^{-5},\\
95\%~\mathrm{CI}
&=
[-1.01\times10^{-4},\,7.59\times10^{-5}],\\
\text{raw cumulative slope}
&=
0.04529,\\
\text{centered cumulative slope}
&=
-3.37\times10^{-5}.
\end{aligned}
\]

\paragraph{Predicted and observed divergence.}
For the principal matched-exposure condition,
\[
\begin{aligned}
\widehat\Delta &= 0.004180,
&
\text{empirical gap slope} &= 0.004190,\\
T\widehat\Delta &= 16.72,
&
\text{observed final gap} &= 16.76.
\end{aligned}
\]
The realized exposure ratio is approximately $R_{\mathrm{exp}}=1.002$.

\paragraph{Exposure--intermittency sweep.}
The complete operating-regime sweep uses
\[
p_{\mathrm{on}}\in\{0.1,0.2,\ldots,1.0\},
\quad
R_{\mathrm{exp}}^{\mathrm{req}}
\in
\{0.25,0.5,0.75,1.0,1.25,1.5,2.0\}.
\]
Table~\ref{tab:supp-controlled-grid} reports the mean account-score ROC--AUC
over the 30 randomized seeds for every grid point.

\begin{table*}[t]
\caption{Complete controlled exposure--intermittency sweep. Entries are mean
account-score ROC--AUC over 30 seeds.}
\label{tab:supp-controlled-grid}
\centering
\small
\setlength{\tabcolsep}{4.2pt}
\renewcommand{\arraystretch}{1.08}
\begin{tabular}{@{}c*{10}{c}@{}}
\toprule
& \multicolumn{10}{c}{$p_{\mathrm{on}}$} \\
\cmidrule(lr){2-11}
$R_{\mathrm{exp}}^{\mathrm{req}}$
& .1 & .2 & .3 & .4 & .5 & .6 & .7 & .8 & .9 & 1.0 \\
\midrule
0.25
& 1.000 & 0.725 & 0.363 & 0.254 & 0.228
& 0.226 & 0.225 & 0.236 & 0.254 & 0.261 \\
0.50
& 1.000 & 1.000 & 0.995 & 0.825 & 0.498
& 0.295 & 0.188 & 0.137 & 0.118 & 0.103 \\
0.75
& 1.000 & 1.000 & 1.000 & 1.000 & 0.994
& 0.886 & 0.645 & 0.392 & 0.241 & 0.152 \\
1.00
& 1.000 & 1.000 & 1.000 & 1.000 & 1.000
& 1.000 & 0.994 & 0.926 & 0.742 & 0.507 \\
1.25
& 1.000 & 1.000 & 1.000 & 1.000 & 1.000
& 1.000 & 1.000 & 1.000 & 0.994 & 0.951 \\
1.50
& 1.000 & 1.000 & 1.000 & 1.000 & 1.000
& 1.000 & 1.000 & 1.000 & 1.000 & 1.000 \\
2.00
& 1.000 & 1.000 & 1.000 & 1.000 & 1.000
& 1.000 & 1.000 & 1.000 & 1.000 & 1.000 \\
\bottomrule
\end{tabular}
\end{table*}

Along the matched-exposure row
$R_{\mathrm{exp}}^{\mathrm{req}}=1$, frequency remains near chance throughout
the sweep (ROC--AUC $0.486$--$0.517$), including $0.486$ at the principal
setting $p_{\mathrm{on}}=0.2$.

The evidence-score ROC--AUC remains essentially perfect through
$p_{\mathrm{on}}=0.7$ and then decreases to $0.926$, $0.742$, and $0.507$
at $p_{\mathrm{on}}=0.8,0.9,$ and $1.0$, respectively. At
$p_{\mathrm{on}}=1$, the mean signed aggregate increment remains positive
at $0.01583$, even though account-level attribution is at chance.

\subsection{Amazon Counterfactual Results}
\label{sec:supp-extra-amazon}
Unless otherwise stated,
Appendix~\ref{sec:supp-extra-amazon}--\ref{sec:supp-extra-complementarity}
reports the primary \texttt{Home\_and\_Kitchen} experiments;
Appendix~\ref{sec:supp-electronics} repeats the same
experimental pipeline on \texttt{Electronics}.

\paragraph{Fixed-attack reuse.}
The primary five-star intervention uses $k=6$ modified ratings per treated
block. Its mean paired Wasserstein effect is
\[
\overline d_{\mathrm{cf}}
=
0.0764,
\qquad
95\%~\mathrm{CI}
=
[0.0737,\,0.0790],
\]
with $63.8\%$ of treated blocks having $d_i^{\mathrm{cf}}>0$.

Holding the attack world and historical comparison population fixed,
Table~\ref{tab:supp-reuse-results} shows how account ROC--AUC changes as the
same manipulated exposures are concentrated onto increasingly reused
identities. Every synthetic
intervention identity has frequency exactly $r$, whereas historical
comparison frequency is determined by naturally occurring reviewer reuse
among the 24 non-donor positions of the selected treated blocks. Across the
30 seeds, the historical comparison accounts have mean frequency 1.0034.
Consequently, frequency ROC--AUC is 0.498 at $r=1$, 0.998 at $r=2$, and
essentially 1.000 for $r\geq4$.

\begin{table}[t]
\caption{Fixed-attack identity reassignment against historical non-donor
comparison accounts.}
\label{tab:supp-reuse-results}
\centering
\small
\begin{tabular}{@{}ccc@{}}
\toprule
Reuse $r$ & Evidence ROC--AUC & Frequency ROC--AUC \\
\midrule
1  & 0.500 & 0.498 \\
2  & 0.562 & 0.998 \\
4  & 0.626 & 1.000 \\
8  & 0.706 & 1.000 \\
16 & 0.797 & 1.000 \\
\bottomrule
\end{tabular}
\end{table}

This activity difference is intentional in the fixed-attack sweep: the
experiment holds the aggregate attack fixed while changing its concentration
onto reusable identities, but does not identify an activity-controlled
attribution effect. The exact matched-twin construction below provides that
control.

For every reuse value, the implementation verifies
$G(r)=r\overline d_{\mathrm{cf}}$ to numerical precision. Across
$r\in\{1,2,4,8,16\}$, the mean absolute construction error ranges from
$5.55\times10^{-18}$ to $1.04\times10^{-16}$.

\paragraph{Exact matched twins.}
Table~\ref{tab:supp-twin-results} reports the account-level results for the
exact matched-twin construction at $r=8$.

\begin{table}[t]
\caption{Exact matched-twin attribution at $r=8$.}
\label{tab:supp-twin-results}
\centering
\small
\begin{tabular}{@{}lc@{}}
\toprule
Metric & Value \\
\midrule
Frequency ROC--AUC & 0.500 \\
Counterfactual ROC--AUC & 0.744 \\
Raw-world ROC--AUC & 0.748 \\
Posterior-predictive ROC--AUC & 0.755 \\
Matched-pair misordering & 0.256 \\
Mean paired score gap & 0.6109 \\
\bottomrule
\end{tabular}
\end{table}

The counterfactual ROC--AUC is $0.744$ with 95\% CI
$[0.732,\,0.755]$, and the attacked identity scores above its clean twin in
$74.4\%$ of matched pairs. Because each pair has identical participation
count, item set, block, review position, source record, and timestamp exposure,
frequency is exactly uninformative in this condition.

\paragraph{Mean-preserving shape intervention.}
The exact shape intervention preserves the 30-review block mean while changing
six ratings through three disjoint sum-preserving pairs. At $r=8$, Table~\ref{tab:supp-shape-results} compares the resulting
account-level attribution performance across frequency, mean deviation,
Wasserstein--1, and Jensen--Shannon evidence.

\begin{table}[t]
\caption{Exact mean-preserving shape intervention at $r=8$.}
\label{tab:supp-shape-results}
\centering
\small
\begin{tabular}{@{}lcc@{}}
\toprule
Evidence channel & ROC--AUC & Misordering \\
\midrule
Frequency & 0.500 & 1.000 \\
Mean deviation & 0.500 & 1.000 \\
Wasserstein--1 & 0.909 & 0.091 \\
Jensen--Shannon & 0.967 & 0.033 \\
\bottomrule
\end{tabular}
\end{table}

Frequency is exactly matched across each attacked--clean pair. Mean-based
paired evidence is also identically zero because the intervention preserves
the block mean exactly; hence both controls have ROC--AUC $0.500$.

The 95\% confidence intervals are $[0.902,\,0.916]$ for Wasserstein--1 and
$[0.962,\,0.971]$ for Jensen--Shannon divergence. Their mean paired score
gaps are $0.2583$ and $0.1393$, respectively.

At the item level, $55.7\%$ of Wasserstein--1 paired increments and $72.4\%$
of Jensen--Shannon paired increments are positive. After reuse across eight
exposures, $90.9\%$ and $96.7\%$ of matched account pairs, respectively,
favor the attacked identity.

\subsection{Intervention-Strength Sensitivity}
\label{sec:supp-intervention-strength}

We evaluate the fixed-identity intervention-strength construction of
Appendix~\ref{sec:supp-block-selection} for $k\in\{3,6,9\}$. Because the
sampled items, treatment
blocks, item references, synthetic-account population, and account--item
incidence are fixed across $k$, the comparison isolates how the induced
contextual distortion changes as progressively more of the nested donor
positions are modified. Table~\ref{tab:supp-strength-results} reports the
resulting aggregate evidence and account-level attribution performance for
each intervention strength.

\begin{table}[t]
\caption{Sensitivity to the number $k$ of modified ratings per 30-review
block under fixed identities and account--item incidence.}
\label{tab:supp-strength-results}
\centering
\small
\begin{tabular}{@{}ccccc@{}}
\toprule
$k$
& $\overline d_{\mathrm{cf}}$
& Positive blocks
& ROC--AUC
& Misordering \\
\midrule
3 & $-0.0625$ & 0.385 & 0.182 & 0.818 \\
6 & $ 0.0006$ & 0.534 & 0.514 & 0.486 \\
9 & $ 0.1604$ & 0.677 & 0.838 & 0.162 \\
\bottomrule
\end{tabular}
\end{table}

The corresponding 95\% ROC--AUC confidence intervals are\\
$[0.173,0.191]$, $[0.503,0.525]$, and $[0.829,0.847]$ for
$k=3$, $6$, and $9$, respectively. The mean paired account-score gaps
are $-0.4999$, $0.0048$, and $1.2836$. In every condition,
$\overline G(k)=r\,\overline d_{\mathrm{cf}}(k)$ holds to numerical
precision.

\paragraph{Relation to the primary $k=6$ experiment.}
The $k=6$ point in this sensitivity experiment is not a replication of the
primary five-star experiment. The primary experiment samples from the
21,197 items for which both candidate blocks are feasible for six
replacements. The strength experiment instead restricts all three values of
$k$ to the 13,268 items for which both candidate blocks are feasible for
nine replacements. This stricter eligibility rule selects a proper subset
with a different distribution of later-block rating compositions. Table~\ref{tab:supp-k6-population} contrasts the observed primary and
strength-sensitivity $k=6$ conditions.

\begin{table}[t]
\caption{Observed results for the primary and strength-sensitivity $k=6$
conditions. The conditions use different feasibility populations and
identity constructions; the deterministic audit in this section isolates
the population effect on expected mean paired evidence.}
\label{tab:supp-k6-population}
\centering
\small
\setlength{\tabcolsep}{4pt}
\begin{tabular}{@{}lcccc@{}}
\toprule
\textbf{Condition}
& \textbf{Eligibility}
& \textbf{Eligible}
& $\overline d_{\mathrm{cf}}$
& ROC--AUC \\
\midrule
Primary
& feasible for $k=6$
& 21,197
& 0.0764
& 0.744 \\
Strength sensitivity
& feasible for $k=9$
& 13,268
& 0.0006
& 0.514 \\
\bottomrule
\end{tabular}
\end{table}

\paragraph{Deterministic feasibility-population audit.}
To isolate feasibility-population selection from differences between the
two identity constructions, we hold $k=6$ and the selected shrinkage value
$\lambda^\star=20$ fixed and change only the item-eligibility rule. For
every eligible item, the audit averages exactly over the uniform A/B
treatment-block choice and all uniform selections of six non-five-star
donor positions.

Across the full population of 21,197 items feasible at $k=6$, the exact
expected mean paired evidence is $0.07637$. Across the stricter population
of 13,268 items feasible at $k=9$, it is $0.00079$. These values closely
match the observed means $0.0764$ and $0.0006$ in the primary and
strength-sensitivity conditions, respectively. Thus restricting the
experiment to the stricter common population nearly eliminates the
expected $k=6$ aggregate effect before identity assignment.

The two actual experiments also differ in identity construction. The deterministic audit
isolates the feasibility-population effect on expected mean paired
evidence; it does not by itself isolate the corresponding change in
ROC--AUC.


\subsection{One-World Investigative Ranking Burden}
\label{sec:supp-one-world-ranking}

The matched-twin experiment isolates attribution under exact exposure control.
We additionally evaluate how the same planted identities rank in a single
observed attacked world against the heterogeneous historical account
population.

For each seed, we retain the frozen primary Amazon intervention
($k=6$, $r=8$) on its 2,000 treated items and score both experimental
30-review blocks, positions 181--210 and 211--240, for all 31,247 eligible
items.  On each treated item, the six manipulated historical review positions
are occupied by their assigned synthetic identities; all other reviews retain
their historical reviewer identities.  Thus the resulting stream contains the
same 1,500 planted identities as the primary $r=8$ experiment together with
approximately $1.68$ million historical background accounts per seed.
Historical reviewers are not assumed to be benign and are not used as labeled
negative examples.

For every account $u$, we compute participation frequency $F_u$, raw one-world
Wasserstein score
\[
S_u^{\mathrm{raw}}
=
\sum_{(i,b)\in E_u}
W_1(H_{ib}^{\mathrm{obs}},\widehat H_i),
\]
and posterior-predictive-centered score
\[
S_u^{\mathrm{pred}}
=
\sum_{(i,b)\in E_u}
\left[
W_1(H_{ib}^{\mathrm{obs}},\widehat H_i)
-
b_i^{\mathrm{pred}}(30)
\right].
\]
Here $H_{ib}^{\mathrm{obs}}$ is the histogram actually observed in the
one-world stream: attacked for the selected treatment block of a planted item
and otherwise historical.

For a planted account, its background percentile is its midrank percentile
relative to historical-background scores.  We also measure the fraction of
planted identities appearing in the top $\alpha$ fraction of the complete
ranking and the number of accounts that must be inspected to encounter a
specified fraction of the planted population.  Exact score ties are handled by
uniform expected ordering within the tied group. Table~\ref{tab:supp-one-world-ranking} summarizes the investigative ranking
burden both in the full one-world population and after exact activity matching
at $F_u=8$.

\begin{table}[t]
\caption{One-world investigative ranking burden. Historical reviewers form an
unlabeled background population rather than a negative class. Values are
means over 30 seeds.}
\label{tab:supp-one-world-ranking}
\centering
\small
\setlength{\tabcolsep}{3pt}
\begin{tabularx}{\columnwidth}{@{}Xccc@{}}
\toprule
& Frequency & Raw $W_1$ & Predictive $W_1$ \\
\midrule
\multicolumn{4}{l}{\emph{Full one-world population}} \\
Median background percentile
& 0.99991 & 0.99996 & 0.99735 \\
Planted in top 1\%
& 1.000 & 1.000 & 0.728 \\
Planted in top 5\%
& 1.000 & 1.000 & 0.922 \\
Accounts inspected for 50\%
& 907 & 808 & 5,144 \\
Inspection fraction for 50\%
& 0.00054 & 0.00048 & 0.00307 \\
\midrule
\multicolumn{4}{l}{\emph{Exact activity match: $F_u=8$}} \\
Median background percentile
& 0.500 & 0.946 & 0.926 \\
25th--75th percentile
& 0.500--0.500
& 0.805--0.979
& 0.754--0.977 \\
Above background 95th percentile
& 0.000 & 0.456 & 0.426 \\
Above background 99th percentile
& 0.000 & 0.183 & 0.163 \\
Background accounts above planted median
& 0 & 5.1 & 7.0 \\
\bottomrule
\end{tabularx}
\end{table}
As shown in Table~\ref{tab:supp-one-world-ranking}, in the full population, planted identities are highly ranked by frequency
because each synthetic identity appears on exactly eight treated items, while
such activity is rare among historical accounts.  Predictive-centered evidence
still places the median planted identity at the $99.74$th background
percentile; $72.8\%$ of planted identities appear in the top $1\%$ of the
complete ranking, and inspecting approximately 5,144 accounts, or $0.31\%$ of
the account population, encounters half of the planted identities.

To remove this activity advantage, we separately restrict the background to
historical accounts with exactly $F_u=8$.  This produces on average 94.5
activity-matched historical accounts per seed; a wider $F_u\in\{7,8,9\}$ stratum
contains 362.3. 
Within the exact $F_u=8$ comparison, frequency is tied by construction, whereas
predictive-centered $W_1$ places the median planted identity at the $92.6$th
background percentile.  Across planted identities, $42.6\%$ exceed the
activity-matched historical 95th percentile and $16.3\%$ exceed the 99th
percentile. 
The wider $F_u\in\{7,8,9\}$ analysis gives a similar median background percentile
of $94.4\%$.

Raw $W_1$ ranks planted accounts somewhat more strongly than predictive
centering in this one-world analysis. This does not alter the calibration role
of predictive centering: raw discrepancy reflects both intervention-induced
distortion and any persistent reference or temporal mismatch, whereas
predictive centering subtracts the finite-reference null expectation. We
therefore report both channels without treating the historical background as a
labeled negative class.


\subsection{Leave-One-Out Self-Influence Ablation}
\label{sec:supp-self-influence-ablation}

To measure direct self-influence empirically, we repeat the primary
$k=6$, $r=8$ matched-twin experiment after removing each scored synthetic
account's own review from both the attacked and clean versions of its paired
context. For exposure $(i,j)$, let
\[
H_{ij,-}^{\mathrm{attack}}
\qquad\text{and}\qquad
H_{ij,-}^{\mathrm{clean}}
\]
denote the corresponding 29-review histograms after deleting the exact review
position assigned to that synthetic identity. We compute
\[
d_{ij}^{\mathrm{cf,LOO}}
=
W_1(H_{ij,-}^{\mathrm{attack}},\widehat H_i)
-
W_1(H_{ij,-}^{\mathrm{clean}},\widehat H_i),
\]
using the same frozen item reference $\widehat H_i$, and accumulate these
increments over each account's eight exposures. Table~\ref{tab:supp-self-influence} compares the original matched-twin results
with the corresponding leave-one-out scores.

\begin{table}[t]
\caption{Leave-one-out self-influence ablation under the primary
$r=8$ matched-twin construction. Values are means over 30 seeds.}
\label{tab:supp-self-influence}
\centering
\small
\begin{tabular}{@{}lcc@{}}
\toprule
\textbf{Metric} & \textbf{Original} & \textbf{Leave-one-out} \\
\midrule
ROC--AUC & 0.744 & 0.779 \\
Mean paired gap & 0.611 & 0.639 \\
Misordering probability & 0.256 & 0.221 \\
\bottomrule
\end{tabular}
\end{table}

The leave-one-out mean paired gap retains $104.7\%$ of the original gap.
Original and leave-one-out account scores remain highly aligned
(Pearson $0.991$, Spearman $0.990$). Thus removing the scored account's direct
contribution does not weaken separation; in this experiment it slightly
increases it. The remaining paired attribution signal is therefore carried by
the surrounding manipulated context rather than by the scored rating alone.


\subsection{Complementarity Construction and Results}
\label{sec:supp-extra-complementarity}

All complementarity conditions use $L=2000$ items, item-side degree $m=6$,
account-side degree $r=8$, and therefore $M=Lm/r=1500$ synthetic accounts per
branch. Participation frequency is consequently identical across labels.

\paragraph{Evidence-only condition.}
The evidence-only condition uses one degree-preserving randomized incidence
with 1500 accounts, 2000 items, account degree $r=8$, and item degree $m=6$.
Positive and negative accounts use the identical incidence graph. The positive
world receives the frozen five-star intervention inherited from the primary
Amazon experiment, whereas the negative world carries no intervention
evidence. Thus the two classes have identical account--item topology and
co-activity scores, while only the aggregate-evidence channel differs.

\paragraph{Topology-only condition.}
The topology-only condition removes rating manipulation and varies only
repeated co-activity. The 1500 positive accounts are partitioned into 250 fixed
teams of six accounts. Each team is assigned together to eight distinct items,
so every account has degree $8$, all 2000 items are covered, and every pair of
accounts within a team co-occurs on eight items.

The negative incidence is obtained by degree-preserving bipartite edge swaps
on the same degree sequence. Each graph contains $Lm=12{,}000$ edges and is
randomized using exactly $20Lm=240{,}000$ successful swaps. A proposed swap is
accepted only if the resulting graph remains simple and creates no duplicate
account--item edge. The implementation permits at most 12,000,000 attempts;
failure to complete all requested swaps aborts the run. Two independently
randomized incidences are generated per seed, one used by the evidence-only
construction and one as the topology-only control.

\paragraph{Mixed population.}
The primary mixed benchmark concatenates the evidence-only and topology-only
conditions with equal weight. The positive population therefore contains 1500
evidence-only identities and 1500 topology-only identities, while the negative
population contains their corresponding 1500 evidence-only and 1500
topology-only controls. Equivalently, half of the positive mechanisms carry
aggregate intervention evidence without repeated-team structure, and half
carry repeated-team structure without a rating intervention. No fitted mixture
proportion is used. Table~\ref{tab:supp-complementarity} compares the aggregate-evidence,
co-activity, and untrained combined scores across the evidence-only,
topology-only, and mixed conditions.

\begin{table}[t]
\caption{Aggregate evidence, repeated co-activity, and their untrained
combination.}
\label{tab:supp-complementarity}
\centering
\small
\setlength{\tabcolsep}{4pt}
\begin{tabular}{@{}lccc@{}}
\toprule
Condition & Aggregate & Co-activity & Combined \\
\midrule
Evidence only & 0.748 & 0.500 & -- \\
Topology only & 0.500 & 1.000 & -- \\
Mixed & 0.624 & 0.750 & 0.874 \\
\bottomrule
\end{tabular}
\end{table}

As shown in Table~\ref{tab:supp-complementarity}, across 30 seeds the mixed-population ROC--AUC is $0.624$ ($95\%$ CI $[0.620,0.628]$) for aggregate evidence, $0.750$ ($[0.749,0.751]$) for repeated co-activity, and $0.874$ ($[0.870,0.877]$) for their untrained combination. The combined score improves by $0.123$ over the stronger individual channel.

The combined score is the unit-weight standardized sum defined in
Appendix~\ref{sec:supp-combination}. Standardization is computed over the
pooled, unlabeled account population of the primary mixed benchmark; class
labels are not used, and no classifier or combination weight is fitted.

\subsection{Cross-Category Replication on Electronics}
\label{sec:supp-electronics}

To test category dependence of the real-background results, we repeat the
same preprocessing, chronological split, calibration rule, counterfactual
interventions, identity constructions, and evaluation metrics on the
\texttt{Electronics} domain of Amazon Reviews 2023. After reviewer--item
deduplication, 21,409 items contain at least 300 usable reviews. The primary
$k=6$ intervention is feasible on 15,968 items, while the common $k=9$
strength-sensitivity population contains 10,771 items. Applying the same
calibration rule independently to \texttt{Electronics} again selects
$\lambda^\star=20$.

Table~\ref{tab:supp-electronics} compares the principal results.
The second category reproduces the same qualitative conclusions despite
different item and rating distributions. Exact matched twins retain
substantial evidence-based separation with frequency fixed at chance;
distribution-sensitive evidence remains informative under exact mean
preservation; combining aggregate evidence with repeated co-activity again
improves over either source alone; shorter reference histories degrade
attribution gradually; and intervention strength again changes the sign and
magnitude of contextual paired evidence.

Figure~\ref{fig:supp-electronics} shows that the qualitative structure of
the primary results persists in \texttt{Electronics}. The reuse trend, exact-matching
result, distribution-sensitive shape separation, complementarity gain,
reference-history degradation, and intervention-strength transition all
persist in \texttt{Electronics}. Absolute values differ across categories, as expected because account
recoverability follows the contextual evidence induced relative to each
category's empirical item references rather than the nominal intervention
alone. The replication
therefore reduces dependence on the particular composition of the primary
domain, while not constituting cross-platform validation because both
categories share the same Amazon review interface.

\begin{figure*}[t]
    \centering

    \begin{subfigure}[t]{0.32\textwidth}
        \centering
        \includegraphics[width=\linewidth]
        {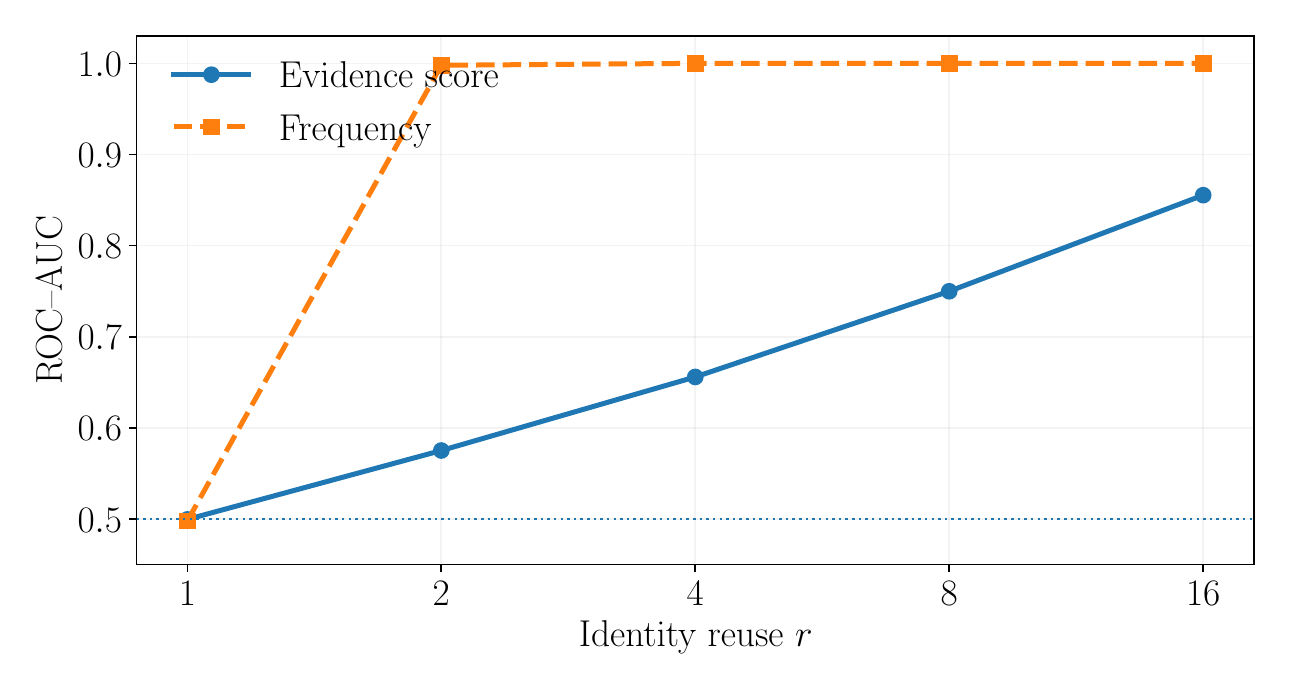}
        \caption{Fixed-attack reuse.}
        \label{fig:supp-electronics-reuse}
    \end{subfigure}
    \hfill
    \begin{subfigure}[t]{0.32\textwidth}
        \centering
        \includegraphics[width=\linewidth]
        {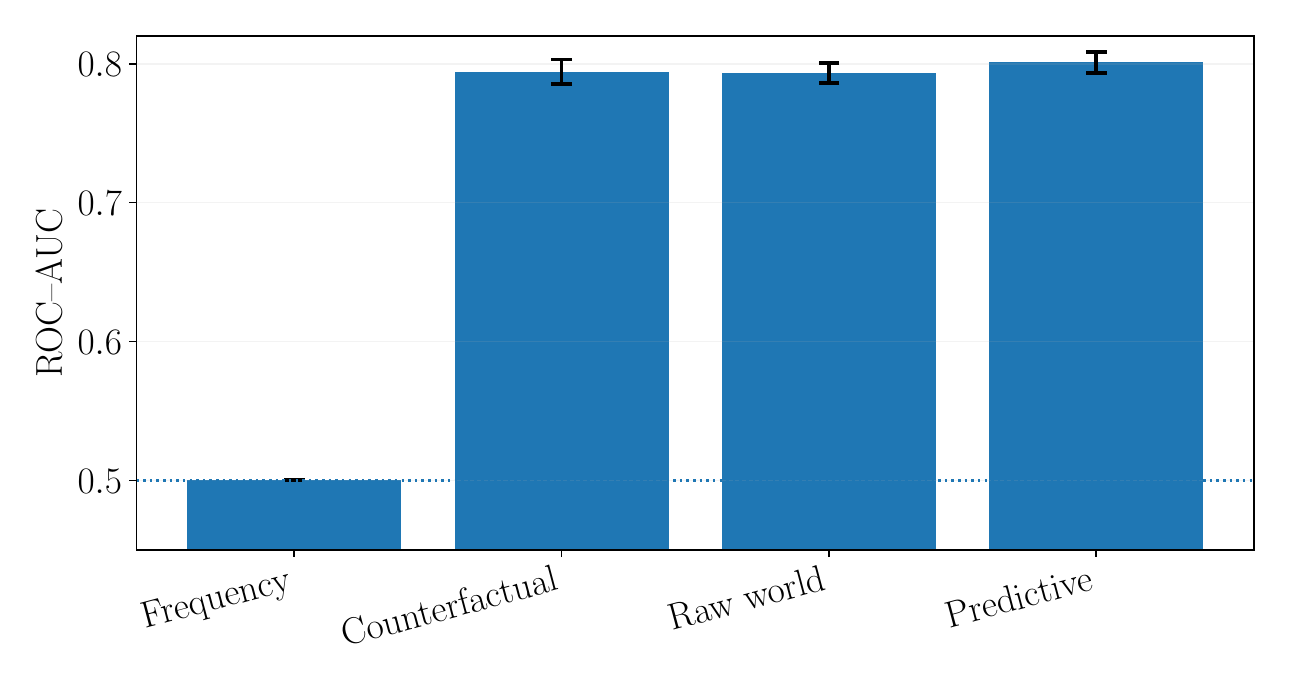}
        \caption{Exact matched twins.}
        \label{fig:supp-electronics-matched}
    \end{subfigure}
    \hfill
    \begin{subfigure}[t]{0.32\textwidth}
        \centering
        \includegraphics[width=\linewidth]
        {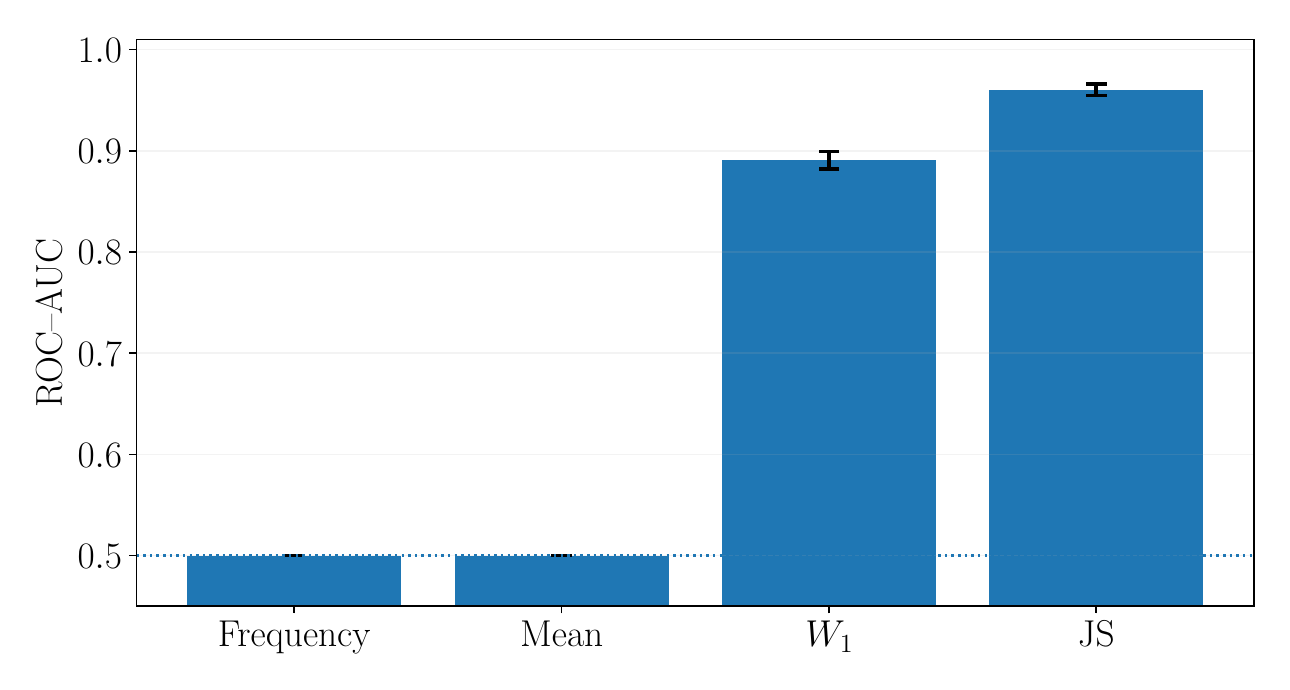}
        \caption{Mean-preserving shape.}
        \label{fig:supp-electronics-shape}
    \end{subfigure}

    \medskip

    \begin{subfigure}[t]{0.32\textwidth}
        \centering
        \includegraphics[width=\linewidth]
        {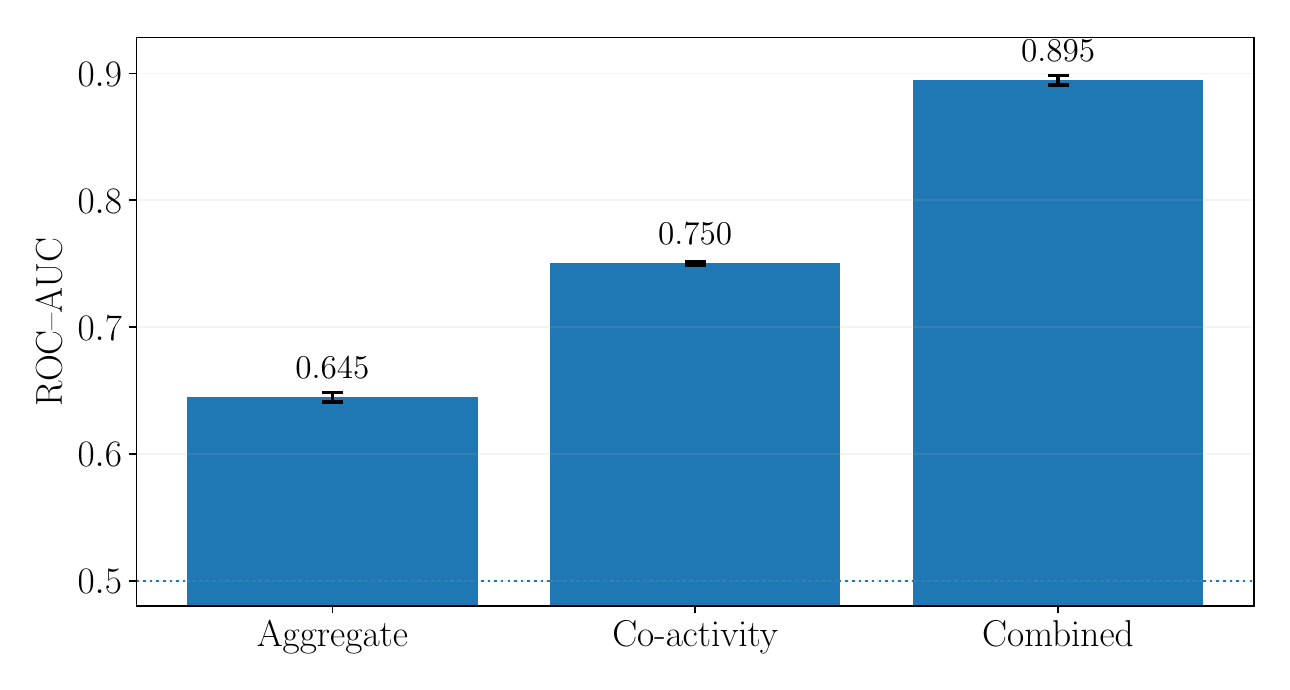}
        \caption{Complementarity.}
        \label{fig:supp-electronics-complementarity}
    \end{subfigure}
    \hfill
    \begin{subfigure}[t]{0.32\textwidth}
        \centering
        \includegraphics[width=\linewidth]
        {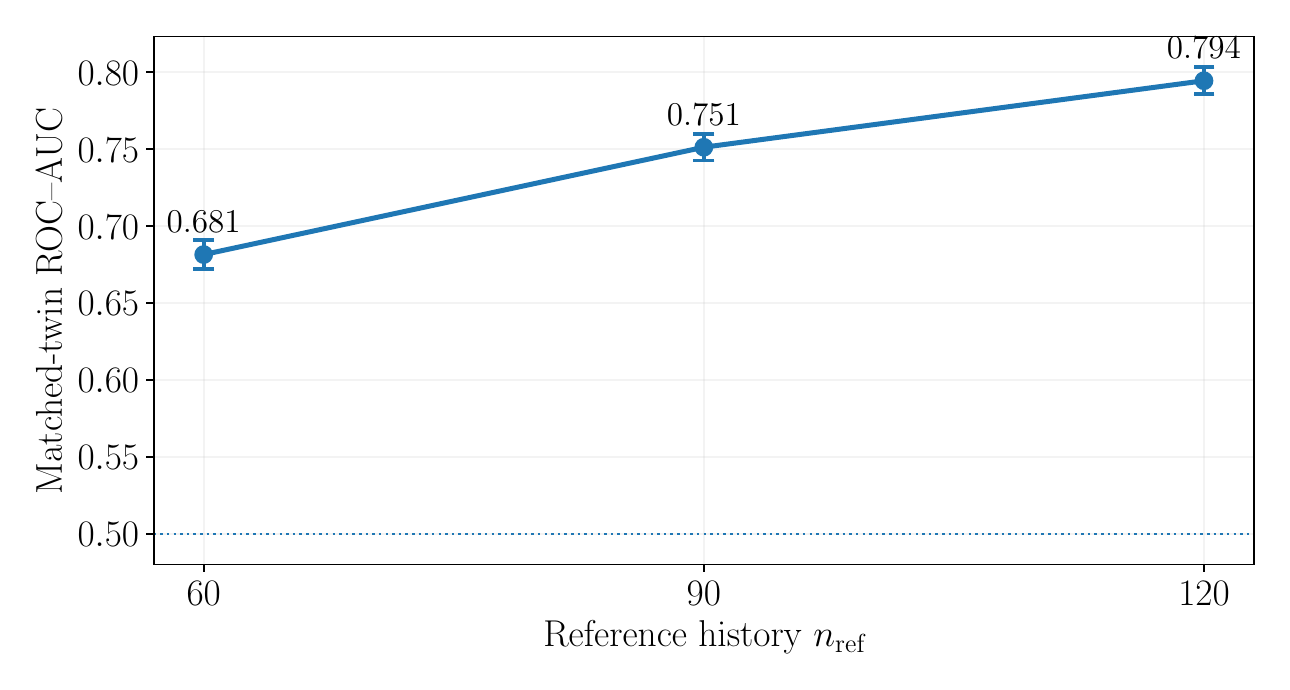}
        \caption{Reference-history sensitivity.}
        \label{fig:supp-electronics-reference}
    \end{subfigure}
    \hfill
    \begin{subfigure}[t]{0.32\textwidth}
        \centering
        \includegraphics[width=\linewidth]
        {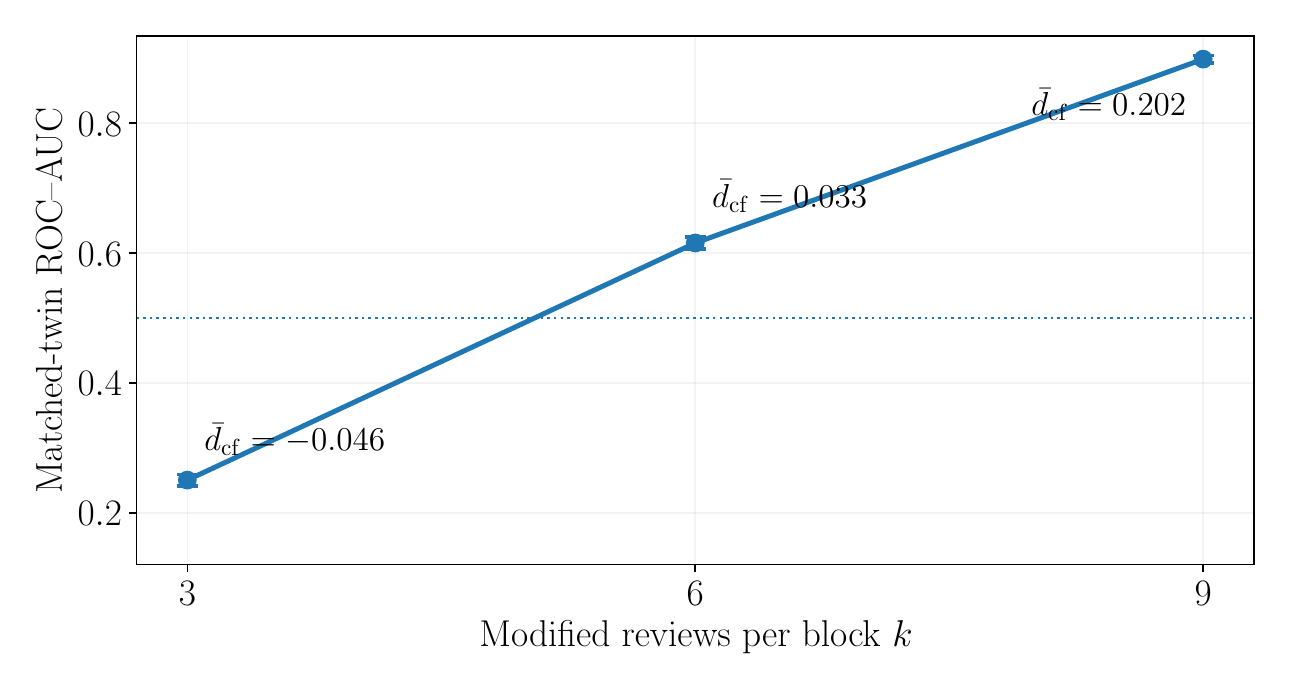}
        \caption{Intervention-strength sensitivity.}
        \label{fig:supp-electronics-strength}
    \end{subfigure}

    \caption{
    Cross-category replication on the \texttt{Electronics} domain of
    Amazon Reviews 2023. The experimental protocol is unchanged from the
    primary \texttt{Home\_and\_Kitchen} evaluation.
    (a) Evidence-score separation increases as a fixed intervention is
    concentrated through greater identity reuse.
    (b) Under exact activity and exposure matching, frequency is at chance
    while evidence-based attribution remains informative.
    (c) Under exact mean preservation, frequency and mean-based attribution
    are at chance whereas distribution-sensitive evidence retains strong
    separation.
    (d) Aggregate evidence and repeated co-activity remain complementary.
    (e) Attribution degrades gradually with shorter item-reference histories.
    (f) The sign and recoverability of a nominal five-star intervention depend
    on its contextual distortion strength.
    Error bars show 95\% confidence intervals across 30 randomized seeds.
    }
    \label{fig:supp-electronics}
\end{figure*}

\begin{table}[t]
\centering
\caption{Cross-category replication on Amazon Reviews.
The \texttt{Home\_and\_Kitchen} domain is primary; \texttt{Electronics}
is a second-category robustness check using the same experimental protocol.
Performance quantities are means over 30 randomized seeds; dataset counts
and selected $\lambda$ are determined by preprocessing and calibration,
respectively.}
\label{tab:supp-electronics}
\small
\begin{tabular}{lcc}
\toprule
Quantity & Home \& Kitchen & Electronics \\
\midrule
Eligible items ($\ge 300$ reviews)
    & 31,247 & 21,409 \\
Primary $k=6$-feasible items
    & 21,197 & 15,968 \\
Common $k=9$-feasible items
    & 13,268 & 10,771 \\
Selected $\lambda$
    & 20 & 20 \\
Matched-twin ROC--AUC
    & 0.744 & 0.794 \\
Shape $W_1$ ROC--AUC
    & 0.909 & 0.891 \\
Shape JS ROC--AUC
    & 0.967 & 0.960 \\
Combined complementarity ROC--AUC
    & 0.874 & 0.895 \\
Reference ROC--AUC ($60/90/120$)
    & .618/.688/.744 & .681/.751/.794 \\
Strength $\bar d_{\mathrm{cf}}$ ($k=3/6/9$)
    & -.062/.001/.160 & -.046/.033/.202 \\
\bottomrule
\end{tabular}
\end{table}

\section{Reproducibility and Experimental Invariants}
\label{sec:supp-reproducibility}

\subsection{Implementation Environment}
\label{sec:supp-environment}

Experiments are implemented in Python~3.12.10. The frozen scientific
environment uses NumPy~2.5.1, SciPy~1.18.0, scikit-learn~1.9.0, and
PyYAML~6.0.3. The artifact includes an environment checker and pinned
scientific requirements to enforce this software environment.

The original controlled runs were executed on Windows~11, and the
reproduction pipeline has been tested on both Windows~11 and
Ubuntu~24.04.4~LTS under the same pinned scientific environment. The reported experiments are CPU-based
and do not require a GPU. We do not report hardware-specific runtime or
peak-memory values because these quantities are not part of the experimental
claims and are not required for numerical reproduction.


\subsection{Random-Seed Policy}
\label{sec:supp-seeds}

Within each Amazon category, all randomized experiments use the same
30 primary seed indices $\{0,1,\ldots,29\}$.
The controlled experiments use 30 primary seeds $\{27001,27002,\ldots,27030\}$,
with experiment-specific random streams derived deterministically from the primary
seed and the experiment name.

Randomness enters only through operations that define the experimental world:
sampling eligible items, random assignment of treatment to experimental block
A or B, randomized donor ordering or donor selection, construction of regular
account--item incidence, and degree-preserving swaps in the complementarity
experiment. Deterministic scoring and attribution computations introduce no
additional randomness.

For paired comparisons, the random draws defining the common experimental world
are generated once and then reused across the conditions being compared. In
particular:
\begin{itemize}
    \item the fixed-attack reuse sweep shares the sampled items, treatment
          blocks, donor positions, original and replacement ratings, timestamps,
          and resulting attack-world evidence across all $r$;
    \item the reference-history experiment shares the frozen attack world and
          the same $r=8$ identity assignment across all
          $n_{\mathrm{ref}}\in\{60,90,120\}$;
    \item the intervention-strength experiment first selects nine non-five-star
          donor positions per item in a single randomized order and uses the
          nested prefixes
          $J_3\subset J_6\subset J_9$. It also shares the sampled items,
          treatment blocks, reference model, synthetic-account population, and
          exact account--item incidence across $k\in\{3,6,9\}$; and
    \item exact matched twins share the same item, treatment block, review
          position, source record, timestamp exposure, and participation
          frequency, differing only in whether the common exposure receives
          attacked or clean aggregate evidence.
\end{itemize}

Across all 30 Amazon seeds, attack-world hashes are distinct across seeds.
Within a seed, the fixed-attack reuse sweep uses an identical attack manifest
for every reuse value; the matched-twin and reference-history experiments
inherit the same frozen $r=8$ construction; and the intervention-strength
experiment uses identical sampled items and treatment blocks with exactly
nested donor sets $J_3\subset J_6\subset J_9$. These invariants are checked automatically by the reproduction artifact
and passed for all 30 randomized seeds in both Amazon categories.


\subsection{Manifest Schema}
\label{sec:supp-manifest}

The artifact uses experiment-specific manifests rather than a single combined
assignment file. This separation preserves the distinction between aggregate
attack construction and subsequent identity attribution.

For each attacked item, the frozen attack-world record contains the fields
needed to reconstruct the aggregate counterfactual:
\begin{verbatim}
asin
treatment_block
treated_positions
treated_source_lines
original_ratings
replacement_ratings
clean_counts
attack_counts
clean_w1
attack_w1
d_cf
\end{verbatim}
Here, \texttt{treated\_positions} and \texttt{treated\_source\_lines} identify
the modified source records unambiguously, while \texttt{clean\_counts} and
\texttt{attack\_counts} record the corresponding aggregate rating histograms.
The paired increment \texttt{d\_cf} is the resulting counterfactual evidence
difference.

Identity assignment is stored separately for each reuse condition. Each
assignment row contains
\begin{verbatim}
asin
treatment_block
treated_position
treated_source_line
synthetic_account_id
d_cf
\end{verbatim}
so that every manipulated exposure can be linked back to the exact frozen
attack slot without reconstructing the attack. The reuse value $r$ is encoded
by the corresponding assignment artifact and experiment configuration rather
than repeated in every row.

The historical comparison identities used in the fixed-attack reuse sweep
are reconstructed deterministically rather than stored in this synthetic
assignment file. For each selected treated block, the six source records
listed in
\texttt{treated\_source\_lines}
are removed, and the reviewer identifiers of the remaining 24 records are
retained. Because the frozen attack world is shared across all reuse values,
this historical comparison population is identical across $r$ within a seed.

Reference estimation is likewise stored separately: the frozen reference
artifact records the selected shrinkage parameter and reference configuration.
Exact matched twins are constructed from the same frozen exposure records, so a
separate synthetic clean-twin identifier is not required in the attack
manifest.  Hashes of the attack world and identity assignments are recorded and
verified at downstream stage boundaries to ensure that later experiments use
the intended frozen construction unchanged.


\subsection{Automated Experimental Invariants}
\label{sec:supp-invariants}

Before evaluation, the implementation checks the structural invariants required
by each experiment. These checks include:
\begin{itemize}
    \item reference, calibration, experimental, and holdout ranges follow their
          predeclared chronological roles, with calibration data used for
          selecting $\lambda$ and holdout data reserved for temporal-drift
          evaluation;
    \item no synthetic identity is assigned more than once to the same item,
          and every exact-regular incidence construction satisfies its requested
          item degree and account degree;
    \item the fixed-attack reuse sweep reuses the identical frozen attack world
          across all $r$, including the sampled items, treatment blocks, donor
          source records, original and replacement ratings, attacked
          histograms, and $d_{it}^{\mathrm{cf}}$;
    \item every synthetic identity in a reuse-$r$ condition has exactly the
prescribed number of exposures, every treated block contributes exactly
24 historical non-donor comparison exposures, and the same historical
comparison population is used for every $r$ within a seed; the total
synthetic attributed score obeys the corresponding reuse-law conservation
identity to numerical precision;
    \item every exact matched twin has identical participation count, item set,
          treatment block, review position, source record, and timestamp
          exposure, so that the pair differs only in attacked versus clean
          aggregate evidence;
    \item every mean-preserving shape intervention modifies exactly the
          prescribed number of ratings while preserving the 30-review block
          rating sum, and therefore its mean, exactly;
    \item reference-history experiments inherit the same frozen attack world
          and $r=8$ identity assignment for all
          $n_{\mathrm{ref}}\in\{60,90,120\}$;
    \item intervention-strength experiments use the same sampled items,
          treatment blocks, synthetic-account population, and account--item
          incidence across $k\in\{3,6,9\}$, with nested donor sets
          $J_3\subset J_6\subset J_9$;
    \item the complementarity construction preserves the prescribed account and
          item degrees under degree-preserving randomization, while the
          evidence-only and topology-only controls isolate the intended
          information channels;
    \item historical accounts used as the comparison class in the fixed-attack
reuse sweep are never interpreted as verified benign accounts; in the
full-background ranking experiment, historical Amazon reviewers remain an
unlabeled background population and are not used as a negative class; and
    \item hashes of frozen attack worlds and identity assignments match at
          downstream stage boundaries whenever an experiment inherits those
          objects from an earlier stage.
\end{itemize}

These conditions are implemented as executable assertions and verifier checks,
rather than post hoc diagnostics. A violated invariant raises an error or causes
the corresponding verifier to return a nonzero exit status; the experiment is
therefore not accepted as successfully reproduced unless all required checks
pass.


\subsection{Artifact Availability}
\label{sec:supp-artifact}

The code, frozen experiment configurations, verification scripts,
figure renderers, and table generators are available in the public
reproducibility artifact at
\url{https://github.com/qianguoguoguo/aggregate-reuse}.
The artifact supports reproduction and verification of the controlled
experiments, the primary \texttt{Home\_and\_Kitchen} evaluation, the
\texttt{Electronics} replication, Figures~1--3, and all appendix
figures and tables. The README provides the complete
experiment-by-experiment reproduction commands, required software
environment, data-download and preprocessing instructions, and
expected verification outputs.

The Amazon experiments use the public Amazon Reviews 2023
\texttt{Home\_and\_Kitchen} and \texttt{Electronics} review data released
by the McAuley Lab, with \texttt{Home\_and\_Kitchen} designated as the
primary domain.
The raw dataset is not redistributed with the artifact; the README provides
the official download location and required file layout. The preprocessing
pipeline deterministically performs rating filtering, earliest-review
user--item deduplication, chronological ordering, first-300 selection, fixed
role assignment, and construction of the shared inputs used by all downstream
Amazon experiments.

The artifact does not include the raw Amazon data or the large
generated Amazon intermediate files. Historical reviewer identifiers are required
internally to reconstruct participation and account reuse from the public
dataset, but no such identifiers are reported in the paper's figures or
aggregate result tables.

\end{document}